\newlength{\smallfigwidth}
\newlength{\smallfigheight}
\newlength{\smallfigsep}
\newlength{\legendheight}
\setlist{nolistsep}
\setlist[itemize]{leftmargin=*}
\setlist[enumerate]{leftmargin=*}
\newcommand\primitiveinput[1]{\@@input #1 }
\newtheorem{definition}{Definition}
\newtheorem{lemma}[definition]{Lemma}
\newtheorem{proposition}[definition]{Proposition}
\newtheorem{corollary}[definition]{Corollary}
\newtheorem{theorem}[definition]{Theorem}
\newcommand{\abs}[1]{\left|#1\right|}
\renewcommand{\lg}{\log}
\renewcommand{\varepsilon}{\epsilon}
\newcommand{\poly}{\operatorname{poly}}
\newcommand{\diag}{\operatorname{diag}}
\newcommand{\Exp}[1]{\mathbf{E}\left[ #1 \right]}
\newcommand{\Var}[1]{\mathbf{Var}\left[ #1 \right]}
\newcommand{\Prob}[1]{\mathbf{Pr}\left( #1 \right)}
\newcommand{\Diag}[1]{\mathrm{diag}\left( #1 \right)}
\DeclareRobustCommand{\ALG}{%
	\ifmmode
		\operatorname{ON}
	\else
		\text{ON}\xspace
	\fi
}
\DeclareRobustCommand{\OFF}{%
	\ifmmode
		\operatorname{OFF}
	\else
		\text{OFF}\xspace
	\fi
}
\DeclareRobustCommand{\APPROXALGO}{%
	\ifmmode
		\operatorname{APPROX}
	\else
		\text{APPROX}\xspace
	\fi
}
\providecommand{\abs}[1]{\ensuremath{\left\lvert#1\right\rvert}}
\providecommand{\norm}[1]{\ensuremath{\lVert#1\rVert}}
\newcommand{\sn}[1]{\marginnote{\tiny }}
\newcommand{\pp}[1]{\marginnote{\tiny }}
\newcommand{\yd}[1]{\marginnote{\tiny }}
\newcommand{\pr}{\operatorname{pr}}
\newcommand{\GooglePlus}{\textsf{GooglePlus}\xspace}
\newcommand{\TwitterFollows}{\textsf{TwitterFollows}\xspace}
\newcommand{\Flickr}{\textsf{Flickr}\xspace}
\newcommand{\YouTube}{\textsf{YouTube}\xspace}
\newcommand{\Pokec}{\textsf{Pokec}\xspace}
\newcommand{\Flixster}{\textsf{Flixster}\xspace}
\newcommand{\LiveJournal}{\textsf{LiveJournal}\xspace}
\newcommand{\SumOpinions}{\mathcal{S}}
\newcommand{\Polarization}{\mathcal{P}}
\newcommand{\avgExpressed}{\bar{z}}
\newcommand{\Disagreement}{\mathcal{D}}
\newcommand{\Internal}{\mathcal{I}}
\newcommand{\Controversy}{\mathcal{C}}
\newcommand{\DisCon}{\mathcal{DC}}
  \providecommand\BibTeX{{%
    \normalfont B\kern-0.5em{\scshape i\kern-0.25em b}\kern-0.8em\TeX}}}
\begin{document}

\title[Sublinear-Time Opinion Estimation in the Friedkin--Johnsen Model]
	{Sublinear-Time Opinion Estimation\\ in the Friedkin--Johnsen Model}

\author{Stefan Neumann}\authornote{This work was done while the author was at KTH Royal Institute of Technology.}
\affiliation{%
  \institution{TU Wien}
  \city{Vienna}
  \country{Austria}}
\email{stefan.neumann@tuwien.ac.at}

\author{Yinhao Dong}
\affiliation{%
  \institution{School of Computer Science and Technology, University of Science and Technology of China}
   \city{Hefei}
       \country{China}
  }
\email{yhdong@mail.ustc.edu.cn}

\author{Pan Peng}
\authornote{Corresponding Author.}
\affiliation{%
  \institution{School of Computer Science and Technology, University of Science and Technology of China}
  \city{Hefei}
   \country{China}
  }
\email{ppeng@ustc.edu.cn}

\renewcommand{\shortauthors}{Stefan Neumann, Yinhao Dong, Pan Peng}

\begin{abstract}
  Online social networks are ubiquitous parts of modern societies and the
  discussions that take place in these networks impact people's opinions on
  diverse topics, such as politics or vaccination. One of the most popular
  models to formally describe this opinion formation process is the
  Friedkin--Johnsen (FJ) model, which allows to define measures, such as the
  polarization and the disagreement of a network. Recently, Xu, Bao and
  Zhang~(WebConf'21) showed that all opinions and relevant measures in the FJ
  model can be approximated in near-linear time. However, their algorithm
  requires the \emph{entire} network and the opinions of \emph{all} nodes as
  input. Given the sheer size of online social networks and increasing
  data-access limitations, obtaining the entirety of this data might, however,
  be unrealistic in practice. In this paper, we show that node opinions and all
  relevant measures, like polarization and disagreement, can be efficiently
  approximated in time that is \emph{sublinear} in the size of the network.
  Particularly, our algorithms only require query-access to the network and do
  not have to preprocess the graph. Furthermore, we use a connection between FJ
  opinion dynamics and personalized PageRank, and show that in $d$-regular
  graphs, we can deterministically approximate each node's opinion by only
  looking at a constant-size neighborhood, independently of the network size. We
  also experimentally validate that our estimation algorithms perform well in
  practice.  
\end{abstract}

\begin{CCSXML}
<ccs2012>
   <concept>
       <concept_id>10002951.10003260.10003282.10003292</concept_id>
       <concept_desc>Information systems~Social networks</concept_desc>
       <concept_significance>500</concept_significance>
       </concept>
   <concept>
       <concept_id>10003752.10003809.10003635</concept_id>
       <concept_desc>Theory of computation~Graph algorithms analysis</concept_desc>
       <concept_significance>300</concept_significance>
       </concept>
   <concept>
       <concept_id>10002951.10003227.10003351</concept_id>
       <concept_desc>Information systems~Data mining</concept_desc>
       <concept_significance>300</concept_significance>
       </concept>
 </ccs2012>
\end{CCSXML}
\ccsdesc[500]{Information systems~Social networks}
\ccsdesc[300]{Theory of computation~Graph algorithms analysis}
\ccsdesc[300]{Information systems~Data mining}

\keywords{Opinion formation, Friedkin--Johnsen model, sublinear time algorithms,
	social-network analysis}

\maketitle

\section{Introduction}
Online social networks are used by billions of people on a daily basis and they
are central to today's societies. However, recently they have come under
scrutiny for allegedly creating echo chambers or filter bubbles, and for
increasing the polarization in societies. Research on the (non-)existence of
such phenomena is a highly active topic and typically studied
empirically~\cite{garimella2017reducing,garimella2018political,pariser2011filter}.

Besides the empirical work, in recent years it has become popular to study such
questions also
theoretically~\cite{bhalla2021local,chitra2020analyzing,zhu2021minimizing,matakos2017measuring,musco2018minimizing}.
These works typically rely on opinion formation models from sociology, which
provide an abstraction of how people form their opinions, based on their inner
beliefs and peer pressure from their neighbors. A popular model in this line of
work is the Friedkin--Johnsen~(FJ) model~\cite{friedkin1990social}, which
stipulates that every node has an \emph{innate opinion}, which is fixed and kept
private, and an \emph{expressed opinion}, which is updated over time and publicly
known in the network.

More formally, the \emph{Friedkin--Johnsen opinion dynamics}~\cite{friedkin1990social} are as follows.
Let $G=(V,E,w)$ be an undirected, weighted graph.
Each node~$u$ has an \emph{innate opinion}
$s_u\in[0,1]$ and an \emph{expressed opinion} $z_u\in[0,1]$. While the innate
opinions are fixed, %
the expressed opinions are updated over time~$t$
based on the update rule
\begin{align}
\label{eq:update-rule}
	z_u^{(t+1)}
	= \frac{s_u + \sum_{(u,v)\in E} w_{uv} z_u^{(t)}}{1 + \sum_{(u,v)\in E} w_{uv}},
\end{align}
i.e., the expressed opinions are weighted averages over a node's innate
opinion and the expressed opinions of the node's neighbors.
It is known that in the limit, for $t\to\infty$, the equilibrium expressed
opinions converge to $z^*=(I+L)^{-1}s$. Here, $I$ is the identity matrix and
$L=D-A$ denotes 
the Laplacian of~$G$, where $D$ is the weighted degree matrix and $A$ is the
weighted adjacency matrix.

\begin{table*}[t!]
\caption{Definition of all measures and the running times of our algorithms for
	an $n$-node weighted graph $G=(V,E,w)$ and error parameters
	$\varepsilon,\delta>0$. Let $\bar{\kappa}$ be an upper bound on
	$\kappa(\tilde{S})$, where $\tilde{S} = (I+D)^{-1/2} (I+L) (I+D)^{-1/2}$, and
	$r=O(\bar{\kappa} \log(\varepsilon^{-1} n\bar{\kappa} (\max_u w_u) ))$.
    Let $\bar{d}$ denote the average (unweighted) degree in $G$.
	Our algorithms succeed with probability $1-\delta$.}
\label{tab:measures}
\centering
\begin{adjustbox}{max width=\textwidth}
\begin{tabular}{ccccc}
\toprule
\multicolumn{2}{c}{\textbf{Measures}}  & \multicolumn{3}{c}{\textbf{Running Times}} \\
Name & Definition & Error & Given Oracle for $s_u$ & Given Oracle for $z_u^*$ \\
\cmidrule(lr){1-2}
\cmidrule(lr){3-5}
\emph{Innate opinion} & $s_u$ & $\pm \varepsilon$ & $O(1)$ & $O(\min\{d_u,w_u^2\varepsilon^{-2}\lg\delta^{-1}\})$ \\
\emph{Expressed opinion} & $z_u^*$ & $\pm \varepsilon$ & $O(\varepsilon^{-2} r^3 \lg r)$ & $O(1)$ \\
\emph{Average expressed opinion}
	& $\avgExpressed = \frac{1}{n} \sum_{u\in V} z_u^*$
	& $\pm \varepsilon$
	& $O(\varepsilon^{-2}\lg\delta^{-1})$
	& $O(\varepsilon^{-2}\lg\delta^{-1})$ \\
\emph{Sum of user opinions}
	& $\SumOpinions=\sum_{u\in V} z_u^*$
	& $\pm \varepsilon n$
	& $O(\varepsilon^{-2}\lg\delta^{-1})$
	& $O(\varepsilon^{-2}\lg\delta^{-1})$ \\
\emph{Polarization~\cite{musco2018minimizing}} &
	$\Polarization = \sum_{u\in V} (z_u^* - \bar{z})^2$
	& $\pm \varepsilon n$
	& $O(\varepsilon^{-4}  r^3 \log \delta^{-1} \lg r)$
	& $O(\varepsilon^{-2}\lg\delta^{-1})$ \\
\emph{Disagreement~\cite{musco2018minimizing}} &
	$\Disagreement=\sum_{(u,v)\in E} w_{uv} (z_u^*-z_v^*)^2$
	& $\pm \varepsilon n$
	& $O(\varepsilon^{-4}  r^3 \log \delta^{-1} \lg r)$
	& $O(\varepsilon^{-2} \bar{d} \lg^2 \delta^{-1})$ \\
\emph{Internal conflict~\cite{chen2018quantifying}} &
	$\Internal=\sum_{u\in V} (s_u - z_u^*)^2$
	& $\pm \varepsilon n$
	& $O(\varepsilon^{-4}  r^3 \log \delta^{-1} \lg r)$
	& $O(\varepsilon^{-2} \bar{d} \lg^2 \delta^{-1})$ \\
\emph{Controversy~\cite{chen2018quantifying, matakos2017measuring}} &
	$\Controversy=\sum_{u\in V} (z_u^*)^2$
	& $\pm \varepsilon n$
	& $O(\varepsilon^{-4}  r^3 \log \delta^{-1} \lg r)$
	& $O(\varepsilon^{-2}\lg\delta^{-1})$ \\
\emph{Disagreement-controversy~\cite{xu2021fast, musco2018minimizing}} &
	$\DisCon=\Disagreement+\Controversy$
	& $\pm \varepsilon n$
	& $O(\varepsilon^{-4}  r^3 \log \delta^{-1} \lg r)$
	& $O(\varepsilon^{-2} \bar{d} \lg^2 \delta^{-1})$ \\
\emph{Squared Norm $s$} &
	$\norm{s}_2^2$
	& $\pm \varepsilon n$
	& $O(\varepsilon^{-2}\lg\delta^{-1})$ 
	& $O(\varepsilon^{-2} \bar{d} \lg^2 \delta^{-1})$ \\
\bottomrule
\end{tabular}
\end{adjustbox}
\end{table*}

It is known that after convergence, the expressed opinions~$z^*$ do not reach a
consensus in general. This allows to study the distributions of opinions and to
define measures, such as \emph{polarization} and \emph{disagreement}.  More
concretely, the \emph{polarization}~$\Polarization$ is given by the variance of
the expressed opinions, i.e.,
$\Polarization = \sum_{u\in V} (z_u^* - \bar{z})^2$,
where $\bar{z}=\frac{1}{n} \sum_u z_u^*$ is the
average expressed opinion.  The \emph{disagreement}~$\Disagreement$
measures the stress among neighboring nodes in network, i.e.,
$\Disagreement=\sum_{(u,v)\in E} w_{uv} (z_u^*-z_v^*)^2$.  Similarly, we list
and formally define several other measures that are frequently used in the
literature in Table~\ref{tab:measures}; these are also the quantities for which
we provide efficient estimators in this paper.

In recent years a lot of attention has been devoted to studying how these
measures behave when the FJ model is undergoing interventions, such as changes
to the graph structure based on abstractions of timeline
algorithms~\cite{bhalla2021local,chitra2020analyzing} or adversarial
interference with node
opinions~\cite{gaitonde2020adversarial,chen2021adversarial,tu2023adversaries}.
The goal of these studies is to understand how much the disagreement and the
polarization increase or decrease after changes to the model parameters.

To conduct these studies, it is
necessary to simulate the FJ model highly efficiently. When done
na\"ively, this takes cubic time and is infeasible for large networks.
Therefore, Xu, Bao and Zhang~\cite{xu2021fast} provided a near-linear time
algorithm which approximates all relevant quantities up to a small additive
error and showed that their algorithm works very well in practice.  
Their algorithm requires the \emph{entire} network and the opinions of
\emph{all} nodes as input. However, given the sheer size of online social
networks and increasing data-access limitations, obtaining all of this data
might be unrealistic in practice and can be viewed as a drawback of this
algorithm.

\textbf{Our contributions.}
In this paper, we raise the question whether relevant quantities of the FJ
model, such as node opinions, polarization and disagreement can be approximated,
even if we do not know the entire network and even if we only know a small
number of node opinions. We answer this question affirmatively by providing
\emph{sublinear-time algorithms}, which only require query access to the graph
and the node opinions.

Specifically, we assume that we have \emph{query access to the graph}, which
allows us to perform the following operations in time~$O(1)$:
\begin{itemize} 
\item sample  a node from the graph uniformly at random,
\item given a node~$u$, return its weighted degree~$w_u$ and unweighted degree $d_u$,
\item given a node~$u$, randomly sample a neighbor~$v$ of~$u$ with probability $w_{uv}/w_u$ or $1/d_u$.%
\end{itemize}
Furthermore, we assume that we have access to an \emph{opinion oracle}. We
consider two types of these oracles. Given a node~$u$, the first type returns
its innate opinion~$s_u$ in time~$O(1)$, and the second type returns $u$'s equilibrium
expressed opinion~$z_u^*$ in time~$O(1)$. 

Under these assumptions, we summarize our results in Table~\ref{tab:measures}. %
In the table, $\varepsilon,\delta>0$ are parameters in the sense that the approximation ratio of our algorithm depends on
$\varepsilon$ and our algorithms succeed with probability $1-\delta$.
Note for both oracle types, we can approximate the average opinion in the
network in time $O(\varepsilon^{-2}\log \delta^{-1})$.

\emph{Given oracle access to the expressed equilibrium opinions~$z_u^*$}, we can
estimate the polarization within the same time.  We can also estimate the
disagreement in time $O(\varepsilon^{-2} \bar{d} \lg^2 \delta^{-1})$, where
$\bar{d} = \frac{2\abs{E}}{\abs{V}}$ %
is the average (unweighted) degree in $G$. Observe
that since most real-world networks are very sparse, $\bar{d}$ is small in
practice and thus these quantities can be computed highly efficiently. Note that
these results also imply upper bounds on how many node opinions one needs to
know, in order to approximate these quantities.

\emph{Given oracle access to the innate opinions $s_u$}, we can estimate the
polarization and disagreement in time
$\poly(\varepsilon^{-1},\delta^{-1},\log(n),\kappa(\tilde{S}))$, where
$\kappa(\tilde{S})$ is the condition number of
$\tilde{S} = (I+D)^{-1/2} (I+L) (I+D)^{-1/2}$. Note that if $\kappa(\tilde{S})$
is small, this is sublinear
in the graph size; indeed, below we show that on our real-world datasets we have
that $\kappa(\tilde{S}) \leq 300$.  In our experiments, we also show that this
algorithm is efficient in practice.

In conclusion, our results show that \emph{even when knowing only a sublinear
number of opinions in the network, we can approximate all measures from
Table~\ref{tab:measures}}.

Our two main technical contributions are as follows.
(1)~We show that in $d$-regular graphs, every node's expressed opinion~$z_u^*$
is determined (up to small error) by a small neighborhood
\emph{whose size is independent of the graph size}.
We obtain this result by exploiting a
connection~\cite{friedkin2014two,proskurnikov2016pagerank} between FJ opinion
dynamics and \emph{personalized PageRank}.
This connection allows us to give new algorithms for approximating the node opinions
efficiently when given oracle access to $s_u$.
(2)~We show that given oracle access to $z_u^*$, we can approximate $s_u$
up to error~$\pm\varepsilon$ or within a factor of $1\pm\varepsilon$ under mild conditions.
To obtain this result, we generalize a recent technique
by Beretta and Tetek~\cite{BerettaT22} for estimating weighted sums. That is, we
first sample a set of neighbors of $u$ such that each neighbor $v$ is sampled
independently with probability $w_{uv}/w_u$, and then we use the number of collisions for each neighbor to define an estimator that takes into account the expressed opinions. %

We experimentally evaluate our algorithms on real-world datasets. We show that
expressed and innate opinions can be approximated up to small additive error
$\pm0.01$. Additionally, we show that all measures, except disagreement, can be
efficiently estimated up to a relative error of at most~$4\%$.  Our oracle-based
algorithms which have oracle access to innate opinions~$s_u$ need less than
0.01~seconds to output a given node's opinion.  Even more interestingly, our
algorithms which have oracle access to the expressed opinions~$z_u^*$ achieve
error $\pm$\numprint{0.001} for estimating node opinions and can approximate
\numprint{10000}~node opinions in less than one second, even on our largest
graph with more than 4~million nodes and more than 40~million edges.
Additionally, we also add experiments comparing the running times of the
near-linear time algorithm by Xu et al.~\cite{xu2021fast} and running the FJ
opinion dynamics using the PageRank-style update rule
from~\cite{friedkin2014two,proskurnikov2016pagerank}.  We show that the
algorithms based on the connection to personalized PageRank are faster than the
baseline~\cite{xu2021fast}, while obtaining low approximation error.  We make
our source code publicly available on GitHub~\cite{code}.

\subsection{Related Work}
In recent years, online social networks and their timeline algorithms have been
blamed for increasing polarization and disagreement in societies. To obtain a
formal understanding of the phenomena, it has become an active research area to
combine opinion formation models with abstractions of algorithmic
interventions~\cite{bhalla2021local,chitra2020analyzing,zhu2021minimizing}.
One of the most popular models in this context is the the FJ
model~\cite{friedkin1990social} since it is highly amenable to analysis. 
Researchers studied interventions, such as edge insertions or changes to node
opinions, with the goal of decreasing the polarization and disagreement in the
networks~\cite{matakos2017measuring,musco2018minimizing,zhu2021minimizing,cinus2023rebalancing}.
Other works also studied adversarial
interventions~\cite{gaitonde2020adversarial,chen2021adversarial,tu2023adversaries} and viral content~\cite{tu2022viral}, as well as fundamental properties of the FJ model~\cite{bindel2015how,racz2022towards}.

The studies above have in common that their experiments rely on
simulations of the FJ model. To do this efficiently, 
Xu, Bao and Zhang~\cite{xu2021fast} used Laplacian solvers to obtain a
near-linear time algorithm for simulating the FJ model. However, this algorithm
requires access to the entire graph and all innate node opinions. Here,
we show that even when we only have query access to the graph and oracle
access to the opinions, we can obtain efficient simulations of the FJ
model in theory and in practice.

To obtain our results, we use several subroutines from previous works. Andoni,
Krauthgamer and Pogrow gave sublinear-time algorithm for solving linear
systems~\cite{andoni2019solving}. Andersen, Chung and
Lang~\cite{andersen2006local} proposed an algorithm that approximates a
personalized PageRank vector with a small residual vector, with running time
independent of the graph size. There also exist local algorithms for approximating the entries of the personalized PageRank vectors with small error~\cite{lofgren2014fast,lofgren2016personalized}.

Our algorithms for estimating the expressed opinions make heavy use of random
walks.  Random walks have also been exploited in sublinear-time algorithms for
approximating other local graph centrality measures~\cite{bressan2018sublinear},
stationary distributions~\cite{banerjee2015fast,bressan2019approximating},
estimating effective resistances~\cite{andoni2019solving,peng2021local} and for
sampling vertices with probability proportional to their
degrees~\cite{dasgupta2014estimating}.

\subsection{Notation}
Throughout the paper we let $G=(V,E,w)$ be a connected, weighted and undirected
graph with $n$~nodes and $m$~edges.  The weighted degree of a vertex~$u$ is given by $w_u = \sum_{(u,v)\in E}
w_{uv}$; the unweighted degree of~$u$ is given by
$d_u = \abs{\{v \colon (u,v)\in E\}}$.  For a graph $G$, $L=D-A$ denotes %
the
Laplacian of~$G$, where $D=\diag(w_1,\dots,w_n)$ is the weighted degree matrix
and $A$ is the (weighted) adjacency matrix such that $A_{uv}=w_{uv}$ if
$(u,v)\in E$ and $0$ otherwise. We write $I$ to denote the identity matrix. For a positive semidefinite
matrix~$S$, its condition number~$\kappa(S)$ is the ratio between the largest
and the smallest non-zero eigenvalues of $S$, which in turn are denoted by
$\lambda_{\max}(S)$ and $\lambda_{\min}(S)$, respectively. 

\section{Access to Oracle for Innate Opinions}
\label{sec:oracle-innate-opinions}
In this section, we assume that we have access to an oracle which, given a
node~$u$, returns its innate opinion~$s_u$ in time $O(1)$.

\subsection{Estimating Expressed Opinions $z_u^*$}
\label{sec:estimate-z}
We first show that for each node~$u$ we can estimate $z_u^*$ efficiently.

\textbf{Linear system solver.} First, recall that $z^*=(I+L)^{-1}s$.  Now we
observe that $z_u^*$ can be estimated using a sublinear-time solver for linear
systems~\cite{andoni2019solving}, which performs a given number of short random
walks. We present the pseudocode in Algorithm~\ref{alg:random-walks} and details
of the random walks in the appendix. %

\begin{algorithm}[t]
\begin{algorithmic}[1]
\REQUIRE {A graph $G=(V,E,w)$, oracle access to a vector $s\in [0,1]^n$ of innate opinions, an error parameter $\varepsilon>0$ and an upper bound $\bar{\kappa}$ on $\kappa(\tilde{S})$ with
	$\tilde{S} = (I+D)^{-1/2} (I+L) (I+D)^{-1/2}$}
\STATE $r \gets \log_{1/\bar{\kappa}}(2\varepsilon^{-1} (1-\bar{\kappa})^{-1} n^{1/2} (\max_u w_u)^{1/2})$
\STATE $\ell \gets O((\frac{\varepsilon}{2r})^{-2}\log r)$
\STATE Perform $\ell$~lazy random walks with timeout of length~$r$ from $u$
\FOR{$t = 1,\dots,r$}
	\STATE Let $u_i^{(t)}$~denote the vertex of the $i$-th walk after $t$~steps
	\STATE $\hat{x}_u^{(t)} \gets \frac{1}{\ell} \sum_{i=1}^\ell \frac{s_{u^{(t)}_i}}{w_{u^{(t)}_i}}$, {where $s_{u^{(t)}_i}$ is the innate opinion of vertex $u^{(t)}_i$}
\ENDFOR
\RETURN $\tilde{z}_u^* \gets \frac{1}{2} \sum_{t=1}^r \hat{x}^{(t)}_u$
\end{algorithmic}
\caption{Random walk-based algorithm for estimating $z_u^*$}
\label{alg:random-walks}
\end{algorithm}

\begin{proposition}
\label{pro:solver}
	Let $u\in V$ and $\varepsilon>0$. Let $\bar{\kappa}$ be an upper bound on
	$\kappa(\tilde{S})$ with
	$\tilde{S} = (I+D)^{-1/2} (I+L) (I+D)^{-1/2}$.
	Algorithm~\ref{alg:random-walks} returns a value $\tilde{z}_u^*$ such that
	$\abs{\tilde{z}_u^* - z_u^*} \leq \varepsilon$ with probability
	$1-\frac{1}{r}$ for $r=O(\bar{\kappa} \log(\varepsilon^{-1} n\bar{\kappa} 
				(\max_u w_u) ))$.
	Furthermore, $z_u^*$ %
 is computed in time $O(\varepsilon^{-2} r^3 \lg r)$ and
	using the same number of queries to $s$.
\end{proposition}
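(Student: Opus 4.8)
The plan is to view Algorithm~\ref{alg:random-walks} as an instantiation of the sublinear-time linear-system solver of Andoni, Krauthgamer and Pogrow~\cite{andoni2019solving} applied to the system $(I+L)z = s$, whose solution is $z^* = (I+L)^{-1}s$; the real work is to exhibit a symmetric preconditioner that makes this system well-conditioned in terms of $\bar\kappa$, and to balance the truncation and sampling errors. Write $\hat D := I + D$ and $\tilde S := \hat D^{-1/2}(I+L)\hat D^{-1/2}$, so that $z^* = \hat D^{-1/2}\tilde S^{-1}\hat D^{-1/2}s$. Three facts about $\tilde S$ drive the analysis. (i) $\tilde S \succ 0$, since $I+L$ is. (ii) The signless Laplacian $D+A$ is positive semidefinite, so $L = D - A \preceq 2D$, hence $I + L \preceq 2\hat D$ and $\tilde S \preceq 2I$; thus $\lambda_{\max}(\tilde S) \le 2$. (iii) Since the $u$-th diagonal entry of $I+L$ equals $1 + w_u$, we get $\operatorname{tr}(\tilde S) = \sum_u \frac{1+w_u}{1+w_u} = n$, so $\lambda_{\max}(\tilde S) \ge 1$ and therefore $\lambda_{\min}(\tilde S) \ge \lambda_{\max}(\tilde S)/\kappa(\tilde S) \ge 1/\bar\kappa$. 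Fact (iii) is the only handle on the \emph{smallest} eigenvalue of $\tilde S$, and it is what ultimately controls the convergence rate.

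Next I would expand $\tilde S^{-1}$ as a Neumann series around the lazy matrix $W := I - \tfrac12 \tilde S$. By (ii), $W \succeq 0$ and $\norm{W} = 1 - \tfrac12 \lambda_{\min}(\tilde S) =: \rho \le 1 - \tfrac1{2\bar\kappa} < 1$, while $\tilde S = 2(I - W)$ gives $\tilde S^{-1} = \tfrac12 \sum_{t\ge 0} W^t$. Conjugating by $\hat D^{\pm 1/2}$ turns $W$ into the substochastic matrix $\tfrac12(I + \hat D^{-1}A)$, which is exactly the transition rule of the ``lazy random walk with timeout'' of lines~3--6 (from a node $x$ the walk either stays, moves to a neighbor $y$ with probability proportional to $w_{xy}$, or times out). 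Combining this with the PageRank description of the Friedkin--Johnsen dynamics~\cite{friedkin2014two,proskurnikov2016pagerank} yields an identity expressing $z_u^*$ as $\tfrac12$ times a convergent sum whose $t$-th term is the expectation of a function of the walk's position $u^{(t)}_i$ after $t$ steps --- precisely the quantity $\hat x_u^{(t)}$ averages in line~6; the exact form of this identity is the content deferred to the appendix. The sampled values $s_{u^{(t)}_i}/w_{u^{(t)}_i}$ lie in $[0,1]$ (every visited vertex has weighted degree at least one), so the per-level estimators have range $1$.

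It remains to bound the two error sources. Truncating the sum after $t=r$ leaves a tail bounded by $\tfrac12 \norm{\hat D^{-1/2}}^2 \norm{s}_2 \cdot \tfrac{\rho^{r+1}}{1 - \rho} \le \tfrac{\sqrt{n}\,\rho^{r+1}}{2(1-\rho)}$; since $1 - \rho = \tfrac12 \lambda_{\min}(\tilde S) \ge \tfrac1{2\bar\kappa}$, choosing $r$ as in line~1 makes this at most $\varepsilon/2$, and the same estimate (together with $-\ln\rho \ge \tfrac1{2\bar\kappa}$) gives $r = O\!\big(\bar\kappa \log(\varepsilon^{-1} n \bar\kappa \max_u w_u)\big)$, as claimed. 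For the sampling error, for each fixed $t \le r$ the $\ell$ walks provide $\ell$ i.i.d.\ copies of a $[0,1]$-valued random variable whose mean is the $t$-th term, so Hoeffding with $\ell = \Theta((r/\varepsilon)^2 \log r)$ gives $\abs{\hat x_u^{(t)} - \Exp{\hat x_u^{(t)}}} \le \varepsilon/(2r)$ except with probability at most $1/r^2$; a union bound over the $r$ values of $t$ (which reuse the same walks, but this only affects the union bound, not the per-level concentration) gives overall failure probability at most $1/r$, and on the good event $\abs{\tilde z_u^* - z_u^*} \le \varepsilon/2 + \tfrac12 \sum_{t=1}^r \varepsilon/(2r) \le \varepsilon$. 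The running time is $O(\ell r) = O(\varepsilon^{-2} r^3 \lg r)$: there are $\ell$ walks of $r$ steps each, a step costs $O(1)$ (one degree query, one weighted-neighbor sample, one query to $s$), and this also bounds the number of queries to $s$.

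\textbf{The main obstacle} is getting a geometric convergence rate of $1 - \Theta(1/\bar\kappa)$ rather than something uncontrolled: the naive substochastic walk $\hat D^{-1}A$ may have spectral radius arbitrarily close to $1$ with no dependence on $\kappa(\tilde S)$, so one must (a) pass to the symmetric normalization $\tilde S$, (b) add laziness (the factor $\tfrac12$ and the self-loop term) so that every eigenvalue of $W$ lands in $[0,1)$, and (c) invoke $\operatorname{tr}(\tilde S) = n$ to lower-bound $\lambda_{\min}(\tilde S)$ by $1/\bar\kappa$ --- without (c) there is no control on the tail of the Neumann series, hence on $r$. A secondary subtlety is that the $r$ level-estimates share the same $\ell$ walks and are therefore dependent; this is harmless for the union bound but is the reason $\ell$ carries the extra $\log r$ factor. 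The remainder is the standard truncate-then-Monte-Carlo argument underlying~\cite{andoni2019solving}, specialized to the right-hand side $s$ and to the norm bound $\norm{s}_2 \le \sqrt n$.
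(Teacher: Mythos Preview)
Your argument is essentially correct, but the paper's own proof takes a much shorter route: it simply invokes the Andoni--Krauthgamer--Pogrow SDD solver (stated in the paper as Theorem~\ref{thm:sdd-solver}) as a black box and checks its preconditions. Concretely, with $S=I+L$ the diagonal is $I+D$, so the normalized matrix in the theorem is exactly your $\tilde S$; then $\frac{\max_i (I+L)_{ii}}{\min_i (I+L)_{ii}} \le \max_u w_u + 1$ (since $(I+L)_{ii}\ge 1$), $\norm{s}_0\le n$, and $\norm{z^*}_\infty\le 1$ because opinions lie in $[0,1]$. Plugging these into the theorem yields the stated $r$, the success probability $1-1/r$, and the running time, in three lines.

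What you do instead is open the black box: you set up the symmetric preconditioner, use the trace identity $\operatorname{tr}(\tilde S)=n$ to lower-bound $\lambda_{\min}(\tilde S)$ by $1/\bar\kappa$, expand in a lazy Neumann series, and carry out the truncate-then-Hoeffding analysis by hand. This is more work but more informative --- in particular, your truncation estimate shows that the $\max_u w_u$ factor inside the log is not actually needed, which is a sharper statement than what a direct application of Theorem~\ref{thm:sdd-solver} gives. One small slip: the assertion that ``every visited vertex has weighted degree at least one'' is not a hypothesis of the paper, so boundedness of $s_v/w_v$ is not justified as written. Your own conjugation identity shows that the natural per-step value is $s_v/(1+w_v)$, which is automatically in $[0,1]$; the division by $w_v$ rather than $1+w_v$ in line~6 of Algorithm~\ref{alg:random-walks} appears to be a typo you are propagating rather than a flaw in your approach.
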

Note that the running time of the algorithm depends on an upper bound of the
condition number $\kappa(\tilde{S})$, which can be small in many real networks.
Indeed, for all of our real-world datasets, we have that $\kappa(\tilde{S}) \leq
300$ (see below).
In our experiments, we will practically evaluate
Algorithm~\ref{alg:random-walks} and show that it efficiently computes accurate
estimates of~$z_u^*$.

\textbf{Relationship to personalized PageRank.}
Next, we state a formal
connection~\cite{friedkin2014two,proskurnikov2016pagerank} between personalized
PageRank and FJ opinion dynamics.

First, in \emph{personalized PageRank}, we are given a teleportation probability parameter $\alpha\in
(0,1]$ and a vector $s\in[0,1]^n$ corresponding to a probability distribution
(i.e., $\sum_u s_u = 1$). Now, the \emph{personalized PageRank} is the column-vector
$\pr(\alpha,s)$ which is the unique solution to the equation 
$$\pr(\alpha, s) = \alpha s + (1-\alpha) \pr(\alpha, s)W,$$
where $W=\frac{I+D^{-1}A}{2}$ is the lazy random walk matrix. 
One can prove the following proposition.
\begin{proposition}[\cite{friedkin2014two,proskurnikov2016pagerank}]
\label{prop:page-rank-equation}
	The FJ expressed equilibrium opinions~$z^*$ are the unique solution to the
	equation $z^* = Ms + (I-M) D^{-1} A z^*$, where $M=(I+D)^{-1}$.
\end{proposition}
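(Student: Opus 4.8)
The plan is to derive the claimed fixed-point equation directly from the known closed form $z^* = (I+L)^{-1}s$ by elementary matrix manipulation, and then to establish uniqueness by exhibiting the relevant matrix as invertible.

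First I would rewrite $z^* = (I+L)^{-1}s$ as $(I+L)z^* = s$ and substitute $L = D - A$ to obtain $(I+D)z^* - Az^* = s$. Since $G$ is connected, every weighted degree is strictly positive, so $I+D$ is invertible and $M = (I+D)^{-1}$ is well-defined; multiplying through by $M$ yields $z^* = Ms + MAz^*$. It then remains only to identify $MA$ with $(I-M)D^{-1}A$. For this I would use that $M$ and $D$ are diagonal and hence commute, and that $MD = (I+D)^{-1}D = (I+D)^{-1}\bigl((I+D) - I\bigr) = I - (I+D)^{-1} = I - M$; therefore $(I-M)D^{-1}A = MD\,D^{-1}A = MA$, and the equation becomes $z^* = Ms + (I-M)D^{-1}Az^*$, as claimed.

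For uniqueness, any solution $z$ of $z = Ms + (I-M)D^{-1}Az$ satisfies $\bigl(I - (I-M)D^{-1}A\bigr)z = Ms$, and by the identity just derived, $I - (I-M)D^{-1}A = I - MA = (I+D)^{-1}(I + D - A) = (I+D)^{-1}(I+L)$. This matrix is invertible, since $(I+D)^{-1}$ is invertible and $I+L$ is positive definite (the graph Laplacian $L$ being positive semidefinite); hence the solution is unique and must equal $z^*$. There is essentially no obstacle in this argument: the only step deserving a moment's attention is the identity $(I-M)D^{-1}A = MA$, together with the fact that $D^{-1}$ is well-defined, which again is guaranteed by connectedness of $G$.
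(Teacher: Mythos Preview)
Your proof is correct and follows essentially the same route as the paper: both derive $z^* = Ms + MAz^*$ and then use the diagonal identity $MD = I - M$ (equivalently $M = (I-M)D^{-1}$) to rewrite $MA$ as $(I-M)D^{-1}A$. The only notable differences are cosmetic---you start from the closed form $z^* = (I+L)^{-1}s$ rather than the fixed-point form of the update rule, and you spell out the uniqueness argument explicitly via invertibility of $(I+D)^{-1}(I+L)$, which the paper leaves implicit.
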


Comparing the two equations for $\pr(\alpha,s)$ and $z^*$, we observe that
there is a close relationship if we identify $M$ with $\alpha$ and $D^{-1}A$ with
$W$. Thus, while personalized PageRank uses lazy random walks (based on $W$),
the FJ opinion dynamics use vanilla random walks (based on $D^{-1} A$).
Additionally, while personalized PageRank
weights every entry~$u$ in $s$ with a factor of $\alpha$, the FJ opinion dynamics
essentially reweight each entry~$u$ with a factor of $\frac{1}{1+w_u}$. Note
that if all vertices $u$ satisfy that $w_u=d$, then $\alpha = \frac{1}{1+d}$ and the FJ opinion dynamics are
essentially a generalization of personalized PageRank.

We use this connection between the FJ opinion dynamics and personalized
PageRank to obtain a novel sublinear-time algorithm to estimate
entries~$z_u^*$.  More concretely, we consider weighted $d$-regular graphs and
show that Algorithm~\ref{alg:pagerank} can estimate entries $z_u^*$.
We prove that \emph{the FJ equilibrium opinions can be approximated
by repeatedly approximating personalized PageRank vectors}. In the algorithm, we
write $\vec{0}$ to denote the $0$s-vector, $\mathbbm{1}_i$ to denote the
length-$n$ indicator vector which in position~$i$ is~1 and all other entries
are~0 and $\norm{r}_1 = \sum_u \abs{r(u)}$ to denote the $1$-norm of a
vector. The algorithm invokes the well-known \textsc{Push} operation for
approximating the personalized PageRank in \cite{andersen2006local} as a
subroutine, and interactively updates the maintained vector $p$ until a very
small probability mass is left in the corresponding residual vector $r$.
We present the details of the algorithm from~\cite{andersen2006local} in the
appendix. %
For Algorithm~\ref{alg:pagerank} we obtain the following
guarantees.

\begin{algorithm}[t]
\begin{algorithmic}[1]
\REQUIRE {A graph $G=(V,E,w)$, oracle access to a vector $s\in [0,1]^n$ of innate opinions and an error parameter $\varepsilon > 0$}

\STATE $p\gets \vec{0}$ and $r\gets \mathbbm{1}_u$

\WHILE{$\norm{r}_1 > \varepsilon$}
	\FOR{all $i$ with $r(i) \neq 0$}
		\STATE\label{alg:personalizedpr} Run the local personalized PageRank algorithm from~\cite{andersen2006local}
			for $\mathbbm{1}_i$ to get $p^{(i)}$ and $r^{(i)}$
        \ENDFOR
	
	\STATE $p \gets p + \sum_i r(i) p^{(i)}$
 
	\STATE $r \gets \sum_i r(i) r^{(i)}$
 \ENDWHILE

\RETURN \label{alg:vectorp} $z_u' \gets p^{\intercal} s$
\end{algorithmic}
\caption{Personalized PageRank-based algorithm for estimating $z_u^*$ {in $d$-regular graphs}}
\label{alg:pagerank}
\end{algorithm}

\begin{theorem}
\label{thm:estimate-z}
	Let $d\in\mathbb{N}$ be an integer. Suppose $G$ is a $d$-regular graph and
	$\varepsilon\in(0,1)$. %
	Algorithm~\ref{alg:pagerank} returns an estimate $z_u'$ of $z_u^*$ such
	that $\abs{z_u' - z_u^*} \leq \varepsilon$ in time
	$(d/\varepsilon)^{O(d \lg(1/\varepsilon))}$.
\end{theorem}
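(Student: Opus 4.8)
The plan is to analyze Algorithm~\ref{alg:pagerank} as an iterative refinement on top of the local \textsc{Push} routine of~\cite{andersen2006local}, using the identity of Proposition~\ref{prop:page-rank-equation}. First I would specialize that identity to the $d$-regular case: since $D=dI$ we get $z^* = \frac{1}{1+d}s + \frac{d}{1+d}(D^{-1}A)z^*$, and rewriting the non-lazy walk matrix $D^{-1}A$ as $2W-I$ for the lazy walk matrix $W=\frac{I+D^{-1}A}{2}$ used in~\cite{andersen2006local} turns this into $z^* = \alpha s + (1-\alpha)Wz^*$ with $\alpha=\frac{1}{1+2d}=\Theta(1/d)$. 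Hence $z^* = \pr(\alpha,s)$ exactly, and since $A$ (and thus the resolvent $\alpha(I-(1-\alpha)W)^{-1}$) is symmetric on a $d$-regular graph, $z_u^* = \langle \pr(\alpha,\mathbbm{1}_u),s\rangle$. So it suffices to produce a vector $p$ with $\norm{p-\pr(\alpha,\mathbbm{1}_u)}_1\le\varepsilon$ and return $p^{\intercal}s$, whose error is at most $\norm{p-\pr(\alpha,\mathbbm{1}_u)}_1\norm{s}_\infty\le\varepsilon$ because $s\in[0,1]^n$.

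For correctness I would establish the loop invariant $p + \pr(\alpha,r) = \pr(\alpha,\mathbbm{1}_u)$. It holds initially since $p=\vec{0}$ and $r=\mathbbm{1}_u$. For the inductive step, recall that the routine of~\cite{andersen2006local} run from source $\mathbbm{1}_i$ returns $p^{(i)},r^{(i)}\ge 0$ with $p^{(i)}+\pr(\alpha,r^{(i)})=\pr(\alpha,\mathbbm{1}_i)$; by linearity of $\pr(\alpha,\cdot)$ in the source, $\pr(\alpha,r)=\sum_i r(i)\pr(\alpha,\mathbbm{1}_i)=\sum_i r(i)p^{(i)} + \pr\bigl(\alpha,\sum_i r(i)r^{(i)}\bigr)$, which is precisely the update $p\gets p+\sum_i r(i)p^{(i)}$, $r\gets\sum_i r(i)r^{(i)}$ performed in the loop, so the invariant persists. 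When the loop exits we have $\norm{r}_1\le\varepsilon$, and since $\pr(\alpha,\cdot)$ preserves the $\ell_1$-mass of non-negative vectors (summing coordinates of $\pi=\alpha r+(1-\alpha)\pi W$ gives $\norm{\pi}_1=\norm{r}_1$), we conclude $\norm{\pr(\alpha,\mathbbm{1}_u)-p}_1=\norm{\pr(\alpha,r)}_1=\norm{r}_1\le\varepsilon$, which is exactly what we needed.

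For the running time I would run each invocation of~\cite{andersen2006local} with a fixed internal threshold $\Theta(1/d)$. This is small enough that the call from $\mathbbm{1}_i$ performs a \textsc{Push} at $i$ but then stops, so $p^{(i)}=\alpha\mathbbm{1}_i$, $\mathrm{supp}(r^{(i)})\subseteq B(i,1)$ (the radius-$1$ ball), the call costs $O(d)$, and, because a \textsc{Push} moves an $\alpha$-fraction of the pushed residual into $p$, $\norm{r^{(i)}}_1\le 1-\alpha$. Consequently every iteration of the while-loop shrinks $\norm{r}_1$ by a factor at most $1-\alpha=1-\Theta(1/d)$, so $K=O(\alpha^{-1}\log(1/\varepsilon))=O(d\log(1/\varepsilon))$ iterations bring $\norm{r}_1$ below $\varepsilon$. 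An induction on the iteration index, using $\mathrm{supp}(r^{(i)})\subseteq B(i,1)$, shows that after $k$ rounds $\mathrm{supp}(p)$ and $\mathrm{supp}(r)$ lie inside $B(u,k)$, whose size is $d^{O(k)}$. Summing over the $K$ rounds the number of \textsc{Push}-calls (bounded by $|\mathrm{supp}(r)|$ per round), the cost of the combination steps, and finally the $d^{O(K)}$ queries to $s$ used to form $p^{\intercal}s$, gives a total of $d^{O(d\log(1/\varepsilon))}\le(d/\varepsilon)^{O(d\log(1/\varepsilon))}$, matching the claim; this also realizes the intuition from the introduction that $z_u^*$ is determined (up to $\pm\varepsilon$) by a radius-$O(d\log(1/\varepsilon))$ neighborhood of $u$.

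The main obstacle is the running-time argument, and specifically the choice of the internal threshold of~\cite{andersen2006local}: it must be small enough to force a per-round contraction of $\norm{r}_1$ strong enough to terminate within $O(d\log(1/\varepsilon))$ rounds, yet keep each call cheap and its residual support confined to a single hop, so that the cumulative support — which grows by one hop per round — stays at size $d^{O(d\log(1/\varepsilon))}$; by contrast the correctness part is routine bookkeeping of the \textsc{Push} invariant together with the linearity and $\ell_1$-preservation of personalized PageRank.
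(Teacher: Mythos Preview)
Your argument is correct and is organized around the same invariant $p+\pr(\alpha,r)=\pr(\alpha,\mathbbm{1}_u)$ and the same $\|r\|_1$-contraction scheme as the paper, but two choices set it apart.

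First, you rewrite the fixed-point equation in terms of the \emph{lazy} walk matrix and obtain $z^*=\pr(\alpha,s)$ with $\alpha=\frac{1}{1+2d}$, so that the \textsc{Push} of~\cite{andersen2006local} can be used verbatim; the symmetry of $W$ on a $d$-regular graph then gives $z_u^*=\pr(\alpha,\mathbbm{1}_u)^{\intercal}s$ in one line. The paper instead keeps the non-lazy walk $W_s=D^{-1}A$ with $\alpha=\frac{1}{d+1}$, introduces a variant $\pr'$ together with a modified \textsc{Push}, and derives the analogous identity through an explicit expansion of $(I+L)^{-1}$ (Lemma~\ref{lem:identity}). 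Your route is shorter and avoids adapting the subroutine.

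Second, you run each inner call with threshold $\Theta(1/d)$ so that it performs exactly one \textsc{Push}; the paper runs each call with threshold~$\varepsilon$, so a single call already touches $O(d/\varepsilon)$ vertices. Both choices yield a per-round contraction of $1-\Theta(1/d)$ and hence $K=O(d\log(1/\varepsilon))$ rounds, but in your analysis the support grows by a factor~$d$ per round rather than $O(d/\varepsilon)$, giving the tighter bound $d^{O(d\log(1/\varepsilon))}$ (which you then relax to the stated $(d/\varepsilon)^{O(d\log(1/\varepsilon))}$) and the clean locality statement that only the ball $B(u,K)$ is ever inspected. One minor point worth making explicit in a write-up: your threshold must lie strictly above $\frac{1}{1+2d}$ so that after the single push at $i$ the residual $r'(i)/d=\frac{1}{1+2d}$ no longer triggers a second push; ``$\Theta(1/d)$'' is correct but hides this constraint.
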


Observe that the running time \emph{is independent of the graph size $n$}
for any constant $\varepsilon>0$ and $d=O(1)$.
This is in sharp contrast to Algorithm~\ref{alg:random-walks} from
Proposition~\ref{pro:solver}, whose running
time is $\Omega(\log n)$ even for $d$-regular graphs (for which $\bar{\kappa}=O(1)$).
Additionally, observe that Algorithm~\ref{alg:pagerank} is completely
deterministic, even though it is a sublinear-time algorithm.
Together, the above algorithm implies that in $d$-bounded graphs, every node's
opinion is determined (up to a small error) by the opinions of a constant-size
neighborhood. We believe that this an interesting insight into the FJ opinion
dynamics.

\subsubsection{Proof Sketch of Theorem \ref{thm:estimate-z}}
Let $W_s = D^{-1} A$. We define $\pr'(\alpha,s)$ as the unique solution of the equation
$\pr'(\alpha,s) = \alpha s + (1-\alpha) W_s \pr'(\alpha, s)$.
Note that this differs from the classic personalized PageRank only by the fact that
we use (non-lazy) random walks (based on $W_s$) rather than lazy random walks
(based on $W = \frac{1}{2}(I + D^{-1}A)$). 
Note that by letting $R' = \alpha \sum_{i=0}^\infty (1-\alpha)^i W_s^i$, we have that %
$\pr'(\alpha,s) = R' s$.
That is, 
\begin{align}
\label{eq:pr-identity}
	\pr'(\alpha,s) = \alpha \sum_{i=0}^\infty (1-\alpha)^i (W_s^i s).
\end{align}
Furthermore, we get that
\begin{align*}
	\pr'(\alpha,s)
	&= R' s = \alpha s + (1-\alpha) R' W_s s \\
	&= \alpha s + (1-\alpha) R' \pr'(\alpha, W_s s).
\end{align*}

Theorem \ref{thm:estimate-z} follows the lemmas below, whose proofs are deferred
to the appendix. %

\begin{lemma}
\label{lem:algo-identity}
It holds that $z_u^*= \pr'(\alpha,\mathbbm{1}_u)^{\intercal} s$, where $\alpha =\frac{1}{d+1}$, and $s\in [0,1]^n$ is the vector consisting of the innate opinions of all vertices. When Algorithm~\ref{alg:pagerank} finishes, it holds that
	$p + \pr'(\alpha,r) = \pr'(\alpha,\mathbbm{1}_u)$. Thus, it holds that 
	$$z_u^*
		= \pr'(\alpha,\mathbbm{1}_u)^{\intercal} s
		= p^{\intercal} s + \pr'(\alpha,r)^{\intercal} s.$$
\end{lemma}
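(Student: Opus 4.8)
The plan is to establish the three claims in sequence, the first two carrying the real content. For the identity $z_u^* = \pr'(\alpha,\mathbbm{1}_u)^{\intercal} s$, I would specialize Proposition~\ref{prop:page-rank-equation} to the $d$-regular case: since $D = dI$ we have $M = (I+D)^{-1} = \frac{1}{d+1}I = \alpha I$ and $I - M = (1-\alpha)I$, so the equation $z^* = Ms + (I-M)D^{-1}Az^*$ becomes $z^* = \alpha s + (1-\alpha)W_s z^*$ with $W_s = D^{-1}A$. This is exactly the defining equation of $\pr'(\alpha,s)$, so by uniqueness $z^* = \pr'(\alpha,s) = R's$ with $R' = \alpha\sum_{i\geq 0}(1-\alpha)^i W_s^i$ as in~\eqref{eq:pr-identity}. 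The place where $d$-regularity is used a second time is that then $W_s = \frac{1}{d}A$ is symmetric, hence $R'$ is symmetric; therefore $z_u^* = \mathbbm{1}_u^{\intercal} R' s = (R'\mathbbm{1}_u)^{\intercal} s = \pr'(\alpha,\mathbbm{1}_u)^{\intercal} s$.

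For the terminal identity $p + \pr'(\alpha,r) = \pr'(\alpha,\mathbbm{1}_u)$, I would prove by induction on the iteration count that this invariant holds at the start of every pass through the while loop, and hence at termination. Initially $p = \vec{0}$ and $r = \mathbbm{1}_u$, so it holds. For the inductive step, write $p,r$ for the vectors entering an iteration and $p_{\mathrm{new}}, r_{\mathrm{new}}$ for those leaving it. The local PageRank subroutine of~\cite{andersen2006local} called in line~\ref{alg:personalizedpr} satisfies the residual invariant $p^{(i)} + \pr'(\alpha,r^{(i)}) = \pr'(\alpha,\mathbbm{1}_i)$ for each index $i$ with $r(i)\neq 0$ — the defining property of the Push operation, read off for the walk matrix $W_s$. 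Using that $\pr'(\alpha,\cdot)$ is linear, since $\pr'(\alpha,s) = R's$, together with the updates $p_{\mathrm{new}} = p + \sum_i r(i)p^{(i)}$ and $r_{\mathrm{new}} = \sum_i r(i)r^{(i)}$, we get
\[
p_{\mathrm{new}} + \pr'(\alpha,r_{\mathrm{new}})
  = p + \sum_i r(i)\bigl(p^{(i)} + \pr'(\alpha,r^{(i)})\bigr)
  = p + \sum_i r(i)\pr'(\alpha,\mathbbm{1}_i)
  = p + \pr'(\alpha,r),
\]
where the last step uses linearity together with $r = \sum_i r(i)\mathbbm{1}_i$. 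By the inductive hypothesis this equals $\pr'(\alpha,\mathbbm{1}_u)$, closing the induction. The third claim is then immediate: apply $(\cdot)^{\intercal}s$ to the terminal invariant and substitute the first identity, giving $z_u^* = \pr'(\alpha,\mathbbm{1}_u)^{\intercal} s = p^{\intercal}s + \pr'(\alpha,r)^{\intercal}s$.

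The step I expect to need the most care is aligning the subroutine's guarantee with the non-lazy object $\pr'$: the classical Push analysis of~\cite{andersen2006local} is stated for the lazy walk matrix $W = \frac{1}{2}(I + D^{-1}A)$, and $\pr(\alpha,s)$ with lazy walks coincides with $\pr'\!\bigl(\tfrac{2\alpha}{1+\alpha},s\bigr)$ with non-lazy walks, so one must either run the subroutine with the matching teleportation parameter or restate Push directly for $W_s$, and then check that the per-index invariant and the linearity of $\pr'$ are expressed consistently. Everything else is routine bookkeeping — the geometric-series manipulation for $R'$, swapping $s$ and $\mathbbm{1}_u$ via symmetry, and the linear combination inside the loop. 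Note finally that this lemma is conditional (``when Algorithm~\ref{alg:pagerank} finishes''), so termination of the while loop and the size of the leftover residual $\norm{r}_1 \le \varepsilon$ play no role here and are handled separately in the running-time and accuracy analysis.
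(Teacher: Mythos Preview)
Your proof is correct and follows essentially the same approach as the paper: the invariant $p+\pr'(\alpha,r)=\pr'(\alpha,\mathbbm{1}_u)$ is established by the same induction using linearity of $\pr'(\alpha,\cdot)$ and the per-index Push invariant, and the final claim follows by taking inner products with $s$. For the first identity you take a slightly more economical route---invoking Proposition~\ref{prop:page-rank-equation} and uniqueness of the fixed point to get $z^*=R's$, then symmetry of $W_s=\frac{1}{d}A$---whereas the paper derives the closed form of $(I+L)^{-1}$ via an explicit matrix computation (its Lemma~\ref{lem:identity}) before specializing to the $d$-regular case; both arrive at the same point, and your concern about aligning the subroutine with the non-lazy $\pr'$ is exactly what the paper addresses by restating the Push operation for $W_s$ directly.
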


The following gives guarantees on the approximation error. %
\begin{lemma}
\label{lem:approx-error}
Let $p$ be the vector in Step~(\ref{alg:vectorp}) in Algorithm \ref{alg:pagerank} %
and $s$ be the vector consisting of the innate opinions of all vertices. It holds that $\abs{z_u^* - p^{\intercal} s} \leq \varepsilon$.
\end{lemma}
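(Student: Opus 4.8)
The plan is to reduce the claim, via the identity in Lemma~\ref{lem:algo-identity}, to bounding the $\ell_1$-mass carried by the leftover residual, and then to exploit $d$-regularity to show this mass is at most $\varepsilon$. Concretely, Lemma~\ref{lem:algo-identity} tells us that when Algorithm~\ref{alg:pagerank} reaches Step~(\ref{alg:vectorp}) we have $p + \pr'(\alpha, r) = \pr'(\alpha, \mathbbm{1}_u)$ and $z_u^* = \pr'(\alpha, \mathbbm{1}_u)^{\intercal} s$, hence $z_u^* - p^{\intercal} s = \pr'(\alpha, r)^{\intercal} s$, where $r$ is the residual vector at termination. So it suffices to prove $\abs{\pr'(\alpha, r)^{\intercal} s} \leq \varepsilon$.

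Next I would record two elementary facts about $r$. First, $r \geq 0$ entrywise: this is an invariant of the local \textsc{Push}-based personalized PageRank subroutine of~\cite{andersen2006local} (which only moves nonnegative mass out of the residual), and it is preserved by the outer update $r \gets \sum_i r(i) r^{(i)}$ of Algorithm~\ref{alg:pagerank} since all coefficients $r(i)$ are nonnegative; initially $r = \mathbbm{1}_u \geq 0$. Second, the while loop exits only when $\norm{r}_1 \leq \varepsilon$. Now I bound $\norm{\pr'(\alpha, r)}_1$. Using the series $\pr'(\alpha, r) = \alpha \sum_{i=0}^\infty (1-\alpha)^i W_s^i r$ with $W_s = D^{-1}A$ from~\eqref{eq:pr-identity}, $d$-regularity gives $D = dI$, so $W_s = \frac1d A$ is symmetric and doubly stochastic; hence every $W_s^i$ has all column sums equal to $1$, so $R' := \alpha \sum_{i} (1-\alpha)^i W_s^i$ is entrywise nonnegative with all column sums equal to $\alpha\sum_i (1-\alpha)^i = 1$. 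Applying this to the nonnegative vector $r$ gives $\norm{\pr'(\alpha,r)}_1 = \1^{\intercal} R' r = \1^{\intercal} r = \norm{r}_1 \leq \varepsilon$. Finally, since $s \in [0,1]^n$ and $\pr'(\alpha, r) \geq 0$, we get $0 \leq \pr'(\alpha, r)^{\intercal} s \leq \norm{\pr'(\alpha, r)}_1 \leq \varepsilon$, which yields $\abs{z_u^* - p^{\intercal} s} \leq \varepsilon$.

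I expect the main obstacle to be the $\ell_1$-mass estimate $\norm{\pr'(\alpha, r)}_1 \leq \norm{r}_1$, i.e., arguing that routing the residual through $\pr'(\alpha, \cdot)$ does not amplify its total mass. This is precisely where $d$-regularity enters: it makes $W_s = D^{-1}A$ doubly stochastic (in particular column-stochastic), which is exactly what preserves the $\ell_1$-mass of nonnegative vectors; for a general weighted graph $W_s$ is only row-stochastic and the bound need not hold. A secondary technical point is making the nonnegativity-of-$r$ invariant fully rigorous, since it has to be traced through both the \textsc{Push} subroutine of~\cite{andersen2006local} and the outer loop of Algorithm~\ref{alg:pagerank}.
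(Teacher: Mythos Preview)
Your proof is correct and follows essentially the same route as the paper: reduce via Lemma~\ref{lem:algo-identity} to bounding $\lvert \pr'(\alpha,r)^{\intercal}s\rvert$, bound this by $\norm{\pr'(\alpha,r)}_1$ using $s\in[0,1]^n$, then use the series~\eqref{eq:pr-identity} and stochasticity of $W_s$ to get $\norm{\pr'(\alpha,r)}_1\le\norm{r}_1\le\varepsilon$. The only cosmetic difference is that the paper phrases the mass bound via row-stochasticity of $W_s$ applied to $r^{\intercal}W_s^i$ (implicitly using that $W_s$ is symmetric in the $d$-regular setting), whereas you invoke column-stochasticity directly; your explicit verification that $r\ge 0$ throughout is a welcome addition that the paper leaves implicit.
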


The running time of the algorithm is given in the following lemma and corollary.
\begin{lemma}
\label{lem:vector-drop}
	After each iteration of the while-loop in Algorithm~\ref{alg:pagerank}, $\norm{r}_1$ decreases by a factor of at least
	$\frac{1}{d+1}$ and the number of non-zero entries in $r$ increases by a factor
	of $O(d/\varepsilon)$.
\end{lemma}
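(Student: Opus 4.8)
The plan is to analyze one execution of the while-loop body. Fix an iteration, let $r$ denote the residual at its start, and let $r'$ denote the residual at its end; by the update rules of Algorithm~\ref{alg:pagerank} we have $r' = \sum_{i:r(i)\ne 0} r(i)\,r^{(i)}$, where $(p^{(i)},r^{(i)})$ is the output of the \textsc{Push} subroutine of~\cite{andersen2006local} (in the non-lazy variant detailed in the appendix) run on the source vector $\mathbbm{1}_i$ with teleportation parameter $\alpha=\tfrac1{d+1}$ and precision $\varepsilon$; we may assume $\varepsilon\le\tfrac1{d+1}$, since otherwise the running-time bound of Theorem~\ref{thm:estimate-z} is trivial. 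The first step is to record, by induction on the iteration count, that the residual stays entrywise non-negative throughout the algorithm: it starts as $\mathbbm{1}_u\ge\vec 0$, a \textsc{Push} operation never creates a negative entry, and $r'$ is a non-negative combination of the $r^{(i)}$ because every coefficient $r(i)$ is non-negative. Non-negativity is used in both halves of the proof, as it makes $\norm{\cdot}_1$ additive over the sums involved and rules out cancellation in the supports.

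For the decrease of $\norm{r}_1$: by non-negativity $\norm{r'}_1=\sum_{i:r(i)\ne0} r(i)\,\norm{r^{(i)}}_1$, so it is enough to show $\norm{r^{(i)}}_1\le 1-\alpha=\tfrac{d}{d+1}$ for every $i$ (which is exactly ``$\norm{r}_1$ shrinks by at least a $\tfrac1{d+1}$-fraction''). I would argue this directly from the \textsc{Push} dynamics: a push on a vertex currently holding residual value $\rho\ge0$ deposits $\alpha\rho$ permanently into $p^{(i)}$ and redistributes the remaining $(1-\alpha)\rho$ among that vertex's neighbors, hence decreases $\norm{r^{(i)}}_1$ by exactly $\alpha\rho$. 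Since \textsc{Push} on $\mathbbm{1}_i$ begins with a push on $i$ itself, whose residual value is $1$, the first push alone drops $\norm{r^{(i)}}_1$ from $1$ to $1-\alpha$, and it is non-increasing afterwards; thus $\norm{r^{(i)}}_1\le 1-\alpha$, and summing gives $\norm{r'}_1\le(1-\alpha)\sum_{i:r(i)\ne0}r(i)=\tfrac{d}{d+1}\norm{r}_1$. (Alternatively one can take inner products of the \textsc{Push} invariant $p^{(i)}+\pr'(\alpha,r^{(i)})=\pr'(\alpha,\mathbbm{1}_i)$ with the all-ones vector and use that $W_s=D^{-1}A$ is column-stochastic in a $d$-regular graph to get $\norm{p^{(i)}}_1+\norm{r^{(i)}}_1=1$, combined with $\norm{p^{(i)}}_1\ge\alpha$.) Note $\norm{r^{(i)}}_1$ is only shown to be bounded away from $1$ by $\tfrac1{d+1}$, not to be small; this is what compounds over the $O(d\log(1/\varepsilon))$ iterations.

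For the growth of the support: non-negativity gives $\mathrm{supp}(r')\subseteq\bigcup_{i:r(i)\ne0}\mathrm{supp}(r^{(i)})$, so the number of non-zero entries of $r'$ is at most $\lvert\mathrm{supp}(r)\rvert\cdot\max_i\lvert\mathrm{supp}(r^{(i)})\rvert$, and it remains to bound the support of a single \textsc{Push} output. I would use the standard accounting: every push acts on a vertex whose residual exceeds the threshold $\varepsilon d$ (in a $d$-regular graph), hence deposits more than $\alpha\varepsilon d$ mass into $p^{(i)}$; since $\norm{p^{(i)}}_1\le\norm{\pr'(\alpha,\mathbbm{1}_i)}_1=1$, there are fewer than $\tfrac1{\alpha\varepsilon d}$ pushes, and each push enlarges the support by at most the $d$ neighbours of the pushed vertex, so $\lvert\mathrm{supp}(r^{(i)})\rvert\le 1+\tfrac1{\alpha\varepsilon}=O(d/\varepsilon)$ using $\alpha=\tfrac1{d+1}$ and $\varepsilon<1$. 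This is the claimed $O(d/\varepsilon)$-factor blow-up.

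The only step carrying real content is the first-push observation forcing $\norm{r^{(i)}}_1\le 1-\tfrac1{d+1}$; the rest is bookkeeping: quoting the \textsc{Push} guarantees of~\cite{andersen2006local} in the non-lazy form set up in the appendix (the invariant and the $O(1/(\alpha\varepsilon))$ bound on the number of pushes, hence on the support), propagating non-negativity, and disposing of the harmless case $\varepsilon>\tfrac1{d+1}$ so that the first push is guaranteed to fire. I expect the formal proof to be short once these pieces are assembled.
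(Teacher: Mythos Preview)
Your proposal is correct and follows essentially the same approach as the paper: both arguments hinge on the observation that the very first \textsc{Push} on $\mathbbm{1}_i$ removes an $\alpha=\tfrac{1}{d+1}$ fraction of the residual mass, and both derive the $O(d/\varepsilon)$ support blow-up from the Andersen--Chung--Lang accounting. You are somewhat more careful than the paper---you make the non-negativity invariant explicit (the paper leaves it implicit), you spell out the accounting argument for the support bound (the paper simply cites~\cite{andersen2006local}), and you note the harmless case $\varepsilon>\tfrac{1}{d+1}$ (which the paper actually handles via the ``or $r(u)=1$'' clause in Algorithm~\ref{alg:local}, so the first push always fires)---but the skeleton is identical.
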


\begin{corollary}
\label{cor:running-time}
	We get that $\norm{r}_1 \leq \varepsilon$ in time
	$(d/\varepsilon)^{O(d \lg(1/\varepsilon))}$, which is independent of~$n$.
\end{corollary}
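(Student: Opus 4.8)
The plan is to derive Corollary~\ref{cor:running-time} purely from the two quantitative facts in Lemma~\ref{lem:vector-drop} — the geometric decay of $\norm{r}_1$ and the bounded growth of its support — together with the observation that the resulting running-time sum is dominated by its last term. Write $r^{(t)}$ for the residual vector after $t$ iterations of the while-loop, so $r^{(0)}=\mathbbm{1}_u$ and $\norm{r^{(0)}}_1=1$. By Lemma~\ref{lem:vector-drop}, $\norm{r^{(t+1)}}_1\le (1-\tfrac1{d+1})\,\norm{r^{(t)}}_1$, hence $\norm{r^{(t)}}_1\le (1-\tfrac1{d+1})^t$. Using $\ln(1+\tfrac1d)\ge\tfrac1{d+1}$ this is at most $\varepsilon$ as soon as $t\ge (d+1)\ln(1/\varepsilon)$, so the loop terminates after $T=O(d\lg(1/\varepsilon))$ iterations.

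Next I would track the support. Again by Lemma~\ref{lem:vector-drop}, the number of non-zero entries of $r$ grows by a factor of at most $O(d/\varepsilon)$ per iteration, so $\abs{\mathrm{supp}(r^{(T)})}\le (O(d/\varepsilon))^T = (d/\varepsilon)^{O(d\lg(1/\varepsilon))}$ (absorbing the constant inside the base into the exponent, which is legitimate since $d/\varepsilon$ can be assumed $\ge 2$, else $T$ is constant). Since the support is non-decreasing across iterations, this bounds $\abs{\mathrm{supp}(r^{(t)})}$ for every $t\le T$ as well. For the per-iteration work, iteration $t$ makes one call to the local personalized-PageRank routine of~\cite{andersen2006local} with $\alpha=\tfrac1{d+1}$ and the fixed precision used in Algorithm~\ref{alg:pagerank} for each of the $\abs{\mathrm{supp}(r^{(t)})}$ non-zero coordinates; each such call runs in $\poly(d,1/\varepsilon)$ time and touches only $\poly(d,1/\varepsilon)$ vertices, independently of $n$, and forming $p\gets p+\sum_i r(i)p^{(i)}$ and $r\gets\sum_i r(i)r^{(i)}$ costs time proportional to the total support touched. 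Hence iteration $t$ costs $\abs{\mathrm{supp}(r^{(t)})}\cdot\poly(d,1/\varepsilon)$, and summing over the $T$ iterations (using monotonicity of the supports) gives total time $T\cdot\abs{\mathrm{supp}(r^{(T)})}\cdot\poly(d,1/\varepsilon) = (d/\varepsilon)^{O(d\lg(1/\varepsilon))}$, with no dependence on $n$.

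The step I expect to need the most care — and the one I would be careful to get right in the full write-up — is the interface with~\cite{andersen2006local}: its residual guarantee is most naturally stated in the normalized $\ell_\infty$ sense, whereas the contraction used above is the $\ell_1$ bound $\norm{r^{(i)}}_1\le 1-\alpha$ for the unit input $\mathbbm{1}_i$. This follows because the \textsc{Push} operation preserves $\norm{p}_1+\norm{r}_1$, so once the first push from $i$ has occurred (which it does whenever the precision threshold is below $1/d$) we have $\norm{p^{(i)}}_1\ge\alpha$ and therefore $\norm{r^{(i)}}_1\le 1-\alpha$; by linearity of \textsc{Push} in its input this lifts to $r^{(t+1)}=\sum_i r^{(t)}(i)\,r^{(i)}$, giving exactly the contraction factor $1-\tfrac1{d+1}$ used above. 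Everything else is routine geometric-series bookkeeping.
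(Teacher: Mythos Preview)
Your proposal is correct and follows essentially the same approach as the paper: bound the number of while-loop iterations by $O(d\lg(1/\varepsilon))$ via the geometric decay of $\norm{r}_1$ from Lemma~\ref{lem:vector-drop}, bound the support of $r$ by $(d/\varepsilon)^{O(d\lg(1/\varepsilon))}$ via the multiplicative growth from the same lemma, and multiply by the $O(d/\varepsilon)$ per-call cost of the local PageRank routine (the paper cites Lemma~\ref{lem:performance-local} for this, where you write $\poly(d,1/\varepsilon)$). Your final paragraph on the \textsc{Push} interface is really a re-derivation of the first claim of Lemma~\ref{lem:vector-drop} rather than part of the corollary's proof; since the corollary is allowed to invoke that lemma directly, this material is redundant here.
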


\subsection{Estimating Measures}
\label{sec:estimate-measures-given-s}
Now we give a short sketch of how to estimate the
measures from Table~\ref{tab:measures}, assuming oracle access to the innate
opinions $s$. We present all the details in the appendix. %

First, for computing the sum of expressed opinions we use the well-known fact
that $\SumOpinions = \sum_{u\in V} z_u^* = \sum_{u\in V} s_u$. Since we have
oracle access to $s$, we can thus focus on estimating $\sum_{u\in V} s_u$ which
can be done by randomly sampling $O(\varepsilon^{-2} \lg \delta^{-1})$
vertices~$U'$ and then returning the estimate~$\frac{n}{\abs{U'}} \sum_{u\in U'}
s_u$. Then a standard argument for estimating sums with bounded entries gives
that our approximation error is $\pm\varepsilon$. The quantities
$\avgExpressed$ and $\norm{s}_2^2$ can be estimated similarly.

For all other quantities, we require access to some expressed equilibrium
opinions~$z_u^*$. We obtain these opinions using
Proposition~\ref{pro:solver} and then our error bounds follow a similar argument
as above. However, in our analysis, we have to ensure that the error in our
estimates for $z_u^*$ does not compound and we have to take a union bound to
ensure that all estimates $z_u^*$ satisfy the error guarantees from the
proposition. Let us take the algorithm for	estimating $\Controversy=\sum_{u\in V}(z_u^*)^2$ as an example. %
We set 
         $\varepsilon_1 = \frac{\varepsilon}{6}$,  $r_1=O(\bar{\kappa} \log(\varepsilon_1^{-1} n\bar{\kappa} 
				(\max_u w_u) ))$, $\delta_1=\frac{1}{r_1}=\frac{\delta}{2C}$,
	$\varepsilon_2 = \frac{\varepsilon}{2}$, $\delta_2 = \frac{\delta}{2}$ and $C = \varepsilon_2^{-2} \log \delta_2^{-1}$.
Our algorithm first samples $C$~vertices (i.e., $i_1, \dots, i_{C}$) from $V$ uniformly at random, %
and obtains $\tilde{z}^*_{i_1}, \dots, \tilde{z}^*_{i_{C}}$ (using Proposition~\ref{pro:solver} with error parameter $\varepsilon_1$ and success probability $1-\delta_1$). Then it returns $\frac{n}{C} \sum_{j=1}^{C} (\tilde{z}^*_{i_j})^2$.  We can then show that the estimate approximates $\Controversy$ with additive error $\pm\varepsilon n$ with success probability $1-\delta$.

\section{Access to Oracle for Expressed Opinions}
\label{sec:oracle-expressed-opinions}

In this section, we assume that we have access to an oracle which, given a
vertex~$u$, returns its expressed opinion $z_u^*$ in time $O(1)$.

\subsection{Estimating Innate Opinions~$s_u$}
Next, our goal is to estimate entries $s_u$. To this end, note that $s =
(I+L)z^*$ and, hence, we have that
$s_u = (1+w_u)z_u^* - \sum_{(u,v)\in E} w_{uv} z_v^*$.
Observe that by our assumptions, we can compute the quantity $(1+w_u)z_u^*$
exactly using our oracle access. Therefore, our main challenge in this
subsection is to efficiently approximate
$S_u := \sum_{(u,v)\in E} w_{uv} z_v^*$.

In the following lemma, we give the guarantees for an algorithm which, if the
unweighted degree~$d_u$ of~$u$ is small, computes $S_u$ exactly in time~$O(d_u)$.
Otherwise, we sample a set $U'$ consisting of
$O(w_u^2\varepsilon^{-2})$ neighbors of $u$ which were sampled
with probabilities $\frac{w_{uv}}{w_u}$ and show that the sum
$\frac{w_u}{\abs{U'}} \sum_{v\in U'} z_v^*$ is an estimator for $S_u$ and  has error
$\pm\varepsilon$ with probability $1-\delta$.
\begin{lemma}
\label{lem:estimate-s}
	Let $\varepsilon,\delta\in(0,1)$. Then with probability at least $1-\delta$
	we can return an estimate of $s_u$ with additive error $\pm\varepsilon$ in
	time $O(\min\{d_u,w_u^2\varepsilon^{-2}\lg\delta^{-1}\})$.
\end{lemma}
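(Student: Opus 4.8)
The plan is to split into two cases depending on the unweighted degree $d_u$, and to handle the estimation of $S_u := \sum_{(u,v)\in E} w_{uv} z_v^*$ since, as noted, $s_u = (1+w_u)z_u^* - S_u$ and the first term is computed exactly in $O(1)$ with the oracle.

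\textbf{Case 1: $d_u$ small.} If $d_u \leq w_u^2\varepsilon^{-2}\lg\delta^{-1}$, we simply enumerate all neighbors $v$ of $u$, query the oracle for each $z_v^*$ and the graph access for each $w_{uv}$, and compute $S_u$ exactly in time $O(d_u)$. Then $s_u = (1+w_u)z_u^* - S_u$ is exact, so the claimed error and running-time bounds hold trivially.

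\textbf{Case 2: $d_u$ large.} Otherwise we sample $N = O(w_u^2\varepsilon^{-2}\lg\delta^{-1})$ neighbors $v_1,\dots,v_N$ of $u$ independently, each with probability $\frac{w_{uv}}{w_u}$ (this is exactly the weighted-neighbor sampling primitive provided by the graph oracle), and return the estimate $\hat{S}_u = \frac{w_u}{N}\sum_{j=1}^N z_{v_j}^*$. First I would check unbiasedness: $\Exp{z_{v_j}^*} = \sum_{(u,v)\in E}\frac{w_{uv}}{w_u} z_v^* = \frac{1}{w_u}S_u$, so $\Exp{\hat{S}_u} = S_u$. Since each $z_v^*\in[0,1]$, each summand $\frac{w_u}{N} z_{v_j}^*$ lies in $[0, w_u/N]$, so a Hoeffding bound gives $\Prob{|\hat{S}_u - S_u| > \varepsilon} \leq 2\exp(-2\varepsilon^2 / (N \cdot (w_u/N)^2)) = 2\exp(-2\varepsilon^2 N / w_u^2)$; choosing $N = \Theta(w_u^2\varepsilon^{-2}\lg\delta^{-1})$ makes this at most $\delta$. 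Each sample costs $O(1)$ (one weighted-neighbor query plus one oracle query), so the total time is $O(w_u^2\varepsilon^{-2}\lg\delta^{-1})$, and $s_u = (1+w_u)z_u^* - \hat{S}_u$ inherits the additive error $\pm\varepsilon$ since the first term is exact.

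Combining the two cases, we always run in time $O(\min\{d_u,\, w_u^2\varepsilon^{-2}\lg\delta^{-1}\})$ and succeed with probability $1-\delta$. The main subtlety — though it is minor — is ensuring the Hoeffding bound is applied with the correct range for the summands: the variance-type argument must use the deterministic bound $z_v^*\in[0,1]$ rather than any tighter data-dependent bound, which is what forces the $w_u^2$ dependence. (One could alternatively phrase Case 2 via a Chernoff/additive-Hoeffding bound on the average $\frac{1}{N}\sum_j z_{v_j}^*$ approximating $S_u/w_u$ to within $\varepsilon/w_u$; this is equivalent.) A secondary point is that we do not actually need the Beretta--Tetek collision-based refinement for this lemma's stated guarantee — that technique is what yields the sharper multiplicative guarantees mentioned in the introduction, but the plain sampling estimator above already suffices for the additive $\pm\varepsilon$ bound claimed here.
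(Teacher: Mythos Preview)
Your proof is correct and follows essentially the same approach as the paper: decompose $s_u = (1+w_u)z_u^* - S_u$, compute $S_u$ exactly in $O(d_u)$ when the degree is small, and otherwise use the weighted-neighbor sampling estimator $\hat{S}_u = \frac{w_u}{N}\sum_j z_{v_j}^*$. The only minor difference is that the paper bounds the deviation via Chebyshev's inequality (using $\Var{X_1}\leq S_u/w_u$) to get constant success probability and then applies the median trick, whereas you apply Hoeffding's inequality directly to obtain the $1-\delta$ guarantee in one step; both yield the same $O(w_u^2\varepsilon^{-2}\log\delta^{-1})$ sample bound.
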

Note that the running time of the estimate is highly efficient in practice,
as in real graphs most vertices have very small degrees.

Next, we show that if we make some mild assumptions on the expressed
opinions~$z_u^*$, then we can significantly reduce the time required to
estimate~$S_u$. In particular, for unweighted graphs we obtain almost a
quadratic improvement for the second term of the running time. 
Furthermore, we even obtain a \emph{multiplicative} error for estimating $S_u$.
\begin{proposition}
\label{prop:estimate-S}
	Let $\varepsilon,\delta\in(0,1)$ and set $S_u = \sum_{(u,v)\in E} z_v^* w_{uv}$.
	Suppose $z_u^* \in [c, 1)$ where $c\in (0,1)$ is a constant.
	Then with probability at least $1-\delta$ we can return an estimate of $S_u$
	with $(1\pm\varepsilon)$-multiplicative error in
	time $O(\min\{d_u,d_u^{1/2}\varepsilon^{-1}\lg\delta^{-1}\})$. 
\end{proposition}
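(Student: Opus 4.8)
The plan is to adapt the collision-based weighted-sum estimator of Beretta and Tetek~\cite{BerettaT22} to our setting, exploiting the fact that the expressed opinions $z_v^*$ are bounded below by a constant~$c$, which turns the additive guarantee of Lemma~\ref{lem:estimate-s} into a multiplicative one. First, if $d_u$ is small (say $d_u \le d_u^{1/2}\varepsilon^{-1}\lg\delta^{-1}$), we simply enumerate all neighbors of~$u$, query $z_v^*$ for each, and compute $S_u = \sum_{(u,v)\in E} w_{uv} z_v^*$ exactly in time $O(d_u)$; so from now on assume $d_u$ is large. In the interesting regime I would draw a multiset $U'$ of neighbors, each sampled independently with probability $w_{uv}/w_u$, and form the Horvitz--Thompson-style estimator $\hat{S}_u = \frac{w_u}{|U'|}\sum_{v\in U'} z_v^*$, exactly as in Lemma~\ref{lem:estimate-s}. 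This estimator is unbiased with $\Exp{\hat{S}_u} = S_u$, and since each term lies in $[0,w_u]$ its variance is $O(w_u^2/|U'|)$.

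The key observation that buys the quadratic improvement is the lower bound: because every sampled $z_v^*\ge c$, we have $S_u = \sum_{(u,v)\in E} w_{uv} z_v^* \ge c\, w_u$ in the unweighted case $w_{uv}=1$, $w_u = d_u$, so $S_u = \Omega(d_u)$. A relative error of $\varepsilon$ therefore corresponds to an additive error of $\varepsilon S_u = \Omega(\varepsilon d_u)$, which is a much weaker requirement than a fixed additive $\varepsilon$. Plugging this into the variance bound and Chebyshev (boosted to confidence $1-\delta$ by taking the median of $O(\lg\delta^{-1})$ independent copies, or by a Bernstein-type bound directly), the number of samples needed to achieve $|\hat{S}_u - S_u|\le \varepsilon S_u$ is $O\!\big(\frac{\Var{\,\cdot\,}}{\varepsilon^2 S_u^2}\big) = O\!\big(\frac{d_u^2}{\varepsilon^2 d_u^2} \cdot \text{(something)}\big)$; to get the claimed $d_u^{1/2}\varepsilon^{-1}$ one uses the sharper second-moment structure of the collision estimator of~\cite{BerettaT22} — namely that after sampling $k$ neighbors the relevant quantity to estimate is built from pairwise collisions, whose count concentrates once $k = \Omega(\sqrt{d_u})$, and the relative error scales like $1/(\varepsilon\sqrt{k})$ rather than $1/(\varepsilon\sqrt{d_u})$ of the naive estimator. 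Taking $k = O(d_u^{1/2}\varepsilon^{-1}\lg\delta^{-1})$ and querying $z_v^*$ for each sampled $v$ yields the stated running time, with the $\min$ accounting for the small-degree branch.

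The main obstacle, and the part that needs the most care, is verifying that the Beretta--Tetek collision machinery still goes through when the per-element weights $z_v^*$ are not all equal to~$1$ but are arbitrary reals in $[c,1)$: their original analysis estimates $\sum_v f(v)$ for nonnegative $f$, and one must check that the collision-counting step — which samples neighbors proportional to $w_{uv}$ and then reweights by the number of times each neighbor is hit — remains unbiased and low-variance when composed with the extra factor $z_v^*$. I expect this reduces to the observation that $\sum_{(u,v)\in E} w_{uv} z_v^*$ is exactly a weighted sum of the form handled by~\cite{BerettaT22} with item-weights $w_{uv} z_v^*$ and sampling access proportional to $w_{uv}$ (not to $w_{uv} z_v^*$), which is the ``sampling from a different distribution than the one you sum'' variant; the lower bound $z_v^* \ge c$ is precisely what controls the mismatch ratio between the sampling distribution and the target sum, keeping it bounded by $1/c = O(1)$ and hence not affecting the asymptotics. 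Once that correspondence is set up cleanly, the error and running-time bounds follow from the cited guarantees together with the $S_u = \Omega(d_u)$ lower bound and a union bound over the $O(\lg\delta^{-1})$ repetitions.
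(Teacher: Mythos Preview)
Your proposal is correct and follows essentially the same route as the paper: exact enumeration for small $d_u$, the Beretta--Tetek collision estimator for large $d_u$, the lower bound $z_v^*\ge c$ to get $S_u\ge c\,w_u$ in the Chebyshev denominator, and the median trick for the $\lg\delta^{-1}$ factor. The paper differs only in that it writes the collision estimator explicitly as $\tilde S_u = w_u^2\binom{k}{2}^{-1}\sum_{t\in T}\binom{c_t}{2}z_t^*/w_{ut}$ and carries out the second-moment computation from scratch (handling the dependence between $Y_{ij}$ and $Y_{ik}$), rather than appealing to~\cite{BerettaT22} as a black box; your identification of the ``sampling-distribution mismatch controlled by $1/c$'' is exactly the mechanism that makes that computation go through.
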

This result is obtained by generalizing a result from Beretta and Tetek~\cite{BerettaT22}
and considering a slightly more complicated estimator than above. The new
estimator also samples a (multi-)set of neighbors~$U'$ of~$u$, but it
additionally takes into account \emph{collisions}. Let
$k=O(d_u^{1/2}\varepsilon^{-1}\log\delta^{-1})$ and let $v_1, \dots, v_k$ be
$k$~vertices picked independently at random and with replacement from all neighbors~$v$ of~$u$ with
probabilities proportional to their weights, i.e., $w_{uv}/w_u$.
Let $T$ be the \emph{set} of sampled vertices (i.e., while $U'$ may contain some
vertices multiple times, $T$ does not).  For each $t\in T$ define $c_t$ to be
the number of times vertex $t$ is sampled.
Our estimator in Proposition \ref{prop:estimate-S} is defined as follows:
\begin{equation*}
	\tilde{S}_u=w_u^2 \cdot {\binom{k}{2}}^{-1} \cdot \sum_{t \in T} \frac{\binom{c_t}{2}\cdot z^*_t}{w_{ut}}.
\end{equation*}

Using the proposition above, we obtain the following corollary for estimating
$s_u$ highly efficiently, even with multiplicative error.
\begin{corollary}
\label{cor:estimate-s-new}
	Let $\varepsilon,\delta\in(0,1)$ and set $S_u = \sum_{(u,v)\in E} z_v^* w_{uv}$. Suppose $z_u^* \in [c, 1)$ where $c\in (0,1)$ is a constant. 
        If $S_u\leq 1$, then with probability at least $1-\delta$
	we can return an estimate of $s_u$ with additive error $\pm\varepsilon$ in
	time $O(\min\{d_u,d_u^{1/2}\varepsilon^{-1}\lg\delta^{-1}\})$.
        If $S_u \leq \frac{(1+w_u)z_u^*}{2}$, then  with probability at least $1-\delta$
	we can return an estimate of $s_u$ with $(1\pm\varepsilon)$-multiplicative error in
	time $O(\min\{d_u,d_u^{1/2}\varepsilon^{-1}\lg\delta^{-1}\})$.
\end{corollary}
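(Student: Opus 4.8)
The plan is to combine the identity $s_u = (1+w_u)z_u^* - S_u$ with the multiplicative estimator $\tilde S_u$ from Proposition~\ref{prop:estimate-S}. Recall that our oracle gives us $(1+w_u)z_u^*$ exactly (we know $w_u$ by query access and $z_u^*$ by the oracle), so the only source of error is in estimating $S_u$. First I would invoke Proposition~\ref{prop:estimate-S} to obtain $\tilde S_u$ with $\abs{\tilde S_u - S_u} \leq \varepsilon' S_u$ in time $O(\min\{d_u, d_u^{1/2}(\varepsilon')^{-1}\lg\delta^{-1}\})$, with probability $1-\delta$, for a parameter $\varepsilon'$ to be chosen. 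Note the hypothesis $z_u^* \in [c,1)$ is exactly what Proposition~\ref{prop:estimate-S} requires, so this step is immediate.

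For the \textbf{additive-error} claim, set the estimate $\tilde s_u := (1+w_u)z_u^* - \tilde S_u$. Then
\begin{equation*}
  \abs{\tilde s_u - s_u} = \abs{S_u - \tilde S_u} \leq \varepsilon' S_u \leq \varepsilon'
\end{equation*}
using $S_u \leq 1$. Hence choosing $\varepsilon' = \varepsilon$ gives additive error $\pm\varepsilon$ in the claimed running time, and the success probability is inherited from the proposition.

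For the \textbf{multiplicative-error} claim, I would again set $\tilde s_u := (1+w_u)z_u^* - \tilde S_u$ and bound
\begin{equation*}
  \abs{\tilde s_u - s_u} = \abs{S_u - \tilde S_u} \leq \varepsilon' S_u \leq \varepsilon' \cdot \frac{(1+w_u)z_u^*}{2}.
\end{equation*}
The point is that $s_u = (1+w_u)z_u^* - S_u \geq (1+w_u)z_u^* - \frac{(1+w_u)z_u^*}{2} = \frac{(1+w_u)z_u^*}{2}$, so the absolute error above is at most $\varepsilon' s_u$; taking $\varepsilon' = \varepsilon$ yields the $(1\pm\varepsilon)$-multiplicative guarantee. (One should also note $s_u > 0$ here, so the multiplicative statement is meaningful.) The running time and failure probability are again exactly those of Proposition~\ref{prop:estimate-S} with $\varepsilon' = \varepsilon$.

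The \textbf{main subtlety} — though not really an obstacle — is making sure the case assumption is used in the right place: the condition $S_u \leq 1$ (resp.\ $S_u \leq \frac{(1+w_u)z_u^*}{2}$) is what converts the relative error in $\tilde S_u$ into an additive (resp.\ relative) error in $\tilde s_u$, and in the multiplicative case one must additionally observe that the same inequality lower-bounds $s_u$ by a positive quantity so the bound $\varepsilon' S_u$ can be charged against $s_u$. Everything else is a direct substitution into Proposition~\ref{prop:estimate-S}; no new probabilistic analysis is needed. If a cleaner relative bound on $S_u$ itself is desired, one could alternatively note $S_u/s_u \le 1$ under the second hypothesis and push $\varepsilon'$ accordingly, but the computation above already suffices.
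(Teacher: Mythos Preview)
Your proposal is correct and follows essentially the same approach as the paper: invoke Proposition~\ref{prop:estimate-S} to get a multiplicative estimate $\tilde S_u$, set $\tilde s_u = (1+w_u)z_u^* - \tilde S_u$, and then use the respective assumption on $S_u$ to convert $\varepsilon S_u$ into either an additive bound $\varepsilon$ or a multiplicative bound $\varepsilon s_u$ via $s_u \geq \tfrac{(1+w_u)z_u^*}{2}$. Your presentation is in fact slightly cleaner than the paper's, which splits the multiplicative case into two one-sided inequalities rather than directly bounding the absolute value as you do.
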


\subsection{Estimating Measures}
\label{sec:estimate-measures-given-z}

Now we give a short sketch of how to estimate the
measures from Table~\ref{tab:measures}, assuming oracle access to the expressed
opinions $z^*$. See the appendix %
for details. 

First, we note that we can calculate the sum of expressed
opinions~$\SumOpinions$, the polarization~$\Polarization$ and the
controversy~$\Controversy$ in the same way as approximating $\SumOpinions$ in
Section~\ref{sec:estimate-measures-given-s}.

For all other quantities, we require access to some innate
opinions~$s_u$, which we obtain via Lemma~\ref{lem:opinion-sampler}. This allows
us to estimate all other quantities.

\begin{lemma}
\label{lem:opinion-sampler}
	Let $\varepsilon,\delta \in (0,1)$ and $C\in\mathbb{N}$. Then there
	exists an algorithm which in time $O(C \bar{d} \lg \delta^{-1})$
	samples a (multi-)set of vertices~$S$ uniformly at random from~$V$ with
	$\abs{S}=C$ and it returns
	estimated innate opinions $\tilde{s}_u$ for all $u\in S$ such that with
	probability $1-\delta$ it holds that 
	$\abs{s_u - \tilde{s}_u} \leq \varepsilon$ for all $u\in S$.
\end{lemma}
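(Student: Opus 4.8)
The plan is to combine uniform vertex sampling with the per-node innate-opinion estimator of Lemma~\ref{lem:estimate-s}, and then control the overall failure probability by a union bound together with a standard median-amplification trick. First I would sample a multiset $S$ of $C$ vertices uniformly and independently at random from $V$; by the query-access assumption this costs $O(C)$ time. For each sampled $u\in S$ I invoke the estimator from Lemma~\ref{lem:estimate-s} with error parameter $\varepsilon$ and a failure probability $\delta'$ to be fixed, obtaining $\tilde s_u$ with $\abs{s_u-\tilde s_u}\le\varepsilon$ except with probability $\delta'$. Recall from Lemma~\ref{lem:estimate-s} that a single such estimate costs $O(\min\{d_u, w_u^2\varepsilon^{-2}\lg(\delta')^{-1}\})$ time; in the unweighted setting of this lemma $w_u=d_u$, and the relevant bound is simply $O(d_u)$, since computing $S_u=\sum_{(u,v)\in E} z_v^*$ exactly by enumerating the $d_u$ neighbors and querying the $z^*$-oracle already gives the answer with no error. (If one prefers to keep the sampling-based estimator, one uses $O(d_u^{1/2}\cdot\text{stuff})\le O(d_u)$ for $d_u$ above the relevant threshold; either way the per-vertex cost is $O(d_u\lg(\delta')^{-1})$ at most.)

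The second step is the accounting of the total running time. Over the random choice of $S$, the expected value of $\sum_{u\in S} d_u$ is $C\cdot\frac{1}{n}\sum_{v\in V} d_v = C\bar d$, since a uniformly random vertex has expected degree $\bar d = 2\abs{E}/\abs{V}$. Hence the expected total work is $O(C\bar d\lg(\delta')^{-1})$. To turn this expectation bound into a worst-case (high-probability) running-time guarantee of the form stated, I would either (i) state the bound in expectation and note it concentrates, or (ii) run the whole procedure with a time budget of, say, $10\,C\bar d\lg(\delta')^{-1}$ and abort-and-restart if exceeded; by Markov's inequality each attempt finishes within budget with probability $\ge 9/10$, so $O(\lg\delta^{-1})$ attempts suffice to finish within $O(C\bar d\lg\delta^{-1})$ time except with probability $\delta/2$. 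The $\lg\delta^{-1}$ factor in the claimed running time is exactly what absorbs this amplification plus the per-call $\lg(\delta')^{-1}$.

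For correctness, set $\delta' = \frac{\delta}{2C}$ so that $\lg(\delta')^{-1} = O(\lg(C/\delta)) = O(\lg\delta^{-1} + \lg C)$; by a union bound over the $C$ sampled vertices, with probability at least $1-C\delta' = 1-\delta/2$ every estimate satisfies $\abs{s_u-\tilde s_u}\le\varepsilon$. Combining with the $\delta/2$ failure probability from the running-time amplification gives overall success probability $1-\delta$, and the running time $O(C\bar d\lg\delta^{-1})$ as claimed. The only mildly delicate point — and the step I expect to need the most care — is the running-time analysis: the cost $\sum_{u\in S} d_u$ is a random quantity, so one must be explicit about whether the stated bound is in expectation or with high probability, and if the latter, fold in the abort-restart (or a direct concentration argument using that degrees are bounded by $n$, though that bound is too weak; Markov with restarting is cleanest). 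Everything else is a routine composition of Lemma~\ref{lem:estimate-s} with uniform sampling and a union bound.
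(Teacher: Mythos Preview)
Your proposal is correct and follows essentially the same approach as the paper: sample $C$ vertices uniformly, estimate each $s_u$ via Lemma~\ref{lem:estimate-s} using the $O(d_u)$ exact-enumeration branch, union-bound the per-vertex failures with $\delta'=\delta/(2C)$, and handle the random running time $\sum_{u\in S} d_u$ (whose expectation is $C\bar d$) by Markov's inequality plus abort-and-restart for $O(\lg\delta^{-1})$ rounds. The only minor imprecision is your remark that the single $\lg\delta^{-1}$ factor ``absorbs'' both the restart amplification and a per-call $\lg(\delta')^{-1}$; in fact the paper (and your own primary argument) uses the exact $O(d_u)$ branch, which carries no $\lg(\delta')^{-1}$ cost, so only the restart factor contributes and the bound $O(C\bar d\lg\delta^{-1})$ holds cleanly.
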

\begin{proof}
	\emph{Step~1:} We introduce an \emph{opinion sampler} which samples a (multi-)set of
	vertices~$S$ uniformly at random from~$V$ with $\abs{S}=C$.  With
	probability at least $9/10$ it returns estimated innate opinions~$\tilde{s}_u$
	such that $\abs{s_u - \tilde{s}_u} \leq \varepsilon$ for all $u\in S$ and its
	running time is at most~$10T$, where $T=O(C \bar{d})$ is defined below.

	The opinion sampler samples $C$~vertices $i_1, \dots, i_{C}$ from $V$
	uniformly at random, and obtains $\tilde{s}_{i_1}, \dots, \tilde{s}_{i_{C}}$
	using Lemma~\ref{lem:estimate-s} with error parameter $\varepsilon$ and
	success probability $1-\frac{\delta}{2C}$. Observe that by a union bound it
	holds that $\abs{s_u - \tilde{s}_u} \leq \varepsilon$ for all $u\in S$
	with probability $1 - C \cdot \frac{\delta}{2C} = 1 - \frac{\delta}{2}$.

	Next, consider the running time of the opinion sampler. According to
	Lemma~\ref{lem:estimate-s}, for each $i_j \in S$, estimating $s_{i_j}$ takes
	time $O(d_{i_j})$. Note that for all $j \in [C]$,
	$\Prob{i_j=s}=\frac{1}{n}$ where $s\in V$, and thus the expected time to compute
	$\tilde{s}_{i_j}$ is $\Exp{d_{i_j}}=\frac{1}{n}\sum_{s\in V}d_s = \bar{d}$.
	Hence the expected running time of the opinion sampler is
	$T := O(C \bar{d})$. Now Markov's inequality implies that the probability
	that the opinion sampler has running time at most $10T$ is at least $9/10$.
	
	\emph{Step~2:} We repeatedly use the opinion sampler to prove the lemma. We do this
	as follows. We run the opinion sampler from above and if it finishes within
	time~$10T$, we return the estimated opinions it computed.  Otherwise, we
	restart this procedure and re-run the opinion sampler from scratch.  We
	perform the restarting procedure at most $\tau$~times.  Note that this
	procedure never runs for more than $O(\tau T)$~time.

	Observe that all $\tau$~runs of the opinion sampler require more than
	$10T$~time with probability at most $0.1^{\tau} \leq \frac{\delta}{2}$ for
	$\tau=O(\lg\delta^{-1})$. Furthermore, if the opinion sampler finishes
	then with probability at least $1-\frac{\delta}{2}$ all estimated innate
	opinions satisfy the guarantees from the lemma. Plugging in the parameters
	from above, we get that the algorithm deterministically runs in time
	$O(C \bar{d} \lg \delta^{-1})$ and by a union bound it satisfies the
	guarantees for the innate opinions with probability at least $1-\delta$.
\end{proof}

Let us take the algorithm for estimating $\norm{s}_2^2=\sum_{u\in V} s_u^2$ as
an example.  We set 
        $\varepsilon_1 = \frac{\varepsilon}{6}$, $\delta_1 = \frac{\delta}{2}$, 
	$\varepsilon_2 = \frac{\varepsilon}{2}$, $\delta_2 = \frac{\delta}{2}$ and $C = \varepsilon_2^{-2} \log \delta_2^{-1} = O(\varepsilon^{-2} \log \delta^{-1})$.
	According to Lemma~\ref{lem:opinion-sampler}, in time $O(C \bar{d} \lg \delta^{-1})$, we can sample a (multi-)set of vertices~$S=\{i_1, i_2, \dots, i_C\}$ uniformly at random from~$V$ and obtain
	estimated innate opinions $\tilde{s}_u$ for all $u\in S$ such that with
	probability $1-\delta_1$ it holds that 
	$\abs{s_u - \tilde{s}_u} \leq \varepsilon_1$ for all $u\in S$. We return $\frac{n}{C} \sum_{u\in S} \tilde{s}_{u}^2$.
Obviously, the running time is 
$O(C \bar{d} \lg \delta^{-1}) = O(\varepsilon^{-2} \bar{d} \lg^2 \delta^{-1})$.
The error guarantees are shown in the appendix. %

\section{Experiments}
\label{sec:experiments}
We experimentally evaluate our algorithms.  We run our experiments on a MacBook
Pro with a 2~GHz Quad-Core Intel Core~i5 and 16 GB RAM. We implement
Algorithm~\ref{alg:random-walks} in C++11 and perform the random walks in
parallel; all other algorithms are implemented in Python.  Our source code is
available online~\cite{code}.

The focus of our experiments is to assess the approximation quality and the
running times of our algorithms. As a baseline, we compare against the near-linear time
algorithm by Xu et al.~\cite{xu2021fast} which is
available on GitHub~\cite{xu2021fastcode}.
We run their algorithm with $\varepsilon=10^{-6}$ and 100~iterations.
We do not compare against an exact baseline, since the experiments
in~\cite{xu2021fast} show that their algorithm has a negligible error in
practice and since the exact computation is infeasible for our large datasets
(in the experiments of~\cite{xu2021fast}, their algorithm's relative error is
 less than $10^{-6}$ and matrix inversion does not scale to graphs with more
 than \numprint{56000}~nodes).

We use real-world datasets from KONECT~\cite{konect2013kunegis} and report their
basic statistics in Table~\ref{tab:experiments}. 
Since the datasets only consist of unweighted graphs and do not contain node
opinions, we generate the innate opinions synthetically using (1)~a uniform
distribution, (2)~a scaled version of the exponential distribution and (3)~opinions
based on the second eigenvector of~$L$.
In the main text, we present our results for opinions from
the uniform distribution, where we assigned the innate opinions~$s$ uniformly at
random in $[0,1]$. We present our results for other opinion distributions in
the appendix. %

\begin{table}[t!]
\caption{Statistics of our datasets. Here,
	$n$ and $m$ denote the number of nodes and edges in the largest connected
	components of the graph and $\kappa(\tilde{S})$ is the condition number of the
	matrix $\tilde{S} = (I+D)^{-1/2} (I+L) (I+D)^{-1/2}$.
 } 
\label{tab:experiments}
\centering
\begin{adjustbox}{max width=\textwidth}
\begin{tabular}{c cccc}
\toprule
\textbf{Dataset}
	& \multicolumn{4}{c}{\textbf{Statistics}} \\
\cmidrule(lr){2-5}
& $n$ & $m$ & $\bar{d}$ & $\kappa(\tilde{S})$ \\
\midrule
\GooglePlus & \numprint{201949} & \numprint{1133956} & \numprint{5.6} &
	\numprint{88.6} \\
\TwitterFollows & \numprint{404719} & \numprint{713319} & \numprint{1.8} &
	\numprint{12.5} \\
\YouTube & \numprint{1134890} & \numprint{2987624} & \numprint{2.6} &
	\numprint{20.9} \\
\Pokec & \numprint{1632803} & \numprint{22301964} & \numprint{13.7} &
	\numprint{56.7} \\
\Flixster & \numprint{2523386} & \numprint{7918801} & \numprint{3.1} &
	\numprint{19.0} \\
\Flickr & \numprint{2173370} & \numprint{22729227} & \numprint{10.5} &
	\numprint{103.7} \\
\LiveJournal & \numprint{4843953} & \numprint{42845684} & \numprint{8.8} &
	\numprint{297.4} \\
\bottomrule
\end{tabular}

\end{adjustbox}
\end{table}

\begin{table*}[t!]
\caption{Running times for estimating $z^*$ on different datasets with uniform
	opinions. 
	We report the running time for the Laplacian solver from~\cite{xu2021fast}.
	For the PageRank-style updates from
	Proposition~\ref{prop:page-rank-equation} we present running time and
	average error $\norm{\tilde{z}^* - z^*}_2/n$ after 50~iterations.
	For Algorithm~\ref{alg:random-walks} we present the running time for
	estimating \numprint{10000} opinions~$z_u^*$ using \numprint{600}~steps and
	\numprint{4000}~random walks; we also present the average query time for
	estimating a single opinion~$z_u^*$. The running times are averages of
	10~runs and we also report standard deviations.
 } 
\label{tab:running-times}
\centering
\begin{adjustbox}{max width=\textwidth}
\begin{tabular}{c c cc cc}
\toprule
\textbf{Dataset}
	& \textbf{Laplacian solver}~\cite{xu2021fast}
	& \multicolumn{2}{c}{\textbf{PageRank-Style Updates}}
	& \multicolumn{2}{c}{\textbf{Algorithm~\ref{alg:random-walks}}} \\
\cmidrule(lr){2-2}
\cmidrule(lr){3-4}
\cmidrule(lr){5-6}
& time (sec) & time (sec) & avg.\ error & time (sec) & time per vertex (sec) \\
\midrule
\GooglePlus & \numprint{6.2} ($\pm$ 0.2) & \numprint{0.6} ($\pm$ 0.0) & $3.2 \cdot 10^{-7}$ & 9.1 ($\pm$ 0.6) & $2.2 \cdot 10^{-3}$ \\
\TwitterFollows & \numprint{5.1} ($\pm$ 0.0) & \numprint{0.6} ($\pm$ 0.2) & $2.6 \cdot 10^{-10}$ & 2.4 ($\pm$ 0.1) & $6.1 \cdot 10^{-4}$ \\
\YouTube & \numprint{9.7} ($\pm$ 0.2) & \numprint{2.7} ($\pm$ 0.2) & $1.7 \cdot 10^{-9}$ & 4.5 ($\pm$ 0.1) & $1.1 \cdot 10^{-3}$ \\
\Pokec & \numprint{82.1} ($\pm$ 1.4) & \numprint{16.8} ($\pm$ 1.7) & $2.9 \cdot 10^{-8}$ & 36.2 ($\pm$ 0.2) & $9.0 \cdot 10^{-3}$ \\
\Flixster & \numprint{20.0} ($\pm$ 0.4) & \numprint{5.7} ($\pm$ 0.4) & $3.4 \cdot 10^{-10}$ & 6.9 ($\pm$ 0.3) & $1.7 \cdot 10^{-3}$ \\
\Flickr & \numprint{61.9} ($\pm$ 0.7) & \numprint{13.7} ($\pm$ 0.6) & $3.6 \cdot 10^{-8}$ & 34.0 ($\pm$ 1.9) & $8.5 \cdot 10^{-3}$ \\
\LiveJournal & \numprint{153.6} ($\pm$ 1.9) & \numprint{40.8} ($\pm$ 0.6) & $2.7 \cdot 10^{-7}$ & 22.5 ($\pm$ 0.4) & $5.6 \cdot 10^{-3}$ \\
\bottomrule
\end{tabular}

\end{adjustbox}
\end{table*}

\textbf{Experimental setup for oracle-based algorithms.}
We evaluate our oracle-based algorithms which will be the
main focus of this section.
In our experiments, we sample \numprint{10000}~vertices uniformly at random. For
each vertex, we estimate either the expressed opinion~$z_u^*$ or the
innate opinion~$s_u$.  Based on these estimates, we approximate the measures
from Table~\ref{tab:measures}; we do not report $\avgExpressed$ since (up to
rescaling) it is the same as $\SumOpinions$. Given an opinion estimate
$\tilde{s}_u$, we report the absolute error~$\abs{s_u - \tilde{s}_u}$. %
For the
measures, such as polarization~$\Polarization$, we report relative errors
$\abs{\Polarization - \tilde{\Polarization}}/\Polarization$, where
$\tilde{\Polarization}$ is an estimate of $\Polarization$. %
As our algorithms
are randomized, we perform each experiment 10~times and report means and standard
deviations.

\textbf{Results given oracles access to innate opinions~$s$.}
First, we report our results using an oracle for the innate opinions~$s$. We use
Algorithm~\ref{alg:random-walks} to obtain estimates of $z_u^*$ for
\numprint{10000} randomly chosen vertices~$u$. Then we estimate the measures from
Table~\ref{tab:measures} using the algorithms from
Section~\ref{sec:estimate-measures-given-s}. When not stated otherwise, we use
Algorithm~\ref{alg:random-walks} with \numprint{4000}~random walks of
length~\numprint{600}.

\begin{figure*}
	\centering
	\begin{subfigure}{0.24\textwidth}
		\includegraphics[width=\textwidth]{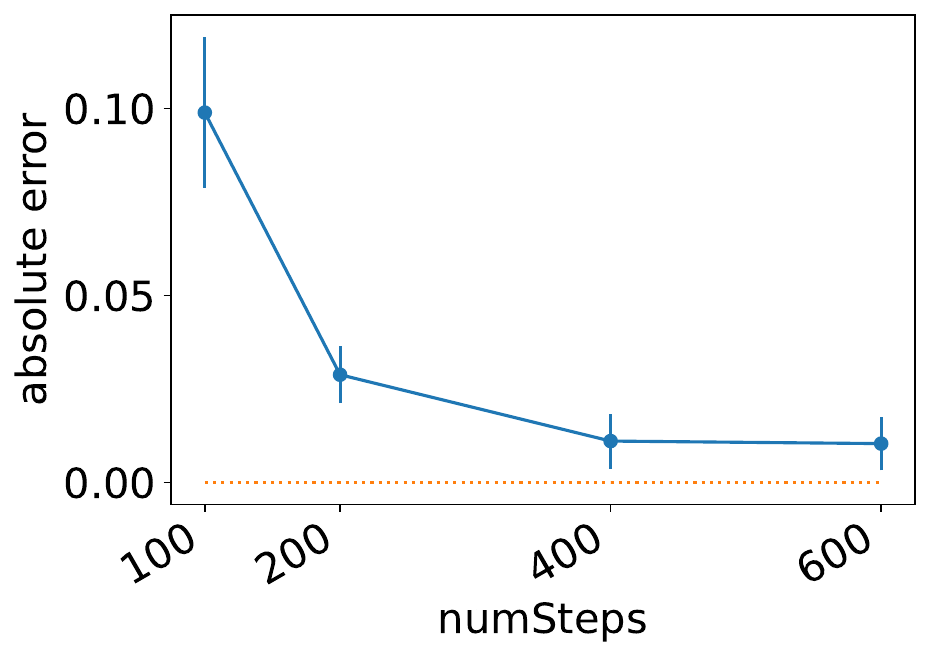}
		\caption{\Pokec, vary \#steps}
		\label{fig:pokec-steps}
	\end{subfigure}
	\hfill
	\begin{subfigure}{0.24\textwidth}
		\includegraphics[width=\textwidth]{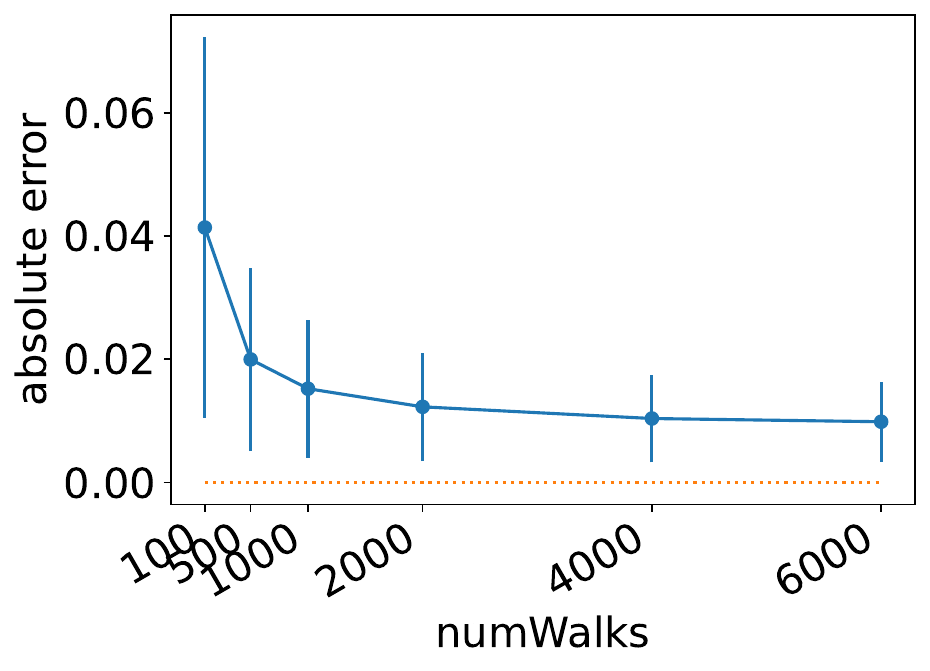}
		\caption{\Pokec, vary \#walks.}
		\label{fig:pokec-walks}
	\end{subfigure}
	\hfill
	\begin{subfigure}{0.24\textwidth}
		\includegraphics[width=\textwidth]{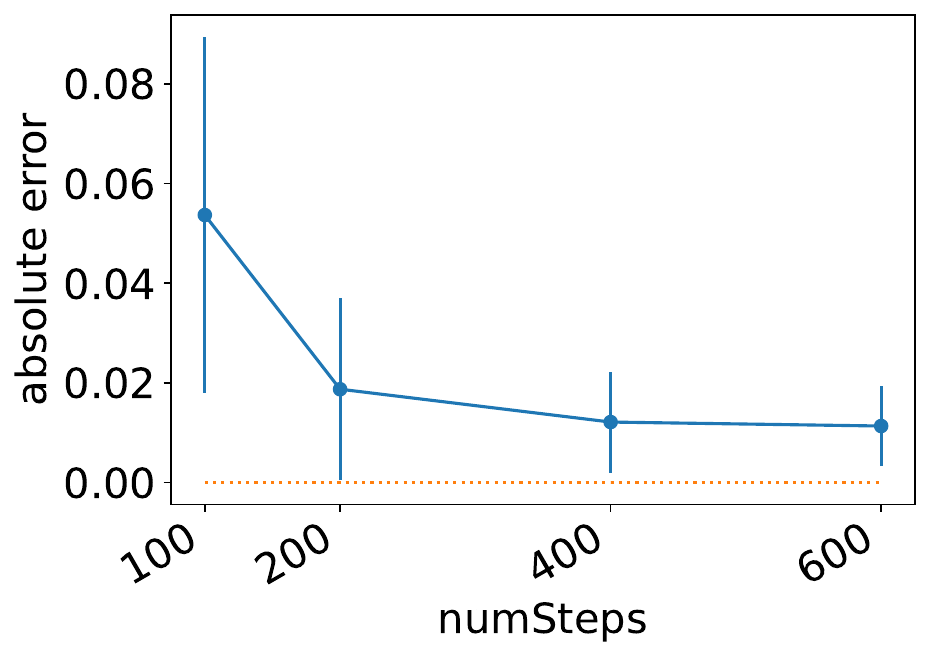}
		\caption{\LiveJournal, vary \#steps}
		\label{fig:livejournal-steps}
	\end{subfigure}
	\hfill
	\begin{subfigure}{0.24\textwidth}
		\includegraphics[width=\textwidth]{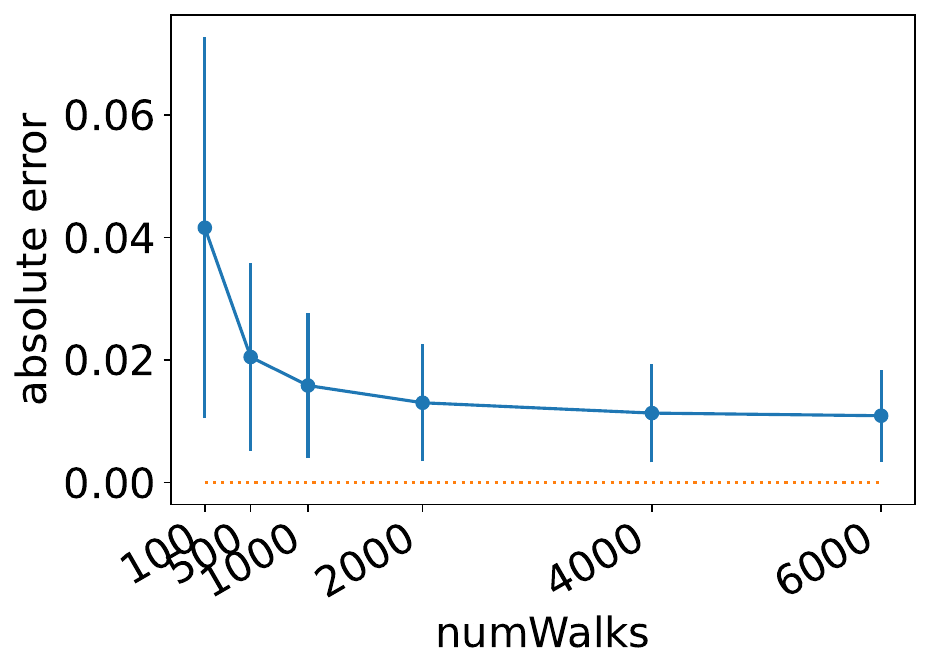}
		\caption{\LiveJournal, vary \#walks}
		\label{fig:livejournal-walks}
	\end{subfigure}
	\caption{Absolute error when estimating expressed opinions~$z_u^*$ using an
		oracle for innate opinions~$s_u$ via Algorithm~\ref{alg:random-walks}.
		We report means and standard deviations across 10~experiments.
		Figure~\ref{fig:estimating-opinions-oracle-innate}(\subref{fig:pokec-steps}) and~Figure~\ref{fig:estimating-opinions-oracle-innate}(\subref{fig:livejournal-steps}) use 4000~walks and vary the number of steps;
		Figure~\ref{fig:estimating-opinions-oracle-innate}(\subref{fig:pokec-walks}) and~Figure~\ref{fig:estimating-opinions-oracle-innate}(\subref{fig:livejournal-walks}) use 600~steps and vary the number of walks.
		Innate opinions were generated using the uniform distribution.
	}
	\label{fig:estimating-opinions-oracle-innate}
\end{figure*}

\emph{Dependency on algorithm parameters.}
In Figure~\ref{fig:estimating-opinions-oracle-innate} we present our results on
\Pokec and on \LiveJournal for estimating $z_u^*$ with varying number of steps
and random walks. We observe that increasing the parameters
decreases the absolute error. For both parameters and datasets, the error-curve
gets flatter once the error reaches $\pm0.01$, even though the standard
deviations keep on decreasing.

We looked into this phenomenon in more detail and found the following reasons:
(1)~Since in Algorithm~\ref{alg:random-walks} at each step the random walks
terminate with a certain probability, it is very unlikely that we observe ``very
long'' random walks (this is also corroborated by our running time analysis in
the appendix). %
Hence, at some
point we cannot increase the accuracy of our estimation by increasing the number
of random walk steps. Thus our only effective parameter to improve
the accuracy of Algorithm~\ref{alg:random-walks} is the number of random walks.
(2)~Next, observe that Proposition~\ref{pro:solver} suggests that to get
error~$\pm\varepsilon$ we need a running time depending on~$\varepsilon^{-2}$.
We looked into this phenomenon and found that (averaged across all datasets) the
error decays more slowly in practice. We blame this on hidden constants in the
running time analysis of Proposition~\ref{pro:solver} and believe that our
practical experiments are not yet in this asymptotic regime with error
dependency $\varepsilon^{-2}$.

\emph{Comparison across different datasets.}
In Table~\ref{tab:experiments-oracle-innate-uniform} we report the approximation
results. On all datasets, $z_u^*$ can be approximated with an average error of
$\pm0.01$. We observe that all measures except disagreement can be approximated
with an error of at most 4\%.

For the disagreement $\Disagreement=\sum_{(u,v)\in E} w_{uv} (z_u^*-z_v^*)^2$,
we obtain much higher errors for the following reason. 
We compute $\Disagreement  = \frac{1}{2} (\norm{s}_2^2 - \Internal - \Controversy)$ using
the conservation law from~\cite{chen2018quantifying} and using estimates for
$\norm{s}_2^2$, $\Internal$ and $\Controversy$. Therefore, the estimates' errors 
compound. Additionally, in practice we have that
$\Disagreement \ll \norm{s}_2^2$ and $\Disagreement\ll \Controversy$ since
typically the quantities $(z_u^*-z_v^*)^2$ that we are summing over in the
definition of~$\Disagreement$ are very close to zero. This ``amplifies'' the
effect of the approximation errors from estimating $\norm{s}_2^2$ and
$\Controversy$.  It is possible to obtain more accurate estimates of
$\Disagreement$ when we are able to sample edges from the graph uniformly at
random, and we report results in the appendix.

\begin{table*}[t!]
\caption{Errors for different datasets given an oracle for innate opinions; we
	report means and standard deviations (in parentheses) across 10~experiments. We ran
	Algorithm~\ref{alg:random-walks} with \numprint{600}~steps and
	\numprint{4000}~random walks; we estimated the opinions of
	\numprint{10000}~random vertices.  Innate opinions were generated using the
	uniform distribution.}
\label{tab:experiments-oracle-innate-uniform}
\centering
\begin{adjustbox}{max width=\textwidth}
\begin{tabular}{c cccccccc}
\toprule
\textbf{Dataset} & \textbf{Absolute Error} & \multicolumn{7}{c}{\textbf{Relative Error in \%}} \\
\cmidrule(lr){3-9} 
& $z_u^*$ & $\SumOpinions$ & $\Polarization$ & $\Disagreement$ &
$\Internal$ & $\Controversy$ & $\DisCon$ & $\norm{s}_2^2$ \\
\midrule
\primitiveinput{tables/measures_Uniform_givenS_error_numStepsSamples600_numWalksRepetitions4000_numSampledVertices10000.tex}
\bottomrule
\end{tabular}
\end{adjustbox}
\end{table*}

\begin{table*}[t!]
\caption{Errors for different datasets given an oracle for expressed opinions;
	we report means and standard deviations (in parentheses) across 10~experiments. We ran our
	algorithm with threshold~400 and 5~repetitions; we estimated the opinions of
	\numprint{10000}~random vertices.  Innate opinions were generated using the
	uniform distribution.}
\label{tab:experiments-oracle-expressed-uniform}
\centering
\begin{adjustbox}{max width=\textwidth}
\begin{tabular}{c cccccccc}
\toprule
\textbf{Dataset} & \textbf{Absolute Error} & \multicolumn{7}{c}{\textbf{Relative Error in \%}} \\
\cmidrule(lr){3-9} 
& $s_u$ & $\SumOpinions$ & $\Polarization$ & $\Disagreement$ &
$\Internal$ & $\Controversy$ & $\DisCon$ & $\norm{s}_2^2$ \\
\midrule
\primitiveinput{tables/measures_Uniform_givenZ_error_numStepsSamples400_numWalksRepetitions5_numSampledVertices10000.tex}
\bottomrule
\end{tabular}
\end{adjustbox}
\end{table*}

\emph{Running time analysis.}
In the last two columns of Table~\ref{tab:running-times}, we present
Algorithm~\ref{alg:random-walks}'s total running times and the running times per
vertex to obtain the results from
Table~\ref{tab:experiments-oracle-innate-uniform}.
We observe that on all datasets, the algorithms need less than $10^{-2}$~seconds
to estimate the opinion of a given vertex. 
In the appendix, %
we present additional experiments showing that
Algorithm~\ref{alg:random-walks} scales linearly in the number of random walks
and we show that after a certain threshold, increasing the number of random walk
steps does not increase the running time anymore (for the reason discussed
above).

\textbf{Results given oracles access to expressed opinions~$z^*$.}
Next, we report our results given oracle access to expressed opinions~$z^*$. We
implement the algorithm from Lemma~\ref{lem:estimate-s} for estimating innate
opinions~$s_u$, where we introduce a \emph{threshold} $t$. If $d_u<t$,
we compute $s_u = (1+w_u)z_u^* - \sum_{(u,v)\in E} w_{uv} z_v^*$ exactly in time $O(d_u)$;
otherwise, we use the random sampling strategy from the lemma and pick
$t$~neighbors of $u$ uniformly at random to estimate $s_u$. We
repeat this procedure 5~times and pick the median answer. We set $t=400$.

Table~\ref{tab:experiments-oracle-expressed-uniform} presents our results. We
observe that the absolute errors when estimating $s_u$ are extremely small
($\pm0.001$). This is because all of our datasets have very small average degrees
(see Table~\ref{tab:experiments}) and, thus, for most randomly picked vertices
$s_u$ we are computing the answer exactly since $d_u\leq t$.

However, we note that
when resorting to random sampling for high-degree nodes, we typically have a
relatively large error: to obtain error $\pm\varepsilon$ for
$s_u = (1+w_u)z_u^* - \sum_{(u,v)\in E} w_{uv} z_v^*$, we have to
estimate $\sum_{(u,v)\in E} w_{uv} z_v^*$ with absolute error~$\pm\varepsilon$,
which is impractical since this sum is typically very large. The same is the
case for the random sampling scheme from Proposition~\ref{prop:estimate-S}.
															   
For the measures from Table~\ref{tab:measures}, we observe that again all
relative errors are less than 3\%, except for disagreement where we have the
same issue as described above.

Next, we briefly discuss the running time of the algorithms.  First, computing
the entire vector $s=(I+L)z^*$ exactly is highly efficient since it only
involves a matrix--vector multiplication and can be done on all datasets in less
than one second. Our oracle-based algorithm, which starts from scratch for each
opinion that shall be estimated, can estimate the opinions of
\numprint{10000}~nodes in less than 1~second for each datasets, including
\LiveJournal which contains more than 4~million nodes and more than 40~million
edges. Hence, we spend less than $10^{-4}$~seconds for each opinion $s_u$.
Here, the algorithm benefits from the fact that the vertices in our datasets
have very low degrees and thus $s_u$ can be estimated highly efficiently.

\textbf{Evaluation of PageRank-style update rule.}
Finally, we also evaluate the usefulness of the PageRank-style update
rule~\cite{friedkin2014two,proskurnikov2016pagerank} from
Proposition~\ref{prop:page-rank-equation}. To this end, we implement an
algorithm which initializes a vector in which all entries are set to the average
of the node opinions~$\frac{1}{n} \sum_u s_u$ and then we apply the update rule
from the proposition for 50~iterations.

In Table~\ref{tab:running-times} we report the running times of the
baseline~\cite{xu2021fast} and the PageRank-style algorithm. We also report the
average error $\norm{\tilde{z}^*-z^*}_2/n$, where $\tilde{z}^*$ is the output of
the PageRank-style algorithm and $z^*$ is the output of~\cite{xu2021fast}. We
find that the PageRank-style algorithm is at least 3.7~times faster than the
baseline and its errors are very small. We also find that the error decays
exponentially in the number of iterations of applying the update rule (see the
appendix); this resembles prior convergence
results~\cite{proskurnikov2017opinion,ghaderi2014opinion}.

\section{Conclusion}
In this paper, we studied the popular Friedkin--Johnsen model for opinion
dynamics. We showed that all relevant quantities, such as single node opinions
and measures like polarization and disagreement, can be provably approximated in
sublinear time. We used a connection between the FJ model and personalized PageRank
to show that for $d$-regular graphs, each node's expressed opinion can be
approximated by only looking at a constant-size neighborhood.
Furthermore, to obtain our sublinear-time estimator for innate opinions we
presented new results for estimating weighted sums and we showed that we can
achieve small additive and multiplicative errors under mild conditions.  We also
evaluated our algorithms experimentally and showed that for all measures except
disagreement, they achieve a small error of less than 4\% in practice.
They are also very fast in practice.

\begin{acks}
This research has been funded by the Wallenberg AI, Autonomous Systems and
Software Program (WASP) funded by the Knut and Alice Wallenberg Foundation, and
the Vienna Science and Technology Fund (WWTF) [Grant ID: 10.47379/VRG23013].
Y.D. and P.P. are supported in part by NSFC grant 62272431 and ``the Fundamental Research Funds for the Central Universities''.
\end{acks}

\clearpage

\bibliographystyle{ACM-Reference-Format}
\balance
\bibliography{main}


\begin{thebibliography}{36}


\ifx \showCODEN    \undefined \def \showCODEN     #1{\unskip}     \fi
\ifx \showDOI      \undefined \def \showDOI       #1{#1}\fi
\ifx \showISBNx    \undefined \def \showISBNx     #1{\unskip}     \fi
\ifx \showISBNxiii \undefined \def \showISBNxiii  #1{\unskip}     \fi
\ifx \showISSN     \undefined \def \showISSN      #1{\unskip}     \fi
\ifx \showLCCN     \undefined \def \showLCCN      #1{\unskip}     \fi
\ifx \shownote     \undefined \def \shownote      #1{#1}          \fi
\ifx \showarticletitle \undefined \def \showarticletitle #1{#1}   \fi
\ifx \showURL      \undefined \def \showURL       {\relax}        \fi
\providecommand\bibfield[2]{#2}
\providecommand\bibinfo[2]{#2}
\providecommand\natexlab[1]{#1}
\providecommand\showeprint[2][]{arXiv:#2}

\bibitem[Andersen et~al\mbox{.}(2006)]%
        {andersen2006local}
\bibfield{author}{\bibinfo{person}{Reid Andersen}, \bibinfo{person}{Fan R.~K.
  Chung}, {and} \bibinfo{person}{Kevin~J. Lang}.}
  \bibinfo{year}{2006}\natexlab{}.
\newblock \showarticletitle{Local Graph Partitioning using PageRank Vectors}.
  In \bibinfo{booktitle}{\emph{47th Annual {IEEE} Symposium on Foundations of
  Computer Science {(FOCS)}}}. \bibinfo{pages}{475--486}.
\newblock


\bibitem[Andoni et~al\mbox{.}(2019)]%
        {andoni2019solving}
\bibfield{author}{\bibinfo{person}{Alexandr Andoni}, \bibinfo{person}{Robert
  Krauthgamer}, {and} \bibinfo{person}{Yosef Pogrow}.}
  \bibinfo{year}{2019}\natexlab{}.
\newblock \showarticletitle{On Solving Linear Systems in Sublinear Time}. In
  \bibinfo{booktitle}{\emph{10th Innovations in Theoretical Computer Science
  Conference {(ITCS)}}}, Vol.~\bibinfo{volume}{124}.
  \bibinfo{pages}{3:1--3:19}.
\newblock


\bibitem[Banerjee and Lofgren(2015)]%
        {banerjee2015fast}
\bibfield{author}{\bibinfo{person}{Siddhartha Banerjee} {and}
  \bibinfo{person}{Peter Lofgren}.} \bibinfo{year}{2015}\natexlab{}.
\newblock \showarticletitle{Fast Bidirectional Probability Estimation in Markov
  Models}. In \bibinfo{booktitle}{\emph{Advances in Neural Information
  Processing Systems 28: Annual Conference on Neural Information Processing
  Systems (NIPS)}}. \bibinfo{pages}{1423--1431}.
\newblock


\bibitem[Beretta and Tetek(2022)]%
        {BerettaT22}
\bibfield{author}{\bibinfo{person}{Lorenzo Beretta} {and}
  \bibinfo{person}{Jakub Tetek}.} \bibinfo{year}{2022}\natexlab{}.
\newblock \showarticletitle{Better Sum Estimation via Weighted Sampling}. In
  \bibinfo{booktitle}{\emph{Proceedings of the 2022 {ACM-SIAM} Symposium on
  Discrete Algorithms {(SODA)}}}. \bibinfo{pages}{2303--2338}.
\newblock
\urldef\tempurl%
\url{https://doi.org/10.1137/1.9781611977073.93}
\showDOI{\tempurl}


\bibitem[Bhalla et~al\mbox{.}(2023)]%
        {bhalla2021local}
\bibfield{author}{\bibinfo{person}{Nikita Bhalla}, \bibinfo{person}{Adam
  Lechowicz}, {and} \bibinfo{person}{Cameron Musco}.}
  \bibinfo{year}{2023}\natexlab{}.
\newblock \showarticletitle{Local Edge Dynamics and Opinion Polarization}. In
  \bibinfo{booktitle}{\emph{Proceedings of the Sixteenth {ACM} International
  Conference on Web Search and Data Mining {(WSDM)}}}. \bibinfo{pages}{6--14}.
\newblock


\bibitem[Bhattacharyya and Yoshida(2022)]%
        {bhattacharyya22property}
\bibfield{author}{\bibinfo{person}{Arnab Bhattacharyya} {and}
  \bibinfo{person}{Yuichi Yoshida}.} \bibinfo{year}{2022}\natexlab{}.
\newblock \bibinfo{booktitle}{\emph{Property Testing - Problems and
  Techniques}}.
\newblock \bibinfo{publisher}{Springer}.
\newblock


\bibitem[Bindel et~al\mbox{.}(2015)]%
        {bindel2015how}
\bibfield{author}{\bibinfo{person}{David Bindel}, \bibinfo{person}{Jon~M.
  Kleinberg}, {and} \bibinfo{person}{Sigal Oren}.}
  \bibinfo{year}{2015}\natexlab{}.
\newblock \showarticletitle{How bad is forming your own opinion?}
\newblock \bibinfo{journal}{\emph{Games Econ. Behav.}}  \bibinfo{volume}{92}
  (\bibinfo{year}{2015}), \bibinfo{pages}{248--265}.
\newblock


\bibitem[Bressan et~al\mbox{.}(2018)]%
        {bressan2018sublinear}
\bibfield{author}{\bibinfo{person}{Marco Bressan}, \bibinfo{person}{Enoch
  Peserico}, {and} \bibinfo{person}{Luca Pretto}.}
  \bibinfo{year}{2018}\natexlab{}.
\newblock \showarticletitle{Sublinear algorithms for local graph centrality
  estimation}. In \bibinfo{booktitle}{\emph{59th {IEEE} Annual Symposium on
  Foundations of Computer Science (FOCS)}}. \bibinfo{pages}{709--718}.
\newblock


\bibitem[Bressan et~al\mbox{.}(2020)]%
        {bressan2019approximating}
\bibfield{author}{\bibinfo{person}{Marco Bressan}, \bibinfo{person}{Enoch
  Peserico}, {and} \bibinfo{person}{Luca Pretto}.}
  \bibinfo{year}{2020}\natexlab{}.
\newblock \showarticletitle{On Approximating the Stationary Distribution of
  Time-Reversible Markov Chains}.
\newblock \bibinfo{journal}{\emph{Theory Comput. Syst.}} \bibinfo{volume}{64},
  \bibinfo{number}{3} (\bibinfo{year}{2020}), \bibinfo{pages}{444--466}.
\newblock


\bibitem[Chen and R{\'{a}}cz(2022)]%
        {chen2021adversarial}
\bibfield{author}{\bibinfo{person}{Mayee~F. Chen} {and}
  \bibinfo{person}{Mikl{\'{o}}s~Z. R{\'{a}}cz}.}
  \bibinfo{year}{2022}\natexlab{}.
\newblock \showarticletitle{An Adversarial Model of Network Disruption:
  Maximizing Disagreement and Polarization in Social Networks}.
\newblock \bibinfo{journal}{\emph{{IEEE} Trans. Netw. Sci. Eng.}}
  \bibinfo{volume}{9}, \bibinfo{number}{2} (\bibinfo{year}{2022}),
  \bibinfo{pages}{728--739}.
\newblock


\bibitem[Chen et~al\mbox{.}(2018)]%
        {chen2018quantifying}
\bibfield{author}{\bibinfo{person}{Xi Chen}, \bibinfo{person}{Jefrey Lijffijt},
  {and} \bibinfo{person}{Tijl De~Bie}.} \bibinfo{year}{2018}\natexlab{}.
\newblock \showarticletitle{Quantifying and minimizing risk of conflict in
  social networks}. In \bibinfo{booktitle}{\emph{Proceedings of the 24th {ACM}
  {SIGKDD} International Conference on Knowledge Discovery {\&} Data Mining}}.
  \bibinfo{pages}{1197--1205}.
\newblock


\bibitem[Chitra and Musco(2020)]%
        {chitra2020analyzing}
\bibfield{author}{\bibinfo{person}{Uthsav Chitra} {and}
  \bibinfo{person}{Christopher Musco}.} \bibinfo{year}{2020}\natexlab{}.
\newblock \showarticletitle{Analyzing the Impact of Filter Bubbles on Social
  Network Polarization}. In \bibinfo{booktitle}{\emph{The Thirteenth {ACM}
  International Conference on Web Search and Data Mining (WSDM)}}.
  \bibinfo{pages}{115--123}.
\newblock


\bibitem[Cinus et~al\mbox{.}(2023)]%
        {cinus2023rebalancing}
\bibfield{author}{\bibinfo{person}{Federico Cinus}, \bibinfo{person}{Aristides
  Gionis}, {and} \bibinfo{person}{Francesco Bonchi}.}
  \bibinfo{year}{2023}\natexlab{}.
\newblock \showarticletitle{Rebalancing Social Feed to Minimize Polarization
  and Disagreement}. In \bibinfo{booktitle}{\emph{CIKM}}.
\newblock


\bibitem[Dasgupta et~al\mbox{.}(2014)]%
        {dasgupta2014estimating}
\bibfield{author}{\bibinfo{person}{Anirban Dasgupta}, \bibinfo{person}{Ravi
  Kumar}, {and} \bibinfo{person}{Tam{\'{a}}s Sarl{\'{o}}s}.}
  \bibinfo{year}{2014}\natexlab{}.
\newblock \showarticletitle{On estimating the average degree}. In
  \bibinfo{booktitle}{\emph{23rd International World Wide Web Conference
  (WWW)}}. \bibinfo{pages}{795--806}.
\newblock


\bibitem[Friedkin and Johnsen(1990)]%
        {friedkin1990social}
\bibfield{author}{\bibinfo{person}{Noah~E Friedkin} {and}
  \bibinfo{person}{Eugene~C Johnsen}.} \bibinfo{year}{1990}\natexlab{}.
\newblock \showarticletitle{Social influence and opinions}.
\newblock \bibinfo{journal}{\emph{Journal of Mathematical Sociology}}
  \bibinfo{volume}{15}, \bibinfo{number}{3-4} (\bibinfo{year}{1990}),
  \bibinfo{pages}{193--206}.
\newblock


\bibitem[Friedkin and Johnsen(2014)]%
        {friedkin2014two}
\bibfield{author}{\bibinfo{person}{Noah~E Friedkin} {and}
  \bibinfo{person}{Eugene~C Johnsen}.} \bibinfo{year}{2014}\natexlab{}.
\newblock \showarticletitle{Two steps to obfuscation}.
\newblock \bibinfo{journal}{\emph{Social Networks}}  \bibinfo{volume}{39}
  (\bibinfo{year}{2014}), \bibinfo{pages}{12--13}.
\newblock


\bibitem[Gaitonde et~al\mbox{.}(2020)]%
        {gaitonde2020adversarial}
\bibfield{author}{\bibinfo{person}{Jason Gaitonde}, \bibinfo{person}{Jon~M.
  Kleinberg}, {and} \bibinfo{person}{{\'{E}}va Tardos}.}
  \bibinfo{year}{2020}\natexlab{}.
\newblock \showarticletitle{Adversarial Perturbations of Opinion Dynamics in
  Networks}. In \bibinfo{booktitle}{\emph{The 21st {ACM} Conference on
  Economics and Computation (EC)}}. \bibinfo{pages}{471--472}.
\newblock


\bibitem[Garimella et~al\mbox{.}(2017)]%
        {garimella2017reducing}
\bibfield{author}{\bibinfo{person}{Kiran Garimella}, \bibinfo{person}{Gianmarco
  De~Francisci Morales}, \bibinfo{person}{Aristides Gionis}, {and}
  \bibinfo{person}{Michael Mathioudakis}.} \bibinfo{year}{2017}\natexlab{}.
\newblock \showarticletitle{Reducing Controversy by Connecting Opposing Views}.
  In \bibinfo{booktitle}{\emph{Proceedings of the Tenth {ACM} International
  Conference on Web Search and Data Mining {(WSDM)}}}. \bibinfo{pages}{81--90}.
\newblock


\bibitem[Garimella et~al\mbox{.}(2018)]%
        {garimella2018political}
\bibfield{author}{\bibinfo{person}{Kiran Garimella}, \bibinfo{person}{Gianmarco
  De~Francisci Morales}, \bibinfo{person}{Aristides Gionis}, {and}
  \bibinfo{person}{Michael Mathioudakis}.} \bibinfo{year}{2018}\natexlab{}.
\newblock \showarticletitle{Political Discourse on Social Media: Echo Chambers,
  Gatekeepers, and the Price of Bipartisanship}. In
  \bibinfo{booktitle}{\emph{Proceedings of the 2018 World Wide Web Conference
  on World Wide Web {(WWW)}}}. \bibinfo{pages}{913--922}.
\newblock


\bibitem[Ghaderi and Srikant(2014)]%
        {ghaderi2014opinion}
\bibfield{author}{\bibinfo{person}{Javad Ghaderi} {and}
  \bibinfo{person}{Rayadurgam Srikant}.} \bibinfo{year}{2014}\natexlab{}.
\newblock \showarticletitle{Opinion dynamics in social networks with stubborn
  agents: Equilibrium and convergence rate}.
\newblock \bibinfo{journal}{\emph{Automatica}} \bibinfo{volume}{50},
  \bibinfo{number}{12} (\bibinfo{year}{2014}), \bibinfo{pages}{3209--3215}.
\newblock


\bibitem[Kunegis(2013)]%
        {konect2013kunegis}
\bibfield{author}{\bibinfo{person}{J{\'{e}}r{\^{o}}me Kunegis}.}
  \bibinfo{year}{2013}\natexlab{}.
\newblock \showarticletitle{{KONECT:} the Koblenz network collection}. In
  \bibinfo{booktitle}{\emph{22nd International World Wide Web Conference (WWW),
  Companion Volume}}. \bibinfo{pages}{1343--1350}.
\newblock


\bibitem[Lofgren et~al\mbox{.}(2016)]%
        {lofgren2016personalized}
\bibfield{author}{\bibinfo{person}{Peter Lofgren}, \bibinfo{person}{Siddhartha
  Banerjee}, {and} \bibinfo{person}{Ashish Goel}.}
  \bibinfo{year}{2016}\natexlab{}.
\newblock \showarticletitle{Personalized PageRank Estimation and Search: {A}
  Bidirectional Approach}. In \bibinfo{booktitle}{\emph{Proceedings of the
  Ninth {ACM} International Conference on Web Search and Data Mining (WSDM)}}.
  \bibinfo{pages}{163--172}.
\newblock


\bibitem[Lofgren et~al\mbox{.}(2014)]%
        {lofgren2014fast}
\bibfield{author}{\bibinfo{person}{Peter~A Lofgren},
  \bibinfo{person}{Siddhartha Banerjee}, \bibinfo{person}{Ashish Goel}, {and}
  \bibinfo{person}{C Seshadhri}.} \bibinfo{year}{2014}\natexlab{}.
\newblock \showarticletitle{FAST-PPR: scaling personalized pagerank estimation
  for large graphs}. In \bibinfo{booktitle}{\emph{Proceedings of the 20th {ACM}
  {SIGKDD} International Conference on Knowledge Discovery and Data Mining}}.
  \bibinfo{pages}{1436--1445}.
\newblock


\bibitem[Matakos et~al\mbox{.}(2017)]%
        {matakos2017measuring}
\bibfield{author}{\bibinfo{person}{Antonis Matakos}, \bibinfo{person}{Evimaria
  Terzi}, {and} \bibinfo{person}{Panayiotis Tsaparas}.}
  \bibinfo{year}{2017}\natexlab{}.
\newblock \showarticletitle{Measuring and moderating opinion polarization in
  social networks}.
\newblock \bibinfo{journal}{\emph{Data Min. Knowl. Discov.}}
  \bibinfo{volume}{31}, \bibinfo{number}{5} (\bibinfo{year}{2017}),
  \bibinfo{pages}{1480--1505}.
\newblock


\bibitem[Musco et~al\mbox{.}(2018)]%
        {musco2018minimizing}
\bibfield{author}{\bibinfo{person}{Cameron Musco}, \bibinfo{person}{Christopher
  Musco}, {and} \bibinfo{person}{Charalampos~E. Tsourakakis}.}
  \bibinfo{year}{2018}\natexlab{}.
\newblock \showarticletitle{Minimizing Polarization and Disagreement in Social
  Networks}. In \bibinfo{booktitle}{\emph{Proceedings of the 2018 World Wide
  Web Conference on World Wide Web {(WWW)}}}. \bibinfo{pages}{369--378}.
\newblock


\bibitem[Neumann et~al\mbox{.}({[n.\,d.]})]%
        {code}
\bibfield{author}{\bibinfo{person}{Stefan Neumann}, \bibinfo{person}{Yinhao
  Dong}, {and} \bibinfo{person}{Pan Peng}.}
  \bibinfo{year}{[n.\,d.]}\natexlab{}.
\newblock \bibinfo{title}{Code for this submission}.
\newblock
\newblock
\urldef\tempurl%
\url{https://github.com/StefanResearch/sublinear-time-opinions}
\showURL{%
\tempurl}


\bibitem[Pariser(2011)]%
        {pariser2011filter}
\bibfield{author}{\bibinfo{person}{Eli Pariser}.}
  \bibinfo{year}{2011}\natexlab{}.
\newblock \bibinfo{booktitle}{\emph{The filter bubble: How the new personalized
  web is changing what we read and how we think}}.
\newblock \bibinfo{publisher}{Penguin}.
\newblock


\bibitem[Peng et~al\mbox{.}(2021)]%
        {peng2021local}
\bibfield{author}{\bibinfo{person}{Pan Peng}, \bibinfo{person}{Daniel Lopatta},
  \bibinfo{person}{Yuichi Yoshida}, {and} \bibinfo{person}{Gramoz Goranci}.}
  \bibinfo{year}{2021}\natexlab{}.
\newblock \showarticletitle{Local algorithms for estimating effective
  resistance}. In \bibinfo{booktitle}{\emph{Proceedings of the 27th ACM SIGKDD
  Conference on Knowledge Discovery \& Data Mining}}.
  \bibinfo{pages}{1329--1338}.
\newblock


\bibitem[Proskurnikov et~al\mbox{.}(2016)]%
        {proskurnikov2016pagerank}
\bibfield{author}{\bibinfo{person}{Anton~V Proskurnikov},
  \bibinfo{person}{Roberto Tempo}, {and} \bibinfo{person}{Ming Cao}.}
  \bibinfo{year}{2016}\natexlab{}.
\newblock \showarticletitle{PageRank and opinion dynamics: Missing links and
  extensions}. In \bibinfo{booktitle}{\emph{2016 IEEE Conference on Norbert
  Wiener in the 21st Century (21CW)}}. IEEE, \bibinfo{pages}{1--6}.
\newblock


\bibitem[Proskurnikov et~al\mbox{.}(2017)]%
        {proskurnikov2017opinion}
\bibfield{author}{\bibinfo{person}{Anton~V Proskurnikov},
  \bibinfo{person}{Roberto Tempo}, \bibinfo{person}{Ming Cao}, {and}
  \bibinfo{person}{Noah~E Friedkin}.} \bibinfo{year}{2017}\natexlab{}.
\newblock \showarticletitle{Opinion evolution in time-varying social influence
  networks with prejudiced agents}.
\newblock \bibinfo{journal}{\emph{IFAC-PapersOnLine}} \bibinfo{volume}{50},
  \bibinfo{number}{1} (\bibinfo{year}{2017}), \bibinfo{pages}{11896--11901}.
\newblock


\bibitem[R{\'{a}}cz and Rigobon(2022)]%
        {racz2022towards}
\bibfield{author}{\bibinfo{person}{Mikl{\'{o}}s~Z. R{\'{a}}cz} {and}
  \bibinfo{person}{Daniel~E. Rigobon}.} \bibinfo{year}{2022}\natexlab{}.
\newblock \showarticletitle{Towards Consensus: Reducing Polarization by
  Perturbing Social Networks}.
\newblock \bibinfo{journal}{\emph{CoRR}}  \bibinfo{volume}{abs/2206.08996}
  (\bibinfo{year}{2022}).
\newblock
\urldef\tempurl%
\url{https://doi.org/10.48550/arXiv.2206.08996}
\showURL{%
\tempurl}


\bibitem[Tu and Neumann(2022)]%
        {tu2022viral}
\bibfield{author}{\bibinfo{person}{Sijing Tu} {and} \bibinfo{person}{Stefan
  Neumann}.} \bibinfo{year}{2022}\natexlab{}.
\newblock \showarticletitle{A Viral Marketing-Based Model For Opinion Dynamics
  in Online Social Networks}. In \bibinfo{booktitle}{\emph{The {ACM} Web
  Conference (WWW)}}. \bibinfo{pages}{1570--1578}.
\newblock


\bibitem[Tu et~al\mbox{.}(2023)]%
        {tu2023adversaries}
\bibfield{author}{\bibinfo{person}{Sijing Tu}, \bibinfo{person}{Stefan
  Neumann}, {and} \bibinfo{person}{Aristides Gionis}.}
  \bibinfo{year}{2023}\natexlab{}.
\newblock \showarticletitle{Adversaries with Limited Information in the
  Friedkin-Johnsen Model}. In \bibinfo{booktitle}{\emph{KDD}}.
  \bibinfo{pages}{2201--2210}.
\newblock


\bibitem[Xu et~al\mbox{.}(2021a)]%
        {xu2021fastcode}
\bibfield{author}{\bibinfo{person}{Wanyue Xu}, \bibinfo{person}{Qi Bao}, {and}
  \bibinfo{person}{Zhongzhi Zhang}.} \bibinfo{year}{2021}\natexlab{a}.
\newblock \bibinfo{title}{Code for: Fast Evaluation for Relevant Quantities of
  Opinion Dynamics}.
\newblock
\newblock
\urldef\tempurl%
\url{https://github.com/Accelerator950113/OpinionQuantities}
\showURL{%
\tempurl}


\bibitem[Xu et~al\mbox{.}(2021b)]%
        {xu2021fast}
\bibfield{author}{\bibinfo{person}{Wanyue Xu}, \bibinfo{person}{Qi Bao}, {and}
  \bibinfo{person}{Zhongzhi Zhang}.} \bibinfo{year}{2021}\natexlab{b}.
\newblock \showarticletitle{Fast Evaluation for Relevant Quantities of Opinion
  Dynamics}. In \bibinfo{booktitle}{\emph{The Web Conference (WWW)}}.
  \bibinfo{pages}{2037--2045}.
\newblock


\bibitem[Zhu et~al\mbox{.}(2021)]%
        {zhu2021minimizing}
\bibfield{author}{\bibinfo{person}{Liwang Zhu}, \bibinfo{person}{Qi Bao}, {and}
  \bibinfo{person}{Zhongzhi Zhang}.} \bibinfo{year}{2021}\natexlab{}.
\newblock \showarticletitle{Minimizing Polarization and Disagreement in Social
  Networks via Link Recommendation}. In \bibinfo{booktitle}{\emph{Advances in
  Neural Information Processing Systems 34: Annual Conference on Neural
  Information Processing Systems (NeurIPS)}}. \bibinfo{pages}{2072--2084}.
\newblock


\end{thebibliography}

\newpage

\appendix

\section{Omitted Proofs}

\subsection{Useful Tools}
The following lemma is well-known (see, e.g., \cite[Lemma~2.2]{bhattacharyya22property}).
\begin{lemma}
\label{lem:sum}
	Let $a,b\in\mathbb{R}$ with $a<b$, $n\in\mathbb{N}$ and $x\in[a,b]^n$.
	Let $\Sigma = \sum_{i=1}^n x_i$.
	For any $\varepsilon,\delta\in(0,1)$, there exists an algorithm which
	samples a set $S$ of $s=O(\varepsilon^{-2} \lg \delta^{-1})$ indices from $[n]$ and returns an
	estimate $\tilde{\Sigma}=\frac{n}{s}\sum_{i\in S}x_i$ such that $\abs{\Sigma - \tilde{\Sigma}} \leq \varepsilon(b-a) n$ with
	probability at least $1-\delta$.
	The algorithm takes time $O(\varepsilon^{-2} \lg \delta^{-1})$.
\end{lemma}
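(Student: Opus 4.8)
The plan is to use a textbook concentration argument. First I would have the algorithm draw $s := \lceil \tfrac{1}{2}\varepsilon^{-2}\ln(2\delta^{-1})\rceil = O(\varepsilon^{-2}\lg\delta^{-1})$ indices $i_1,\dots,i_s$ independently and uniformly at random from $[n]$ (with replacement, so $S$ is treated as a multiset; sampling distinct indices only sharpens the concentration), and then output $\tilde{\Sigma} = \tfrac{n}{s}\sum_{j=1}^s x_{i_j}$. Each draw and the running accumulation of the sum take $O(1)$ time, so the total running time is $O(s) = O(\varepsilon^{-2}\lg\delta^{-1})$ as claimed.

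For correctness I would set $X_j := x_{i_j}$, so that the $X_j$ are i.i.d.\ random variables taking values in $[a,b]$ with $\Exp{X_j} = \tfrac1n\sum_{i=1}^n x_i = \Sigma/n$; hence $\Exp{\tilde{\Sigma}} = \Sigma$, i.e.\ the estimator is unbiased. Applying Hoeffding's inequality to the empirical mean $\bar{X} := \tfrac1s\sum_{j=1}^s X_j$ gives, for every $t>0$, $\Prob{\abs{\bar{X} - \Sigma/n} \ge t} \le 2\exp\!\big(-2st^2/(b-a)^2\big)$. Choosing $t = \varepsilon(b-a)$, the right-hand side becomes $2\exp(-2s\varepsilon^2) \le \delta$ by the choice of $s$. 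On the complementary event $\abs{\bar{X} - \Sigma/n} < \varepsilon(b-a)$, multiplying through by $n$ and using $\tilde{\Sigma} = n\bar{X}$ yields $\abs{\tilde{\Sigma} - \Sigma} < \varepsilon(b-a)n$, which holds with probability at least $1-\delta$.

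Since the only ingredient is a concentration bound for bounded i.i.d.\ summands, there is essentially no real obstacle here; the one point deserving a word of care is that the indices must be sampled independently (with replacement) so that Hoeffding's inequality applies verbatim. If instead one insists on a set $S$ of distinct indices, I would invoke the version of Hoeffding's bound for sampling without replacement, which gives the same tail and hence the same conclusion. I would also note that the identical proof goes through after rescaling $x$ to $[0,1]^n$ and applying a two-sided Chernoff bound, in case one prefers not to cite Hoeffding's inequality in this precise form.
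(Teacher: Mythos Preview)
Your proof is correct: the Hoeffding-based argument with $s=\lceil \tfrac12\varepsilon^{-2}\ln(2\delta^{-1})\rceil$ samples yields exactly the stated guarantee, and the running time is immediate. The paper does not actually supply its own proof of this lemma---it simply cites it as well known (referring to \cite[Lemma~2.2]{bhattacharyya22property})---so there is nothing to compare against; your argument is the standard one.
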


We will also use the following solver for Laplacian systems.

\begin{theorem}[Andoni et al.~\cite{andoni2019solving}]
\label{thm:sdd-solver}
	Consider a symmetric diagonally dominant (SDD) matrix $S \in \mathbb{R}^{n\times n}$,
	a vector $b\in \mathbb{R}^n$, $u\in [n],\varepsilon>0$, and $\bar{\kappa} \geq 1$, 
	where
    \begin{itemize}
        \item $b\in \mathbb{R}^n$ is in the range of $S$ (equivalently, orthogonal to the kernel of $S$),
        \item $\bar{\kappa}$ is an upper bound on the condition number
		$\kappa(\tilde{S})$, where $\tilde{S} \triangleq D^{-1/2}SD^{-1/2}$ and
		$D \triangleq \Diag{S_{11}, \dots, S_{nn}} $.
    \end{itemize}
	There exists a randomized algorithm that, given $S$, $b$, $u$, $\varepsilon$, and $\bar{\kappa}$,
	outputs $\hat{x}_u \in \mathbb{R}$ with the following guarantee. Suppose
	$x^*$ is the solution for $Sx=b$, then 
    $$
    \forall u \in[n], \quad \Prob{\left|\hat{x}_u-x_u^*\right| \leq \epsilon\left\|x^*\right\|_{\infty}} \geq 1-\frac{1}{r}
    $$
    for suitable $r=O\left(\bar{\kappa} \log \left(\epsilon^{-1} \bar{\kappa}\|b\|_0 \cdot \frac{\max _{i \in[n]} D_{i i}}{\min _{i \in[n]} D_{i i}}\right)\right)$. The algorithm runs in time $O\left(f \epsilon^{-2} r^3 \log r\right)$, where $f$ is the time to make a step in a random walk in the weighted graph formed by the non-zeros of $S$.
\end{theorem}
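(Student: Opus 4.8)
The plan is to reconstruct the Andoni--Krauthgamer--Pogrow argument for sublinear-time SDD solving: reduce the system to a symmetrically normalised one, expand the inverse as a Neumann series in a substochastic lazy-random-walk matrix, truncate the series after $\Theta(\bar\kappa\log(\cdot))$ terms, and estimate the $u$-th coordinate of each surviving term by a Monte-Carlo average over short random walks. First I would normalise: putting $y^* = D^{1/2}x^*$ and $\tilde b = D^{-1/2}b$, the system $Sx^*=b$ becomes $\tilde S y^* = \tilde b$; since $b$ lies in the range of $S$, $\tilde b$ lies in the range of $\tilde S$, so the restriction of $\tilde S$ to $(\ker\tilde S)^\perp$ is invertible with eigenvalues in $[\lambda_{\min}(\tilde S),\lambda_{\max}(\tilde S)]$ of ratio at most $\bar\kappa$, and because $\tilde S$ is a normalised SDD matrix (so $\tilde S_{ii}=1$, hence $\mathrm{tr}(\tilde S)=n$) one also has $1\le\lambda_{\max}(\tilde S)\le 2$.

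Next I would set $M = I - \tfrac12\tilde S$, acting on $(\ker\tilde S)^\perp$, so that the spectral radius satisfies $\rho(M)\le 1-\tfrac12\lambda_{\min}(\tilde S)\le 1-\Omega(1/\bar\kappa)$ and $y^* = \tfrac12\sum_{i\ge 0}M^i\tilde b$. Truncating after $r$ terms leaves a tail of $\ell_2$-norm at most $\tfrac{\rho(M)^r}{1-\rho(M)}\norm{\tilde b}_2 = O(\bar\kappa)\,e^{-r/O(\bar\kappa)}\norm{\tilde b}_2$, and choosing $r = O\!\left(\bar\kappa\log\!\big(\varepsilon^{-1}\bar\kappa\norm{b}_0\cdot\tfrac{\max_i D_{ii}}{\min_i D_{ii}}\big)\right)$ drives this tail below $\tfrac{\varepsilon}{2}\norm{x^*}_\infty$ in every coordinate. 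The random-walk view enters because, writing $S = D - B$ (in our application $S=I+L$, so $B=A$), one has $M = D^{1/2}\tilde P D^{-1/2}$ where $\tilde P = \tfrac12 I + \tfrac12 D^{-1}B$ is a substochastic lazy-walk matrix whose row defect at a vertex plays the role of a timeout/teleportation probability; hence
\[
x^*_u \;=\; \tfrac12\sum_{i=0}^{r-1}\Exp{\1[X_0,\dots,X_i\text{ alive}]\cdot \frac{b_{X_i}}{D_{X_iX_i}}} \;+\; (\text{tail}),
\]
for a lazy walk $(X_j)_{j\ge 0}$ started at $u$; the factor $\sqrt{D_{uu}}$ that would come from $x^* = D^{-1/2}y^*$ cancels against the normalisation of $\tilde b$, and this is essentially the estimator that Algorithm~\ref{alg:random-walks} computes when $S = I+L$ and $b = s$.

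For the concentration step I would draw $\ell$ independent lazy walks of length $r$ from $u$ and average the bracketed quantity. Each sample lies in an interval of width $O\!\big(r\cdot\max_v\abs{b_v}/D_{vv}\big)$, so Hoeffding's inequality gives that $\ell = O(\varepsilon^{-2}r^2\log r)$ walks suffice for the average to be within $\tfrac{\varepsilon}{2}\norm{x^*}_\infty$ of its mean with probability at least $1-\tfrac1r$. Combining this with the truncation bound of the previous paragraph yields $\abs{\hat x_u - x^*_u}\le\varepsilon\norm{x^*}_\infty$ with probability at least $1-\tfrac1r$; since each of the $\ell$ walks takes $r$ steps and each step costs $f$ (the time for one random-walk step in the graph of nonzeros of $S$), the overall running time is $O(f\varepsilon^{-2}r^3\log r)$.

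The hard part is the two places where the error must be controlled \emph{relative to} $\norm{x^*}_\infty$ rather than merely in $\ell_2$: bounding the truncated tail and the Monte-Carlo deviation per coordinate forces one to lower-bound $\norm{x^*}_\infty$ in terms of the data $(b,D,\lambda_{\max}(S))$, and it is precisely this estimate that injects the $\log\!\big(\norm{b}_0\cdot\max_i D_{ii}/\min_i D_{ii}\big)$ factor into $r$ and the $\varepsilon^{-2}r^2$ into the number of walks. A smaller but indispensable technicality is to carry out the entire Neumann-series argument on $(\ker\tilde S)^\perp$ (for a genuine Laplacian, $b\perp\1$), so that $(I-M)^{-1}$ is well-defined and the walk's ``death'' events account for exactly the missing probability mass.
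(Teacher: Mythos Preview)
This theorem is quoted verbatim from \cite{andoni2019solving} and is \emph{not} proved in the present paper; it appears in the appendix only as a black-box tool invoked to obtain Proposition~\ref{pro:solver}. Your reconstruction is the standard Andoni--Krauthgamer--Pogrow argument (normalise, Neumann-expand $I-\tfrac12\tilde S$, interpret the terms as lazy random walks with timeout, truncate after $r$ steps, and Hoeffding-average over $\ell$ walks) and matches both the statement and the estimator actually implemented in Algorithm~\ref{alg:random-walks}, so there is nothing in the paper to compare it against.
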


Next, we state a conservation law between different measures. We note that the
version of this law in~\cite{chen2018quantifying} is stated for mean-centered
opinions but it also holds when the opinions are not centered. We use the law in
the discussion in Section~\ref{sec:experiments}.
\begin{lemma}[Chen et al.~\cite{chen2018quantifying}]
\label{lem:conservation-law}
	We have the following conservation law:
	$\Internal + 2\Disagreement + \Controversy = \norm{s}_2^2$.
\end{lemma}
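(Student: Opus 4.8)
The plan is to reduce all three quantities to quadratic forms in the equilibrium vector~$z^*$ and then expand a single square. Recall from the setup that $z^*=(I+L)^{-1}s$, equivalently $s=(I+L)z^*$. Since $I+L$ is symmetric, this immediately gives
\[
	\norm{s}_2^2 = s^\top s = (z^*)^\top (I+L)^\top (I+L) z^* = (z^*)^\top (I+L)^2 z^*,
\]
so it suffices to show that $\Internal + 2\Disagreement + \Controversy$ equals the same quadratic form.

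First I would handle the internal conflict. Since $s - z^* = (I+L)z^* - z^* = Lz^*$, we get $\Internal = \norm{s-z^*}_2^2 = (z^*)^\top L^\top L\, z^* = (z^*)^\top L^2 z^*$, again using that $L$ is symmetric. The controversy needs nothing: $\Controversy = \norm{z^*}_2^2 = (z^*)^\top z^*$. For the disagreement I would invoke the standard Laplacian quadratic-form identity, namely that for every vector $x\in\mathbb{R}^n$ one has $\sum_{(u,v)\in E} w_{uv}(x_u-x_v)^2 = x^\top L x$; this follows from $L=D-A$ by expanding the sum over edges. Applying it with $x=z^*$ yields $\Disagreement = (z^*)^\top L z^*$.

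Combining the three and using $(I+L)^2 = I + 2L + L^2$ (which is valid since $I$ and $L$ commute), we obtain
\[
	\Internal + 2\Disagreement + \Controversy
	= (z^*)^\top\bigl(L^2 + 2L + I\bigr)z^*
	= (z^*)^\top (I+L)^2 z^*
	= \norm{s}_2^2,
\]
which is the claimed conservation law.

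There is essentially no hard step here: the only points requiring a moment of care are (i) verifying the edge-wise Laplacian identity $\sum_{(u,v)\in E} w_{uv}(x_u-x_v)^2 = x^\top L x$, and (ii) using symmetry of $I+L$ to move it across the inner product. As noted in the remark preceding the lemma, this derivation never uses mean-centering of the opinions, so the law holds verbatim for the (non-centered) opinions $s$ and $z^*$ as defined in Table~\ref{tab:measures}.
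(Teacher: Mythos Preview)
Your proof is correct and follows essentially the same approach as the paper: both arguments reduce $\Internal$, $\Disagreement$, and $\Controversy$ to quadratic forms with matrices $L^2$, $L$, and $I$, respectively, and then collapse their sum via $(I+L)^2 = I + 2L + L^2$. Your version is in fact a bit more streamlined, since you work directly in terms of $z^*$ (using $s-z^*=Lz^*$ and $s=(I+L)z^*$), whereas the paper expresses the same quadratic forms as $s^\intercal (I+L)^{-1}(\cdot)(I+L)^{-1}s$ and derives $\Internal = s^\intercal (I+L)^{-1}L^2(I+L)^{-1}s$ through a longer expansion.
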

\begin{proof}
	First, note that
	\begin{align*}
		\Internal &= \sum_{u\in V} (s_u - z_u^*)^2 \\
		&= \sum_{u\in V} s_u^2
			- 2 \sum_{u\in V} s_u z_u^*
			+ \sum_{u\in V} (z_u^*)^2 \\
		&= \norm{s}_2^2
			- 2 s^\intercal (I+L)^{-1} s
			+ s^\intercal (I+L)^{-2} s \\
		&= s^\intercal (I+L)^{-1} ((I+L)^2 - 2(I+L) + I) (I+L)^{-1} s \\
		&= s^\intercal (I+L)^{-1} (I+ 2L + L^2 - 2(I+L) + I) (I+L)^{-1} s \\
		&= s^\intercal (I+L)^{-1} L^2 (I+L)^{-1} s.
	\end{align*}
	Now since $\Controversy = \sum_u (z_u^*)^2 = \norm{z}_2^2 = s^\intercal
	(I+L)^{-2} s$ and $\Disagreement = s^\intercal (I+L)^{-1} L (I+L)^{-1} s$,
	we get the desired equality since
	\begin{align*}
		\Internal + 2\Disagreement + \Controversy
		&= s^\intercal (I+L)^{-1} (L^2 + 2L + I) (I+L)^{-1} s \\
		&= s^\intercal (I+L)^{-1} (I+L)^2 (I+L)^{-1} s \\
		&= \norm{s}_2^2.
	\end{align*}
\end{proof}

\subsection{Proof of Proposition~\ref{pro:solver}}
\label{sec:proof:pro:solver}
First, we briefly present the details of the lazy random walks with timeout
that are used in Algorithm~\ref{alg:random-walks}. One random-walk step from
vertex~$v$ is performed as follows. With probability~$1/2$, the walk stays
at~$v$. With probability $w_{vw} / (2 (1 + w_{v}))$, it moves to neighbor~$w$
of~$v$. With the remaining probability of $1/2 - w_v / (2(1+w_v))$, the random
walk terminates. Note that this corresponds to a random walk with timeouts on
the matrix~$I+L$.

To obtain the result of the proposition, we use Theorem~\ref{thm:sdd-solver}.
The theorem considers a term $\frac{\max_i D_{ii}}{\min_i D_{ii}}$ which
in our setting becomes $\frac{\max_i (I+L)_{ii}}{\min_i (I+L)_{ii}} \leq
\max_u w_u+1$. Here, we used that $(I+L)_{ii} \geq 1$ for all $i\in[n]$,
and $\norm{z^*}_\infty = \max_u z_u^*\leq 1$. To apply the theorem, we also
utilize the fact that $\norm{s}_0 \leq n$.

\subsection{Proof of Proposition~\ref{prop:page-rank-equation}}
	First, observe that the expressed equilibrium opinions~$z^*$ must satisfy
	the update rule in Equation~\eqref{eq:update-rule} with equality. Thus, by
	expressing the update rule in matrix notation, we obtain that $z^*$ is the unique solution to the equation
		$z^* = (I+D)^{-1} (s + A z^*).$

	Next, set $M=(I+D)^{-1}$. Then a calculation reveals that:
	\begin{align*}
		M
		&= (I+D)^{-1} \\
		&= ((I+D^{-1}) D)^{-1} \\
		&= (I+D^{-1})^{-1} D^{-1} \\
		&= (I-M) D^{-1},
	\end{align*}
	where the last equality follows from observing that $I+D^{-1}$ and $I-M$ are
	both diagonal matrices and then verifying that for each entry it holds that
	$\left[ (I+D^{-1})^{-1} \right]_{ii} = [I-M]_{ii}$.

	By combining the last two equations we get the statement from the lemma.

\subsection{Proof of Lemma \ref{lem:algo-identity}}\label{app:pagerank}

\subsubsection{Local Personalized PageRank with (Non-Lazy) Random Walks}
\label{sec:personalized-pagerank}
We start by discussing how the algorithm by Andersen et
al.~\cite{andersen2006local} can be adapted to work with (non-lazy) random walks.

We let $\alpha\in(0,1)$ be the teleportation constant, $s\in[0,1]^n$ be a vector
consisting of a distribution and we set $W_s = D^{-1} A$. Then we define
$\pr'(\alpha,s)$ as the unique solution of the equation
$\pr'(\alpha,s) = \alpha s + (1-\alpha) W_s \pr'(\alpha, s)$.
Note that this differs from the classic personalized PageRank only by fact that
we use (non-lazy) random walks (based on $W_s$) rather than lazy random walks
(based on $W = \frac{1}{2}(I + D^{-1}A)$).

Recall that 
\begin{align}
	\pr'(\alpha,s) = \alpha \sum_{i=0}^\infty (1-\alpha)^i (W_s^i s).
\end{align}
Thus, we get that
\begin{align*}
	\pr'(\alpha,s)
	&= R' s \\
	&= \alpha s + (1-\alpha) R' W_s s \\
	&= \alpha s + (1-\alpha) R' \pr'(\alpha, W_s s).
\end{align*}

Similar to~\cite{andersen2006local} this means that we can maintain a vector $p$
that approximates $\pr'(\alpha,s)$ and a residual vector~$r$.  Then we can
perform push-operations in which move an $\alpha$-fraction of weight from $r$ to
$p$ and then we distribute the weight in $r$ based on $W_s$.  Algorithmically,
this is formally stated in Algorithm~\ref{alg:push} and we use it as a
subroutine in Algorithm~\ref{alg:local}.

\begin{lemma}[Andersen et al.~\cite{andersen2006local}]
\label{lem:performance-local}
Algorithm~\ref{alg:local} runs in time $O(\frac{1}{\varepsilon \alpha})$, and computes a vector $p$ which approximates $\pr'(\alpha,s)$ with a residual vector~$r$ such that $\max_{u\in V}\frac{r(u)}{d_u}< \varepsilon$ and $\mathrm{vol}(\mathrm{Supp}(p)) \leq \frac{1}{\varepsilon \alpha}$. Here, $\mathrm{vol}(S)=\sum_{x\in S}d_x$ is the volume of a subset $S \subseteq V$, and $\mathrm{Supp}(p)=\{v \mid p(v) \neq 0\}$ is the support of a distribution $p$.
\end{lemma}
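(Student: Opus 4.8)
The plan is to port the Andersen--Chung--Lang analysis of the \textsc{Push} procedure~\cite{andersen2006local} to the non-lazy operator $W_s = D^{-1}A$. Recall that Algorithm~\ref{alg:local} maintains a pair of nonnegative vectors $(p,r)$, initialized to $p=\vec{0}$ and $r=s$, and repeatedly calls \textsc{Push} (Algorithm~\ref{alg:push}) on some vertex $v$ with $r(v)\ge \varepsilon d_v$ until no such vertex exists. The first step is to record the invariant
\[
  p + \pr'(\alpha, r) = \pr'(\alpha, s),
\]
which holds at initialization and is preserved by each push: pushing at $v$ replaces $(p,r)$ by $\bigl(p+\alpha r(v)\mathbbm{1}_v,\; r - r(v)\mathbbm{1}_v + (1-\alpha)r(v)W_s\mathbbm{1}_v\bigr)$, so by linearity of the map $x\mapsto \pr'(\alpha,x)=R'x$, the change in $p + \pr'(\alpha,r)$ equals $r(v)\bigl(\alpha\mathbbm{1}_v - \pr'(\alpha,\mathbbm{1}_v) + (1-\alpha)\pr'(\alpha,W_s\mathbbm{1}_v)\bigr)=0$, using the recursive identity $\pr'(\alpha,x)=\alpha x + (1-\alpha)\pr'(\alpha,W_sx)$ from Section~\ref{sec:personalized-pagerank}. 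Since pushes only add nonnegative mass, $p,r\ge \vec{0}$ throughout, so this already yields the residual bound: the loop exits only when $r(v) < \varepsilon d_v$ for all $v$, i.e.\ $\max_{v\in V} r(v)/d_v < \varepsilon$, and $p$ approximates $\pr'(\alpha,s)$ with additive error vector $\pr'(\alpha,r)$.

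For the running time and the support-volume bound I would track the potential $\norm{p}_1$. Each push at $v$ increases $\norm{p}_1$ by exactly $\alpha r(v) \ge \alpha\varepsilon d_v$, because a push is only performed when $r(v)\ge \varepsilon d_v$. On the other hand, by the invariant and nonnegativity, $\norm{p}_1 \le \norm{p}_1 + \norm{\pr'(\alpha,r)}_1 = \norm{\pr'(\alpha,s)}_1 \le \norm{s}_1 = 1$, where the last inequality uses that $W_s=D^{-1}A$ does not increase the $\ell_1$-mass of a nonnegative vector (in the $d$-regular setting $W_s=\tfrac1d A$ is doubly stochastic, so it is an equality). Summing the per-push increments, the total unweighted degree over all pushes, counted with multiplicity, is at most $\tfrac{1}{\alpha\varepsilon}$. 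Since a single push at $v$ costs $O(d_v)$ --- it touches $v$ and its $d_v$ neighbors and updates the list of active vertices in $O(1)$ per touched vertex --- the running time is $O\bigl(\tfrac{1}{\varepsilon\alpha}\bigr)$. Likewise $\mathrm{Supp}(p)$ is contained in the set of vertices pushed at least once, so $\mathrm{vol}(\mathrm{Supp}(p)) = \sum_{v\in\mathrm{Supp}(p)}d_v$ is bounded by the same degree sum, hence by $\tfrac{1}{\varepsilon\alpha}$.

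The step I expect to need the most care is the $\ell_1$-nonexpansiveness claim $\norm{\pr'(\alpha,s)}_1\le\norm{s}_1$: this is precisely what converts ``work done'' into ``budget consumed'' and is the point where replacing the lazy walk matrix $W$ by the non-lazy matrix $W_s$, together with the regularity assumption, actually enters. In the $d$-regular regime in which Algorithm~\ref{alg:pagerank} invokes this lemma the claim is immediate since $\tfrac1d A$ is symmetric and doubly stochastic; one only has to phrase it so that it also covers the operator $W_s$ used in the adapted \textsc{Push}. Everything else --- the invariant, the termination condition, and the degree-sum counting --- is routine bookkeeping.
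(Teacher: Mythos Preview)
Your proposal is correct and is exactly the standard Andersen--Chung--Lang amortization argument; the paper itself does not reprove the lemma but simply attributes it to~\cite{andersen2006local}, so there is no substantive difference to report. Your emphasis on the step $\norm{\pr'(\alpha,s)}_1\le\norm{s}_1$ and the need for $d$-regularity (so that $W_s=\tfrac1d A$ is doubly stochastic) is apt: this is precisely the setting in which Algorithm~\ref{alg:pagerank} invokes the lemma, and it is also what makes your push description $r\to r-r(v)\mathbbm{1}_v+(1-\alpha)r(v)W_s\mathbbm{1}_v$ coincide with the neighbor update $r'(w)\gets r(w)+(1-\alpha)r(u)/d_u$ in Algorithm~\ref{alg:push}.
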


	\begin{algorithm}[!t]
	\begin{algorithmic}[1]
 \REQUIRE A graph $G=(V,E,w)$, a parameter $\alpha \in (0,1)$ and an error parameter $\varepsilon$
		\STATE $p\gets \vec{0}$ and $r\gets \mathbbm{1}_v$
		\REPEAT
			\STATE	Choose any vertex $u$ where $\frac{r(u)}{d_u} \geq \varepsilon$
				or where $r(u)=1$
			\STATE	Apply Algorithm~\ref{alg:push} at vertex $u$, updating $p$ and $r$
		\UNTIL{$\max_{u\in V} \frac{r(u)}{d_u} < \varepsilon$}
		\RETURN $(p, r)$
	\end{algorithmic}
	\caption{Local personalized PageRank algorithm}
	\label{alg:local}
	\end{algorithm}

	\begin{algorithm}[!t]
	\begin{algorithmic}[1]
 \REQUIRE A graph $G=(V,E,w)$, a vector $p$ and a residual vector $r$
		\STATE $p'\gets p$ and $r'\gets r$
		\STATE	$p'(u)=p(u)+\alpha r(u)$
  \STATE	$r'(u)=(1-\alpha) r(u)$
		\FOR{each $v$ such that $(u,v)\in E$}
			\STATE $r'(v)=r(v)+(1-\alpha)r(u)/d_u$
		\ENDFOR
		\RETURN $(p', r')$
	\end{algorithmic}
	\caption{$\textsc{Push}_u$}
	\label{alg:push}
	\end{algorithm}

\subsubsection{Proof of Lemma \ref{lem:algo-identity}}

Next, we prove Lemma \ref{lem:algo-identity}. 
First, observe that $z_u^* = \mathbbm{1}_u^{\intercal} z^*$.
Hence, we have that 
$z_u^* = \mathbbm{1}_u^{\intercal} z^*
	= \mathbbm{1}_u^{\intercal} (I+L)^{-1} s$, where $s$ is the vector consisting of the innate opinions of all vertices.
Now our strategy is to show that $\pr'(\alpha,s) = (I+L)^{-1} \mathbbm{1}_u^{\intercal}$ 
and then we will argue that Algorithm~\ref{alg:pagerank} returns a good enough
approximation $p$ of $\pr'(\alpha,s)$ such that
$p^\intercal s \approx \pr'(\alpha,s) s 
	= \mathbbm{1}_u^{\intercal} (I+L)^{-1} s = z_u^*$.
When in Algorithm~\ref{alg:pagerank} we use the algorithm
from~\cite{andersen2006local} as a subroutine, we actually use
Algorithm~\ref{alg:local}.

We first give a closed form of $(I+L)^{-1}$ for any $d$-degree bounded graphs,
i.e., graphs with maximum degree at most $d$.

\begin{lemma}
\label{lem:identity}
	For $d$-degree bounded graphs, we have that
	$(I+L)^{-1} = \left(
				\sum_{i=0}^{\infty} 
				((I-(D+I)^{-1}) D^{-1} A)^i
			\right)
			(D+I)^{-1}$.
\end{lemma}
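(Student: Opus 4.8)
The plan is to reduce the claimed series identity to the algebraic factorization $I+L = (D+I)(I-M)$, where $M := (I-(D+I)^{-1})D^{-1}A$, and then to expand $(I-M)^{-1}$ as a Neumann series. First I would simplify $M$: the diagonal matrix $(I-(D+I)^{-1})D^{-1}$ has $v$-th entry $\bigl(1-\tfrac{1}{d_v+1}\bigr)\tfrac{1}{d_v} = \tfrac{1}{d_v+1}$, so in fact $(I-(D+I)^{-1})D^{-1} = (D+I)^{-1}$, i.e. $M = (D+I)^{-1}A$ and hence $(D+I)M = A$. Therefore $(D+I)(I-M) = (D+I) - A = I + (D-A) = I + L$. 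Since $I+L$ is positive definite (as $L \succeq 0$), both factors on the right are invertible, and taking inverses gives $(I+L)^{-1} = (I-M)^{-1}(D+I)^{-1}$. Thus it remains only to prove $(I-M)^{-1} = \sum_{i=0}^{\infty} M^i$, i.e. that this Neumann series converges.

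For convergence it suffices to show that the spectral radius of $M$ satisfies $\rho(M) < 1$. Since $(D+I)^{-1}$ is diagonal with positive entries, $M = (D+I)^{-1}A$ is similar (conjugating by $(D+I)^{1/2}$) to the symmetric matrix $N := (D+I)^{-1/2} A (D+I)^{-1/2}$, so the eigenvalues of $M$ are real and coincide with those of $N$. From $L \succeq 0$ we get $I+L = (D+I) - A \succ 0$, and conjugating this inequality by $(D+I)^{-1/2}$ yields $I - N \succ 0$, i.e. every eigenvalue of $N$ is strictly less than $1$. For a matching lower bound I would use the signless Laplacian $D+A$, which is positive semidefinite because $x^{\intercal}(D+A)x = \sum_{(u,v)\in E} w_{uv}(x_u+x_v)^2 \ge 0$; hence $I + D + A \succ 0$, and conjugating by $(D+I)^{-1/2}$ yields $I + N \succ 0$, i.e. every eigenvalue of $N$ is strictly greater than $-1$. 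Combining the two bounds, all eigenvalues of $M$ lie in $(-1,1)$, so $\rho(M)<1$; consequently $M^{k}\to 0$ and $\sum_{i=0}^{k} M^i = (I - M^{k+1})(I-M)^{-1} \to (I-M)^{-1}$ as $k\to\infty$. Substituting $(I-M)^{-1} = \sum_{i=0}^{\infty} \bigl((I-(D+I)^{-1})D^{-1}A\bigr)^i$ into $(I+L)^{-1} = (I-M)^{-1}(D+I)^{-1}$ then gives exactly the stated formula.

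I expect the convergence step to be the only real obstacle — specifically, spotting that one needs both positive-definiteness facts ($I+L \succ 0$ and $I+D+A \succ 0$) and then reading off the two-sided eigenvalue bound on $N$ from them; the factorization $I+L = (D+I)(I-M)$ is just a one-line computation. I would also remark that the $d$-degree bound in the statement is not actually used for this identity: all that is needed is that $G$ is connected, so that $D$ (hence $D^{-1}$) is well-defined; the boundedness only matters for the running-time arguments that rely on this lemma.
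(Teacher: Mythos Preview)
Your argument is correct and in fact cleaner and more complete than the paper's. The paper proceeds by first conjugating $I+L$ by $D^{1/2}$, writing $(I+L)^{-1} = D^{-1/2}\bigl[D^{-1}+I - D^{-1/2}AD^{-1/2}\bigr]^{-1}D^{-1/2}$, then factoring the bracket, expanding as a Neumann series, and finally spending another half-page of diagonal-matrix identities (such as $(D^{-1}+I)^{-1} = D(D+I)^{-1}$ and $(D+I)^{-1}D = I - (D+I)^{-1}$) to massage the result into the stated form. You bypass all of this by spotting immediately that $(I-(D+I)^{-1})D^{-1} = (D+I)^{-1}$, which collapses the claim to the one-line factorization $I+L = (D+I)(I-M)$ with $M=(D+I)^{-1}A$. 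Your route also supplies something the paper omits entirely: a justification that the Neumann series converges, via the two-sided eigenvalue bound on $N=(D+I)^{-1/2}A(D+I)^{-1/2}$ coming from positive definiteness of $I+L$ and of $I+D+A$. Your closing remark that the degree bound is irrelevant to the identity itself is also correct (one only needs $D$ invertible, i.e.\ no isolated vertices); the paper uses the $d$-regularity only downstream, to identify $(D+I)^{-1}$ with the scalar $\alpha=\tfrac{1}{d+1}$.
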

\begin{proof}
	First, we have that
	\begin{align*}
		&(I+L)^{-1}\\
		=~&(I+D-A)^{-1} \\
		=~&[D^{1/2}(D^{-1}+I-D^{-1/2} A D^{-1/2})D^{1/2}]^{-1} \\
		=~&D^{-1/2}[D^{-1}+I-D^{-1/2} A D^{-1/2}]^{-1}D^{-1/2} \\
		=~&D^{-1/2}[(D^{-1}+I)(I - (D^{-1}+I)^{-1} (D^{-1/2} A D^{-1/2}))]^{-1}D^{-1/2} \\
		=~&D^{-1/2}[I - (D^{-1}+I)^{-1} (D^{-1/2} A D^{-1/2})]^{-1} (D^{-1}+I)^{-1} D^{-1/2} \\
		=~&D^{-1/2}
			\left(
				\sum_{i=0}^{\infty} ((D^{-1}+I)^{-1} (D^{-1/2} A D^{-1/2}))^{i}
			\right)
			(D^{-1}+I)^{-1}
			D^{-1/2}.
	\end{align*}
	Next, observe the identities $(D^{-1} + I)^{-1} = D (D+I)^{-1}$ and
	$(D+I)^{-1}D = I - (D+I)^{-1}$ which can be checked for each diagonal
	element.  Now using that the matrix multiplication of diagonal matrices is
	commutative, we obtain that
	\begin{align*}
		&((D^{-1}+I)^{-1} (D^{-1/2} A D^{-1/2}))^{i} \\
		=~&(D^{-1}+I)^{-1} D^{-1/2} A D^{-1/2}
			\cdot (D^{-1}+I)^{-1} D^{-1/2} A D^{-1/2}
			\cdots\\
			&(D^{-1}+I)^{-1} D^{-1/2} A D^{-1/2} \\
		=~&D (D+I)^{-1} D^{-1/2} A D^{-1/2}
			\cdot D (D+I)^{-1} D^{-1/2} A D^{-1/2}
			\cdots\\
			&D (D+I)^{-1} D^{-1/2} A D^{-1/2} \\
		=~&D^{1/2} ((D+I)^{-1} A)^i D^{-1/2} \\
		=~&D^{1/2} ((D+I)^{-1} D \cdot D^{-1} A)^i D^{-1/2} \\
		=~&D^{1/2} ((I-(D+I)^{-1}) D^{-1} A)^i D^{-1/2}. 
	\end{align*}
	Thus, we can continue our calculation from above to get that
	\begin{align*}
		&(I+L)^{-1}\\
		=~&D^{-1/2}
			\left(
				\sum_{i=0}^{\infty} ((D^{-1}+I)^{-1} (D^{-1/2} A D^{-1/2}))^{i}
			\right)
			(D^{-1}+I)^{-1}
			D^{-1/2} \\
		=~&D^{-1/2}
			\left(
				\sum_{i=0}^{\infty} 
				D^{1/2} ((I-(D+I)^{-1}) D^{-1} A)^i D^{-1/2}
			\right)
			(D^{-1}+I)^{-1}
			D^{-1/2} \\
		=~&\left(
				\sum_{i=0}^{\infty} 
				((I-(D+I)^{-1}) D^{-1} A)^i
			\right)
			(D^{-1}+I)^{-1}
			D^{-1} \\
		=~&\left(
				\sum_{i=0}^{\infty} 
				((I-(D+I)^{-1}) D^{-1} A)^i
			\right)
			(D+I)^{-1},
	\end{align*}
	where in the final step we again used the identity from above.
\end{proof}

Next, set $M=(D+I)^{-1}$. Note that then the equality in
Lemma~\ref{lem:identity} becomes
$(I+L)^{-1} = \left(
				\sum_{i=0}^{\infty} 
				((I-M) D^{-1} A)^i
			\right)
			M$.

In the following, we assume the input graph is $d$-regular, i.e., $D_{ii}=d$ for
all $i\in V$. Note that this implies that $M=\frac{1}{d+1}I$. By setting $\alpha=\frac{1}{d+1}$, 
Lemma~\ref{lem:identity} implies that
\begin{align*}
	(I+L)^{-1}
	&= \left( \sum_{i=0}^{\infty} ((I-M) D^{-1} A)^i \right) M \\
	&= \alpha \sum_{i=0}^\infty \left((1-\alpha)D^{-1}A\right)^i \\
	&= \alpha \sum_{i=0}^\infty \left((1-\alpha) W_s\right)^i.
\end{align*}
We therefore get the following corollary.
\begin{corollary}
	Let $\alpha=\frac{1}{d+1}$ and $s$ be the vector of innate opinions.
	Then for $d$-regular graphs it holds that $\pr'(\alpha,s) = (I+L)^{-1} s$.
\end{corollary}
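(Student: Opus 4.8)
The plan is to obtain the corollary as a one-line specialization of Lemma~\ref{lem:identity} combined with the series identity~\eqref{eq:pr-identity} for $\pr'$; essentially all the real work has already been done, and what remains is bookkeeping. First I would recall from~\eqref{eq:pr-identity} that $\pr'(\alpha,s) = \alpha\sum_{i=0}^{\infty}(1-\alpha)^i W_s^i s$, so it suffices to show that the operator $\alpha\sum_{i=0}^{\infty}\bigl((1-\alpha)W_s\bigr)^i$ coincides with $(I+L)^{-1}$ on a $d$-regular graph. Setting $M=(D+I)^{-1}$ as in Lemma~\ref{lem:identity}, the hypothesis $D=dI$ gives $M=\frac{1}{d+1}I=\alpha I$, hence $I-M=\frac{d}{d+1}I=(1-\alpha)I$, while $W_s=D^{-1}A=\frac1d A$. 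Therefore $(I-M)D^{-1}A=(1-\alpha)W_s$ and the trailing factor in Lemma~\ref{lem:identity} is just $M=\alpha I$, so that lemma reads $(I+L)^{-1}=\bigl(\sum_{i=0}^{\infty}((1-\alpha)W_s)^i\bigr)\alpha I=\alpha\sum_{i=0}^{\infty}\bigl((1-\alpha)W_s\bigr)^i$, where $\alpha$ commutes because it is a scalar. Applying both sides to $s$ and comparing with~\eqref{eq:pr-identity} yields $\pr'(\alpha,s)=(I+L)^{-1}s$.

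The only point that needs care is that all the Neumann series in play actually converge. Since $G$ is connected and $d$-regular, $W_s=\frac1d A$ is symmetric with eigenvalues in $[-1,1]$ (it equals $D^{-1/2}AD^{-1/2}$), so the spectral radius of $(1-\alpha)W_s$ is at most $1-\alpha<1$. This makes $\sum_{i\ge 0}\bigl((1-\alpha)W_s\bigr)^i=\bigl(I-(1-\alpha)W_s\bigr)^{-1}$ well-defined, legitimizes the algebra both in Lemma~\ref{lem:identity} and in~\eqref{eq:pr-identity}, and guarantees that $\pr'(\alpha,s)$ exists as the unique fixed point of $x=\alpha s+(1-\alpha)W_s x$.

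An alternative route, which I would mention as a sanity check, avoids the series entirely: for $d$-regular $G$ one has $I+L=(d+1)I-A=(d+1)\bigl(I-(1-\alpha)W_s\bigr)$, so $(I+L)^{-1}s=\alpha\bigl(I-(1-\alpha)W_s\bigr)^{-1}s$; multiplying through by $I-(1-\alpha)W_s$ shows that $(I+L)^{-1}s$ satisfies $x=\alpha s+(1-\alpha)W_s x$, and uniqueness of the solution forces $\pr'(\alpha,s)=(I+L)^{-1}s$. Either way, there is no genuine obstacle here: the substance is in Lemma~\ref{lem:identity}, and the main thing not to overlook is the $d$-regularity hypothesis, which is precisely what collapses the matrix $M$ to the scalar $\alpha$ and lets the FJ dynamics be identified with (non-lazy) personalized PageRank.
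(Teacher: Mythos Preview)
Your proposal is correct and follows essentially the same approach as the paper: specialize Lemma~\ref{lem:identity} to the $d$-regular case so that $M=(D+I)^{-1}$ collapses to the scalar $\alpha=\frac{1}{d+1}$, identify the resulting operator $\alpha\sum_{i\ge 0}((1-\alpha)W_s)^i$ with $(I+L)^{-1}$, and then match it against the series identity~\eqref{eq:pr-identity} for $\pr'$. Your added remarks on convergence of the Neumann series and the alternative fixed-point argument are sound extras that the paper leaves implicit.
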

\begin{proof}
	This follows from combining the identity
	$\pr'(\alpha,s) = \alpha \sum_{i=0}^\infty (1-\alpha)^i (W_s^i s)$
	from Section~\ref{sec:personalized-pagerank}
	with the discussion of Lemma~\ref{lem:identity} for $d$-regular graphs.
\end{proof}

\begin{proof}[Proof of Lemma \ref{lem:algo-identity}]
	Observe that this property holds when we initialize $p\gets0$ and
	$r\gets \mathbbm{1}_u$ at the beginning of Algorithm~\ref{alg:pagerank}.

	Now suppose one iteration of the while-loop ended.  Then by induction it is
	enough if we show that
	\begin{align*}
		\pr'(\alpha,r) = \sum_i r(i) p^{(i)} + \pr'\left(\alpha,\sum_i r^{(i)}\right).
	\end{align*}

	First, observe that $r = \sum_i r(i) \mathbbm{1}_i$.
	Next, note that for each $i$, it holds that
	$p^{(i)} + \pr'(\alpha,r^{(i)}) = \pr'(\alpha,\mathbbm{1}_i)$.
	Using Equation~\eqref{eq:pr-identity} we now obtain that
	\begin{align*}
		\pr'(\alpha,r)
		&= \pr'\left(\alpha, \sum_i r(i) \mathbbm{1}_i\right) \\
		&= \alpha \sum_{j=0}^\infty (1-\alpha)^j
				\left( W_s^j \left(\sum_i r(i) \mathbbm{1}_i\right) \right) \\
		&= \sum_i r(i) \alpha \sum_{j=0}^\infty (1-\alpha)^j ( W_s^j \mathbbm{1}_i ) \\
		&= \sum_i r(i) \pr'(\alpha,\mathbbm{1}_i) \\
		&= \sum_i r(i) (p^{(i)} + \pr'(\alpha,r^{(i)})).
	\end{align*}
	This proves the statement of the lemma.
\end{proof}

	For intuition on the correctness of our algorithm, observe that the first
	part of the sum $\sum_i r(i) p^{(i)}$ are exactly the changes that our
	algorithm makes to the vector~$p$. Furthermore, observe that again applying
	Equation~\eqref{eq:pr-identity}, we get that
	\begin{align*}
		\sum_i r(i) \pr'(\alpha,r^{(i)})
		&= \sum_i r(i) \alpha \sum_{j=0}^\infty (1-\alpha)^j \left( W_s^j r^{(i)} \right) \\
		&= \alpha \sum_{j=0}^\infty (1-\alpha)^j
				\left( W_s^j \left(\sum_i r(i) r^{(i)}\right) \right) \\
		&= \pr'\left(\alpha,\sum_i r(i) r^{(i)}\right)
	\end{align*}
	which is exactly why we set $r \gets \sum_i r(i) r^{(i)}$ in the algorithm.

\subsection{Proof of Lemma~\ref{lem:approx-error}}
	Using Lemma~\ref{lem:algo-identity} we get that 
	\begin{align*}
		\abs{z_u^* - p^{\intercal} s}
		= \abs{\pr'(\alpha,\mathbbm{1}_u)^{\intercal} s - p^{\intercal} s}
		= \abs{\pr'(\alpha,r)^\intercal s}
		\leq \norm{\pr'(\alpha,r)}_1,
	\end{align*}
	where in the last step we used that $s_u\in[0,1]$ for all $u\in V$.

	Now observe that since $W_s=D^{-1}A$ is a row-stochastic matrix, we have that
	$\norm{r^{\intercal} W_s^i}_1 \leq \norm{r}_1$ for all $i$. Hence, by
	Equation~\eqref{eq:pr-identity}, we get that
	\begin{align*}
		\norm{\pr'(\alpha,r)}_1
		\leq \alpha \norm{r}_1 \sum_i^\infty (1-\alpha)^i
		= \norm{r}_1
		\leq \varepsilon,
	\end{align*}
	where in the last step we used that our algorithm only terminates when
	$\norm{r}_1 \leq \varepsilon$.

\subsection{Proof of Lemma~\ref{lem:vector-drop}}
First, recall that Line~\ref{alg:personalizedpr}
	(i.e., the local personalized PageRank) in
	Algorithm~\ref{alg:pagerank} consists of the two subroutines in
	Algorithms~\ref{alg:local} and~\ref{alg:push} in Section~\ref{app:pagerank}.
	
	To prove the first claim, consider any iteration of the for-loop. Observe
	that we start the PageRank algorithm {(see
	Algorithm \ref{alg:local})} with residual vector $r = \mathbbm{1}_i$.
	Note that when the PageRank algorithm performs the first push-operation (see
	Algorithm \ref{alg:push}), this $\ell_1$-norm of the residual vector drops
	by a factor of $\alpha=\frac{1}{d+1}$.  Hence, for every entry $r(i)$ we get
	that
	\begin{align*}
		\norm{r(i) \mathbbm{1}_i}_1
		\leq r(i) \left(1-\frac{1}{d+1}\right) \norm{\mathbbm{1}_i}_1
		= r(i) \left(1-\frac{1}{d+1}\right).
	\end{align*}
	Since this holds for all $i$ and we assume that $d=O(1)$, we obtain that
	$\norm{r'}_1 \leq \left(1-\frac{1}{d+1}\right) \norm{r}_1$ at the end of
	each while-loop, where $r'$ is the resulting vector after applying the PageRank algorithm over $r$ after one iteration. 

	The second claim follows from the fact (see \cite{andersen2006local}) that the PageRank algorithm creates
	at most $O(d/\varepsilon)$ new non-zero entries when called upon any vector
	$\mathbbm{1}_i$. As we call the PageRank algorithm for every $i$ with $r(i)\neq0$,
	this increases the number of non-zero entries $p$ and $r$ by a factor of at
	most $O(d/\varepsilon)$.

\subsection{Proof of Corollary~\ref{cor:running-time}}
By Lemma~\ref{lem:vector-drop}, we get that $\norm{r}_1$ drops by a constant
	factor after each iteration of the while-loop. Hence, if we perform
	$k=(d+1)\lg(1/\varepsilon)$~iterations of the while-loop we have that
	\begin{align*}
		\left(1-\frac{1}{d+1}\right)^k
		\leq \exp\left( - \frac{k}{d+1} \right)
		\leq \varepsilon.
	\end{align*}
	Hence, we only have to perform $O(d \lg(1/\varepsilon))$~iterations of the
	while-loop.
	
	Note that each iteration of the while-loop increases the number of non-zero
	entries in $r$ by at most $O(d/\varepsilon)$. Hence, the number of
	non-zeros in $r$ is bounded by
	$(d/\varepsilon)^{O(k)} = (d/\varepsilon)^{O(d \lg(1/\varepsilon))}$ at all
	times. 
 
	By Lemma~\ref{lem:performance-local} in Section~\ref{app:pagerank}, we need
	time $O(\frac{1}{\varepsilon \alpha}) = O(\frac{d}{\varepsilon})$ every time
	we use the local personalized PageRank algorithm on a vector
	$\mathbbm{1}_i$ since $\alpha = \frac{1}{d+1}$.  Combining this with our
	observations from above, each iteration of the while-loop in
	Algorithm~\ref{alg:pagerank} takes time $O(d/\varepsilon) \cdot \norm{r}_0$,
	where $\norm{r}_0$ is the number of non-zero entries in $r$. Thus we obtain
	a total running time of $(d/\varepsilon)^{O(d \lg(1/\varepsilon))}$.

\subsection{Estimating the Measures Given Oracle Access to Innate Opinions~$s_u$}
\label{app:measures-oracle-innate}

Here, we prove the bounds claimed in the penultimate column of
Table~\ref{tab:measures}.

\begin{lemma}
\label{lem:estimation-oracle-innate-opinions}
	Let $\varepsilon,\delta\in(0,1)$,  $\bar{\kappa}$ be an upper bound on
	$\kappa(\tilde{S})$, where $\tilde{S} \triangleq (I+D)^{-1/2} (I+L) (I+D)^{-1/2}$, and
	$r=O(\bar{\kappa} \log(\varepsilon^{-1} n\bar{\kappa} (\max_u w_u) ))$. Then
	with probability at least $1-\delta$:
	\begin{itemize}
		\item We can return an estimate of $\avgExpressed$ with additive error
			$\pm\varepsilon$ in time $O(\varepsilon^{-2}\lg\delta^{-1})$.
		\item We can return an estimate of $\SumOpinions$ and $\norm{s}_2^2$ with additive error
			$\pm\varepsilon n$ in time $O(\varepsilon^{-2}\lg\delta^{-1})$.
		\item We can return an estimate of $\Controversy, \Polarization, \Internal, \Disagreement$ and $\DisCon$ with additive error
			$\pm\varepsilon n$ in time $O(\varepsilon^{-4}r^3  \log \delta^{-1} \lg r)$.
	\end{itemize}
\end{lemma}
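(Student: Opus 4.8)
The plan is to handle the three bullets in increasing order of difficulty, reusing the sampling primitive of Lemma~\ref{lem:sum} and the single-opinion estimator of Proposition~\ref{pro:solver}. For the first two bullets, the key observation is that $\avgExpressed$, $\SumOpinions$ and $\norm{s}_2^2$ can all be written directly in terms of the innate opinions, which we can query exactly in $O(1)$ time. Indeed, by the conservation of opinion averages (the standard fact that $\sum_u z_u^* = \sum_u s_u$, since $\1^\intercal (I+L)^{-1} = \1^\intercal$), we have $\SumOpinions = \sum_u s_u$ and $\avgExpressed = \frac1n\sum_u s_u$, and trivially $\norm{s}_2^2 = \sum_u s_u^2$. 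Each of these is a sum of $n$ terms in a bounded range ($[0,1]$ for $s_u$ and $s_u^2$), so applying Lemma~\ref{lem:sum} with $[a,b]=[0,1]$ (for $\SumOpinions$ and $\norm s_2^2$) gives additive error $\pm\varepsilon n$, and for $\avgExpressed$ we simply rescale the estimate by $1/n$, converting the $\pm\varepsilon n$ guarantee into $\pm\varepsilon$. The running time is $O(\varepsilon^{-2}\lg\delta^{-1})$ queries in all three cases.

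The third bullet is the substantive one. Here each measure is a sum over $n$ terms, but the terms depend on the equilibrium expressed opinions $z_u^*$, which we do not have directly — we must approximate them via Proposition~\ref{pro:solver}. The plan, following the $\Controversy$ example sketched in Section~\ref{sec:estimate-measures-given-s}, is a two-level argument. First, split the target error budget: set $\varepsilon_1 = \Theta(\varepsilon)$ for the per-opinion estimation error and $\varepsilon_2 = \Theta(\varepsilon)$ for the sampling error, and split the failure probability as $\delta_1 = \delta_2 = \delta/2$. Sample $C = O(\varepsilon_2^{-2}\log\delta_2^{-1})$ vertices $i_1,\dots,i_C$ uniformly at random, and for each run Algorithm~\ref{alg:random-walks} with parameters $\varepsilon_1$ and per-call failure probability $\delta_1/C$, obtaining estimates $\tilde z^*_{i_j}$. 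By Proposition~\ref{pro:solver} each such call costs $O(\varepsilon_1^{-2} r^3 \lg r)$ time, so the total time is $O(C \varepsilon_1^{-2} r^3 \lg r) = O(\varepsilon^{-4} r^3 \log\delta^{-1}\lg r)$, matching the claimed bound. A union bound over the $C$ calls ensures that with probability $\ge 1-\delta_1$ every $\tilde z^*_{i_j}$ is within $\pm\varepsilon_1$ of $z^*_{i_j}$.

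For each measure we then form the natural plug-in estimator on the sampled vertices: $\frac nC\sum_j (\tilde z^*_{i_j})^2$ for $\Controversy$, $\frac nC\sum_j (\tilde z^*_{i_j} - \tilde{\bar z})^2$ for $\Polarization$ (where $\tilde{\bar z}$ is the estimate of $\avgExpressed$ from the first bullet), $\frac nC\sum_j (s_{i_j} - \tilde z^*_{i_j})^2$ for $\Internal$, and similarly for $\DisCon$ and $\Disagreement$ via the conservation law $\Internal + 2\Disagreement + \Controversy = \norm s_2^2$ of Lemma~\ref{lem:conservation-law} (rearranged to $\Disagreement = \tfrac12(\norm s_2^2 - \Internal - \Controversy)$, with errors added). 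The analysis has two sources of error to control. The first is the \emph{estimation} error: replacing $z^*_{i_j}$ by $\tilde z^*_{i_j}$ perturbs each summand by $O(\varepsilon_1)$ since all quantities lie in $[0,1]$ (e.g.\ $(\tilde z^*)^2 - (z^*)^2 = (\tilde z^* - z^*)(\tilde z^* + z^*)$ has magnitude $\le 2\varepsilon_1$); summing over the $C$ sampled terms and scaling by $n/C$ this contributes $O(\varepsilon_1 n)$ to the final error. The second is the \emph{sampling} error: treating the (exact) summands $f(z^*_{i_j}, s_{i_j})$ as bounded random variables in a range of length $O(1)$, Lemma~\ref{lem:sum} (or a Hoeffding bound) gives that $\frac nC\sum_j f(z^*_{i_j},s_{i_j})$ is within $\pm\varepsilon_2 n$ of $\sum_u f(z^*_u,s_u)$ with probability $\ge 1-\delta_2$. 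Choosing the constants in $\varepsilon_1,\varepsilon_2$ so that the combined error is $\le \varepsilon n$ and taking a final union bound over the two failure events gives success probability $\ge 1-\delta$. For the disagreement bound one extra care point is that the conservation-law rearrangement combines three separately-estimated quantities, so the error budget must be split into thirds (as reflected in the $\varepsilon/6$ choice in the sketch) before invoking the above for each of $\norm s_2^2$, $\Internal$, $\Controversy$.

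The main obstacle is bookkeeping the error propagation cleanly for $\Polarization$ and $\Disagreement$: for $\Polarization$ the centering term $\bar z$ is itself estimated, so one must argue that $\frac1C\sum_j(\tilde z^*_{i_j} - \tilde{\bar z})^2$ is close both to $\frac1C\sum_j(z^*_{i_j} - \bar z)^2$ (estimation error) and to $\frac1n\Polarization$ (sampling error), using that $\sum_j(x_j - c)^2$ is minimized at $c = \operatorname{avg}(x_j)$ and is $1$-Lipschitz-ish in $c$ over a bounded domain; for $\Disagreement$ the compounding of three estimates through the conservation law must be tracked so the final additive error stays $O(\varepsilon n)$. None of this is deep, but it is where the proof has to be written carefully rather than by appeal to a black box.
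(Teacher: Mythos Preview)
Your proposal is correct and follows the same two-level scheme (sample $C$ vertices, estimate their expressed opinions via Proposition~\ref{pro:solver}, plug into the empirical sum, split the error budget between estimation and sampling) that the paper uses. The one place where you take a genuinely different route is $\Polarization$: you keep the centered form $\sum_u(z_u^*-\bar z)^2$ and estimate the center $\bar z$ separately, which forces the Lipschitz-in-$c$ argument you flag as an obstacle. The paper sidesteps this entirely by rewriting
\[
\Polarization=\frac{1}{2n}\sum_{u,v\in V}(z_u^*-z_v^*)^2
\]
and applying Lemma~\ref{lem:sum} to the length-$n^2$ vector $x_{u,v}=(z_u^*-z_v^*)^2\in[0,1]$; the $\pm\varepsilon n^2$ sampling error becomes $\pm\varepsilon n$ after dividing by $2n$, and no separately-estimated center ever appears. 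This costs two opinion estimates per sample rather than one, but the asymptotics are unchanged and the error bookkeeping is cleaner. A minor secondary difference: for $\DisCon$ the paper estimates the bilinear form $\sum_u s_u z_u^*$ directly as one more bounded sum and then sets $\Disagreement=\DisCon-\Controversy$, whereas you go through the conservation law $\Disagreement=\tfrac12(\norm s_2^2-\Internal-\Controversy)$; both routes combine a constant number of $\pm\Theta(\varepsilon)n$ estimates and are equivalent in cost and error structure.
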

\begin{proof}
	\emph{Estimating $\avgExpressed$, $\SumOpinions$ and $\norm{s}_2^2$:}
 First, we note that $\sum_{u\in V} s_u = \sum_{u\in V} z_u^*$ since
	$s=(I+L)z^*$ and hence we have that
	\begin{align*}
		\sum_{u\in V} s_u
		= \mathbbm{1}^\intercal s
		= \mathbbm{1}^\intercal (I+L) z^*
		= \mathbbm{1}^\intercal z^*
		= \sum_{u\in V} z_u^*.
	\end{align*}
	Hence, to estimate $\avgExpressed$ and $\SumOpinions$ it suffices to
	estimate $\sum_{u\in V} s_u$. 
 In addition, $\norm{s}_2^2=\sum_{u\in V} s_u^2$.
 Given that we have query access to $s$, we can apply
	Lemma~\ref{lem:sum} to obtain our results. 

	\emph{Estimating $\Controversy$:} Recall that $\Controversy=\sum_{u\in V}(z_u^*)^2$. We set 
         $\varepsilon_1 = \frac{\varepsilon}{6}$,  $r=O(\bar{\kappa} \log(\varepsilon_1^{-1} n\bar{\kappa} 
				(\max_u w_u) ))$, $\delta_1=\frac{1}{r}=\frac{\delta}{2C}$,
	$\varepsilon_2 = \frac{\varepsilon}{2}$, $\delta_2 = \frac{\delta}{2}$ and $C = \varepsilon_2^{-2} \log \delta_2^{-1}$.
        To return an estimate of $\Controversy$ with additive error $\pm\varepsilon n$ and success probability $1-\delta$, we perform
	the following procedure. We sample
	$C$~vertices, i.e., $i_1, \dots, i_{C}$, from $V$ using Lemma~\ref{lem:sum} and obtain $\tilde{z}^*_{i_1}, \dots, \tilde{z}^*_{i_{C}}$ using
	Proposition~\ref{pro:solver} with error parameter $\varepsilon_1$ and success probability $1-\delta_1$. 
        We return $\frac{n}{C} \sum_{j=1}^{C} (\tilde{z}^*_{i_j})^2$.

        Next, we analyze the running time of this procedure and we also prove the error guarantee.

	We start with the running time analysis. According to Proposition~\ref{pro:solver}, estimating each $z^*_u$ takes time $T_1=O(\varepsilon_1^{-2} r^3 \lg r)=O(\varepsilon^{-2} r^3 \lg r)$. According to Lemma~\ref{lem:sum}, sampling $C$ vertices from $V$ takes time $T_2 = O(\varepsilon_2^{-2} \log \delta_2^{-1})= O(\varepsilon^{-2} \log \delta^{-1})$. Therefore, the running time is $C T_1+T_2=O(\varepsilon^{-4}r^3  \log \delta^{-1} \lg r)$. 

        Next, we analyze the error guarantee. According to Proposition~\ref{pro:solver}, for each $j\in [C]$, with probability at least $1-\delta_1$, we have $\abs{\tilde{z}^*_{i_j}-z^*_{i_j}} \leq \varepsilon_1=\frac{\varepsilon}{6}$. Therefore, $\abs{(\tilde{z}^*_{i_j})^2-(z^*_{i_j})^2} \leq \abs{\tilde{z}^*_{i_j}+z^*_{i_j}} \cdot \abs{\tilde{z}^*_{i_j}-z^*_{i_j}} \leq 3\varepsilon_1 = \frac{\varepsilon}{2}$.
        Then by union bound, with probability at least $1-C \cdot \delta_1=1-\frac{\delta}{2}$, we have $\abs{\sum_{j=1}^{C} (\tilde{z}^*_{i_j})^2 - \sum_{j=1}^{C} (z^*_{i_j})^2} \leq C\cdot \frac{\varepsilon}{2} = \frac{\varepsilon C}{2}$.
        According to Lemma~\ref{lem:sum}, with probability at least $1-\delta_2=1-\frac{\delta}{2}$, we have $\abs{\frac{n}{C}\sum_{j=1}^{C} (z^*_{i_j})^2 - \Controversy}\leq \varepsilon_2 n = \frac{\varepsilon n}{2}$. By union bound, with probability at least $1-\frac{\delta}{2}-\frac{\delta}{2}=1-\delta$, we have 
        \begin{align*}
            &\abs{\frac{n}{C}\sum_{j=1}^{C} (\tilde{z}^*_{i_j})^2 - \Controversy}\\
            \leq~ 
            &\abs{\frac{n}{C}\sum_{j=1}^{C} (\tilde{z}^*_{i_j})^2 - \frac{n}{C}\sum_{j=1}^{C} (z^*_{i_j})^2} + \abs{\frac{n}{C}\sum_{j=1}^{C} (z^*_{i_j})^2 - \Controversy}\\
            \leq~&\frac{n}{C}\cdot\frac{ \varepsilon C}{2} + \frac{\varepsilon n}{2}\\
            =~&\varepsilon n.
        \end{align*}

	\emph{Estimating $\Internal$:} Recall that $\Internal=\sum_{u\in V} (s_u - z_u^*)^2$. We set 
         $\varepsilon_1 = \frac{\varepsilon}{6}$,  $r=O(\bar{\kappa} \log(\varepsilon_1^{-1} n\bar{\kappa} 
				(\max_u w_u) ))$, $\delta_1=\frac{1}{r}=\frac{\delta}{2C}$,
	$\varepsilon_2 = \frac{\varepsilon}{2}$, $\delta_2 = \frac{\delta}{2}$ and $C = \varepsilon_2^{-2} \log \delta_2^{-1}$.
        To return an estimate of $\Internal$ with additive error $\pm \varepsilon n$ and success probability $1-\delta$, we perform
	the following procedure. We sample
	$C$~vertices, i.e., $i_1, \dots, i_{C}$, from $V$ and query $s_{i_1}, \dots,
	s_{i_{C}}$ using Lemma~\ref{lem:sum} and $\tilde{z}^*_{i_1}, \dots, \tilde{z}^*_{i_{C}}$ using
	Proposition~\ref{pro:solver} with error parameter $\varepsilon_2$ and success probability $1-\delta_1$.
        We return $\frac{n}{C} \sum_{j=1}^{C} (s_{i_j}-\tilde{z}^*_{i_j})^2$.

        Next, we analyze the running time of this procedure and we also prove the error guarantee.

	We start with the running time analysis. According to Proposition~\ref{pro:solver}, estimating each $z^*_u$ takes time $T_1=O(\varepsilon_1^{-2} r^3 \lg r)=O(\varepsilon^{-2} r^3 \lg r)$. According to Lemma~\ref{lem:sum}, sampling $C$ vertices from $V$ takes time $T_2 = O(\varepsilon_2^{-2} \log \delta_2^{-1})= O(\varepsilon^{-2} \log \delta^{-1})$. Therefore, the running time is $C T_1 +T_2=O(\varepsilon^{-4}  r^3 \log \delta^{-1} \lg r)$. 

        Next, we analyze the error guarantee. According to Proposition~\ref{pro:solver}, for each $j\in [C]$, with probability at least $1-\delta_1$, we have $\abs{\tilde{z}^*_{i_j}-z^*_{i_j}} \leq \varepsilon_1 = \frac{\varepsilon}{6}$. Therefore, $\abs{(s_{i_j}-\tilde{z}^*_{i_j})^2-(s_{i_j}-z_{i_j}^*)^2} \leq \abs{\tilde{z}^*_{i_j}+z^*_{i_j}-2s_{i_j}} \cdot \abs{\tilde{z}^*_{i_j}-z^*_{i_j}} \leq 3\varepsilon_1 = \frac{\varepsilon}{2}$.
        Then by union bound, with probability at least $1-C\cdot \delta_1=1-\frac{\delta}{2}$, we have $\abs{\sum_{j=1}^{C} (s_{i_j}-\tilde{z}_{i_j}^*)^2 - \sum_{j=1}^{C} (s_{i_j}-z_{i_j}^*)^2} \leq C\cdot \frac{\varepsilon}{2} = \frac{\varepsilon C}{2}$.
        According to Lemma~\ref{lem:sum}, with probability at least $1-\delta_2 = 1-\frac{\delta}{2}$, we have $\abs{\frac{n}{C}\sum_{j=1}^{C} (s_{i_j}-z_{i_j}^*)^2 - \Internal}\leq \varepsilon_2 n= \frac{\varepsilon n}{2}$. By union bound, with probability at least $1-\frac{\delta}{2}-\frac{\delta}{2}=1-\delta$, we have 
        \begin{align*}
           &\abs{\frac{n}{C}\sum_{j=1}^{C} (s_{i_j}-\tilde{z}_{i_j}^*)^2 - \Internal}\\
            \leq~ 
            &\abs{\frac{n}{C}\sum_{j=1}^{C} (s_{i_j}-\tilde{z}_{i_j}^*)^2 - \frac{n}{C}\sum_{j=1}^{C} (s_{i_j}-z_{i_j}^*)^2} + \abs{\frac{n}{C}\sum_{j=1}^{C} (s_{i_j}-z_{i_j}^*)^2 - \Internal}\\
            \leq~&\frac{n}{C}\cdot \frac{\varepsilon C}{2} + \frac{\varepsilon n}{2} \\
            =~&\varepsilon n.
        \end{align*}

\emph{Estimating $\DisCon$:} Note that $\DisCon = s^\intercal z^* = \sum_{u\in V} s_u z_u^*$.
  We set 
         $\varepsilon_3 = \frac{\varepsilon}{2}$,  $r=O(\bar{\kappa} \log(\varepsilon_3^{-1} n\bar{\kappa} 
				(\max_u w_u) ))$, $\delta_3=\frac{1}{r}=\frac{\delta}{2C}$,
	$\varepsilon_2 = \frac{\varepsilon}{2}$, $\delta_2 = \frac{\delta}{2}$ and $C = \varepsilon_2^{-2} \log \delta_2^{-1}$.
        To return an estimate of $\DisCon$ with additive error $\pm\varepsilon n$ and success probability $1-\delta$, we perform
	the following procedure. We sample
	$C$~vertices, i.e., $i_1, \dots, i_{C}$, from $V$ and query $s_{i_1}, \dots,
	s_{i_{C}}$ using Lemma~\ref{lem:sum} and obtain $\tilde{z}^*_{i_1}, \dots, \tilde{z}^*_{i_{C}}$ using
	Proposition~\ref{pro:solver} with error parameter $\varepsilon_3$ and success probability $1-\delta_1$. 
        We return $\frac{n}{C} \sum_{j=1}^{C} s_{i_j}\tilde{z}^*_{i_j}$. 

        Next, we analyze the running time of this procedure and we also prove the error guarantee.

	We start with the running time analysis. According to Proposition~\ref{pro:solver}, estimating each $z^*_u$ takes time $T_3=O(\varepsilon_3^{-2} r^3 \lg r)=O(\varepsilon^{-2} r^3 \lg r)$. According to Lemma~\ref{lem:sum}, sampling $C$ vertices from $V$ takes time $T_2 = O(\varepsilon_2^{-2} \log \delta_2^{-1})= O(\varepsilon^{-2} \log \delta^{-1})$. Therefore, the running time is $C T_3+T_2=O(\varepsilon^{-4}r^3  \log \delta^{-1} \lg r)$. 

        Next, we analyze the error guarantee. According to Proposition~\ref{pro:solver}, for each $j\in [C]$, with probability at least $1-\delta_3$, we have $\abs{\tilde{z}^*_{i_j}-z^*_{i_j}} \leq \varepsilon_3=\frac{\varepsilon}{2}$. Therefore, $\abs{s_{i_j}\tilde{z}^*_{i_j}-s_{i_j}z^*_{i_j}} \leq s_{i_j} \cdot \abs{\tilde{z}^*_{i_j}-z^*_{i_j}} \leq \varepsilon_3 = \frac{\varepsilon}{2}$.
        Then by union bound, with probability at least $1-C \cdot \delta_3=1-\frac{\delta}{2}$, we have $\abs{\sum_{j=1}^{C} s_{i_j}\tilde{z}^*_{i_j} - \sum_{j=1}^{C} s_{i_j}z^*_{i_j}} \leq C\cdot \frac{\varepsilon}{2} = \frac{\varepsilon C}{2}$.
        According to Lemma~\ref{lem:sum}, with probability at least $1-\delta_2=1-\frac{\delta}{2}$, we have $\abs{\frac{n}{C}\sum_{j=1}^{C} s_{i_j}z^*_{i_j} - \DisCon}\leq \varepsilon_2 n = \frac{\varepsilon n}{2}$. By union bound, with probability at least $1-\frac{\delta}{2}-\frac{\delta}{2}=1-\delta$, we have 
        \begin{align*}
            &\abs{\frac{n}{C}\sum_{j=1}^{C} s_{i_j}\tilde{z}^*_{i_j} - \DisCon}\\
            \leq~
            &\abs{\frac{n}{C}\sum_{j=1}^{C} s_{i_j}\tilde{z}^*_{i_j} - \frac{n}{C}\sum_{j=1}^{C} s_{i_j}z^*_{i_j}} + \abs{\frac{n}{C}\sum_{j=1}^{C} s_{i_j}z^*_{i_j} - \DisCon}\\
            \leq~&\frac{n}{C}\cdot\frac{ \varepsilon C}{2} + \frac{\varepsilon n}{2} \\
            =~&\varepsilon n.
        \end{align*}
        
        \emph{Estimating $\Polarization$:} 
        Recall that
	$\Polarization = \sum_{u\in V} (z_u^* - \avgExpressed)^2$, where we use
	$\avgExpressed = \frac{1}{n} \sum_{u\in V} z_u^*$. Now we use the well-known
	equality that $\sum_i \sum_{j>i} (a_i - a_j)^2 = n \sum_{i} (a_i - c)^2$,
	where $c=\frac{1}{n} \sum_{i} a_i$. This gives us that
	\begin{equation*}
		\Polarization
		= \sum_{u\in V} (z_u^* - \avgExpressed)^2 
		= \frac{1}{2n} \sum_{u,v\in V} (z_u^* - z_v^*)^2.
	\end{equation*}
	We set 
        $\varepsilon_4 = \frac{\varepsilon}{18}$,  $r=O(\bar{\kappa} \log(\varepsilon_4^{-1} n\bar{\kappa} 
				(\max_u w_u) ))$, $\delta_4=\frac{1}{r}=\frac{\delta}{4C}$,
	$\varepsilon_2 = \frac{\varepsilon}{2}$, $\delta_2 = \frac{\delta}{2}$ and $C = \varepsilon_2^{-2} \log \delta_2^{-1}$.
Consider a vector $x$ of length $n^2$
	which has entries $x_{u,v} = (z_u^* - z_v^*)^2 \in [0,1]$ for $u,v\in V$.
 Therefore, $\Polarization = \frac{1}{2n} \sum_{u,v\in V} x_{u,v}$. We sample
	$C$~indices, denoted as $i_1, \dots, i_{C}$, from $x$ using Lemma~\ref{lem:sum}. Then we obtain $\tilde{x}_{i_1}, \dots, \tilde{x}_{i_{C}}$ using
	Proposition~\ref{pro:solver}. Note that for each $j\in [C]$, $\tilde{x}_{i_j} = (\tilde{z}^*_{j_1}-\tilde{z}^*_{j_2})^2$ supposing that $j_1$ and $j_2$ are vertices associated with $x_{i_j}$. 
        We return $\frac{n^2}{C} \sum_{j=1}^{C} \tilde{x}_{i_j}$.

        Next, we analyze the running time of this procedure and we also prove the error guarantee.

	We start with the running time analysis. According to Proposition~\ref{pro:solver}, estimating each $z^*_u$ takes time $T_4=O(\varepsilon_4^{-2} r^3 \lg r)=O(\varepsilon^{-2} r^3 \lg r)$. According to Lemma~\ref{lem:sum}, sampling $C$ entries from $x$ takes time $T_2 = O(\varepsilon_2^{-2} \log \delta_2^{-1}) = O(\varepsilon^{-2} \log \delta^{-1})$. Therefore, the running time is at most $2C T_4+T_2=O(\varepsilon^{-4}r^3  \log \delta^{-1} \lg r)$. 

        Next, we analyze the error guarantee. 
 According to Proposition~\ref{pro:solver} and union bound, with probability at least $1-2 \delta_4=1-\frac{\delta}{2C}$, we have $\abs{\tilde{x}_{i_j} - x_{i_j}} = \abs{(\tilde{z}_{j_1}^*-\tilde{z}_{j_2}^*)^2 - (z_{j_1}^*-z_{j_2}^*)^2} \leq (\abs{\tilde{z}_{j_1}^*+z_{j_1}^*}+\abs{\tilde{z}_{j_2}^*+z_{j_2}^*})(\abs{\tilde{z}_{j_1}^*-z_{j_1}^*}+\abs{\tilde{z}_{j_2}^*-z_{j_2}^*}) \leq 9\varepsilon_4 = \frac{\varepsilon}{2}$. 
 Then by union bound, with probability at least $1-C\cdot \frac{\delta}{2C}=\frac{\delta}{2}$, we have $\abs{\sum_{j=1}^{C} \tilde{x}_{i_j} - \sum_{j=1}^{C} x_{i_j}} \leq C\cdot \frac{\varepsilon}{2} = \frac{\varepsilon C}{2}$.
        According to Lemma~\ref{lem:sum}, with probability at least $1-\delta_2=1-\frac{\delta}{2}$, we have $\abs{\frac{n^2}{C}\sum_{j=1}^{C} x_{i_j} - \Polarization}\leq \varepsilon_2 n^2=\frac{\varepsilon n^2}{2}$. By union bound, with probability at least $1-\frac{\delta}{2}-\frac{\delta}{2}=1-\delta$, we have 
        \begin{align*}
           \abs{\frac{n^2}{C}\sum_{j=1}^{C} \tilde{x}_{i_j} - \Polarization}
            &\leq 
            \abs{\frac{n^2}{C}\sum_{j=1}^{C} \tilde{x}_{i_j} - \frac{n^2}{C}\sum_{j=1}^{C} x_{i_j}} + \abs{\frac{n^2}{C}\sum_{j=1}^{C} x_{i_j} - \Polarization} \\
            &\leq \frac{n^2}{C}\cdot \frac{\varepsilon C}{2} + \frac{\varepsilon n^2}{2}\\
            &= \varepsilon n^2.
        \end{align*}

	As $\Polarization = \frac{1}{2n} \sum_{u,v\in V} x_{u,v}$ we get an error for the polarization
	of $\varepsilon n$.

       \emph{Estimating $\Disagreement$:} Recall that $\DisCon=\Disagreement+\Controversy$, which implies that $\Disagreement=\DisCon-\Controversy$.
    	Using the results from above, we can compute approximations of $\DisCon$ and
    	$\Controversy$ with additive error $\pm\varepsilon n/2$ in time
		$O(\varepsilon^{-4}r^3  \log \delta^{-1} \lg r)$. Using the triangle
		inequality we get that the total error is bounded by $\pm\varepsilon n$.
\end{proof}

\subsection{Proof of Lemma~\ref{lem:estimate-s}}
	Recall that $s = (I+L)z^*$. Hence, we have that
	$s_u = (1+w_u)z_u^* - \sum_{(u,v)\in E} w_{uv} z_v^*$.
	We can compute the first term of this sum in time $O(1)$ using our query
	access to $z_u^*$ and $w_u$. Furthermore, by querying the values of $z_v^*$
	for all neighbors~$v$ of~$u$, we can compute the second term in time
	$O(d_u)$. 
	
	We can compute the second term
	in time~$O(w_u^2\varepsilon^{-2}\lg\delta^{-1})$ with additive
	error~$\varepsilon$ as follows.
	For convenience, we set $$S_u = \sum_{(u,v)\in E} z_v^* w_{uv}.$$
	Let $X_1$ be a random variable that takes value~$z_v^*$ with probability
	$w_{uv}/w_u$. Note that $\Exp{X_1} = S_u/w_u$.
	Furthermore, we have that
	\begin{align*}
		\Var{X_1} &= \Exp{X_1^2} - \Exp{X_1}^2
			\leq \sum_{(u,v)\in E} (z_v^*)^2 w_{uv}/w_u
			\leq S_u/w_u,
	\end{align*}
	where we used that $(z_v^*)^2 \leq z_v^*$ since $z_v^*\in[0,1]$.
	Now consider the random variable $Y_k = w_u \frac{1}{k} \sum_{i=1}^k X_i$,
	where the $X_i$ are i.i.d.\ copies of $X_1$. Then we have that
	$\Exp{Y_k}=S_u$.
	Furthermore,
	\begin{align*}
		\Var{Y_k}
		= w_u^2 \frac{1}{k} \Var{X_1}
		\leq \frac{w_u}{k} S_u.
	\end{align*}
	Now applying Chebyshev's inequality, we obtain that
	\begin{align*}
		\Prob{\abs{Y_k - S_u}\geq \varepsilon} 
		\leq \frac{\Var{Y_k}}{\varepsilon^2}
		\leq \frac{w_u S_u}{k \varepsilon^2}
		\leq 0.1,
	\end{align*}
	where we set $k = \frac{10 w_u^2}{\varepsilon^2}$ and used that $S_u\leq w_u$. 
	Applying the median trick, we obtain that with probability $1-\delta$ we
	return an estimate with additive error at most $1+\varepsilon$.

	We conclude that the second term can be computed in time
	$O(\min\{d_u,w_u^2\varepsilon^{-2}\lg\delta^{-1}\})$.

\subsection{Proof of Proposition~\ref{prop:estimate-S}}
 	We first observe that, by querying the values of $z_v^*$
	for all neighbors~$v$ of~$u$, we can compute $S_u$ in time
	$O(d_u)$. 

    Additionally, we can compute $S_u$
	in time~$O(d_u^{1/2}\varepsilon^{-1}\lg\delta^{-1})$ with
	$(1\pm\varepsilon)$-multiplicative error as follows. 
	 Let $m$ be a parameter that we will set below.
	Our algorithm starts by picking vertices $v_1, \dots, v_m$ independently at
	random (with repetitions) from all neighbors~$v$ of~$u$ with probabilities
	proportional to their weights, i.e., $w_{uv}/w_u$.
    Let $T$ be the set of sampled vertices, and for each $t\in T$ define $c_t$
	to be the number of times vertex $t$ is sampled, i.e.,
	$c_t = \abs{ \{ i \colon v_i = t \} }$.
    Define $Y_{ij}$ to be $z_v^*/w_{uv}$ if $v_i = v_j$ and $0$ otherwise. 
    We consider the estimator
    \begin{equation*}
        \tilde{S}_u=w_u^2 \cdot {\binom{m}{2}}^{-1} \cdot \sum_{t \in T} \frac{\binom{c_t}{2}\cdot z^*_t}{w_{ut}}.
    \end{equation*}
    
    We have
    \begin{equation*}
        \Exp{Y_{ij}}
        = \sum_{(u,v)\in E} \frac{z_v^*}{w_{uv}} \cdot \frac{w_{uv}}{w_u}\cdot \frac{w_{uv}}{w_u} = \sum_{(u,v)\in E} \frac{z_v^* w_{uv}}{w_u^2}=\frac{S_u}{w_u^2},
    \end{equation*}
    
    \begin{equation*}
        \Var{Y_{ij}} \leq \Exp{Y_{ij}^2}
        = \sum_{(u,v)\in E} \frac{(z_v^*)^2}{w_{uv}^2} \cdot \frac{w_{uv}}{w_u}\cdot \frac{w_{uv}}{w_u} 
        = \sum_{(u,v)\in E} \frac{(z_v^*)^2}{w_u^2}. 
    \end{equation*}
    
    Since
    \begin{equation*}
        \tilde{S}_u=w_u^2 \cdot {\binom{m}{2}}^{-1} \cdot \sum_{t \in T} \frac{\binom{c_t}{2}\cdot z^*_t}{w_{ut}}
        =w_u^2 \cdot {\binom{m}{2}}^{-1} \cdot \sum_{1\leq i < j \leq m} Y_{ij},
    \end{equation*}
    we have
    \begin{equation*}
        \Exp{\tilde{S}_u}=w_u^2 \cdot {\binom{m}{2}}^{-1} \cdot \sum_{1\leq i < j \leq m} \Exp{Y_{ij}}=w_u^2 \cdot \frac{S_u}{w_u^2} =S_u.
    \end{equation*}
    
    To bound $\Var{\tilde{S}_u}$, we need to bound $\Var{\sum_{1\leq i < j \leq m} Y_{ij}}$.
    Denote $\bar{Y}_{ij} \triangleq Y_{ij}-\Exp{Y_{ij}}$. We need to deal with
	the fact that the $Y_{ij}$'s are \emph{not} pairwise independent.
    Specifically, for four \emph{distinct} $i, j, i', j'$, indeed $Y_{ij}$ and $Y_{i'j'}$ are independent, and thus $\Exp{\bar{Y}_{ij} \bar{Y}_{i'j'}}=\Exp{\bar{Y}_{ij}}\cdot \Exp{\bar{Y}_{i'j'}}=0$;
    but for $i<j\neq k$, the random variables $Y_{ij}$ and $Y_{ik}$ are \emph{not} independent. 
    We have $\Exp{Y_{ij}Y_{ik}}=\sum_{(u,v)\in E} \frac{(z_v^*)^2}{w_{uv}^2} \cdot \frac{w_{uv}^3}{w_u^3}=\frac{1}{w_u^3} \sum_{(u,v)\in E} (z_v^*)^2 w_{uv}$.
    Therefore,
    \begin{align*}
            &\Var{\sum_{1\leq i < j \leq m} Y_{ij}}
			 = \Exp{\left(\sum_{1\leq i < j \leq m} \bar{Y}_{ij}\right)^2} \\
            &=\sum_{1\leq i < j \leq m} \Exp{\bar{Y}_{ij}^2}+2 \cdot \sum_{1\leq i < j \neq k \leq m} \Exp{\bar{Y}_{ij} \bar{Y}_{ik}} \\
            &\leq\sum_{1\leq i < j \leq m} \Exp{Y_{ij}^2}+2 \cdot \sum_{1 \leq i \leq m-1} \sum_{i+1 \leq j \neq k \leq m} \Exp{Y_{ij} Y_{ik}} \\
            &\leq \frac{m^2}{w_u^2} \cdot \sum_{(u,v)\in E} (z_v^*)^2 +  \frac{m^3}{w_u^3} \cdot \sum_{(u,v)\in E} (z_v^*)^2 w_{uv}.
    \end{align*}
    
    Furthermore, since $m^2 \leq 3\binom{m}{2}$ if $m \geq 3$, we have
        \begin{align*}
            \Var{\tilde{S}_u} &= w_u^4 \cdot {\binom{m}{2}}^{-2} \cdot \Var{\sum_{1\leq i < j \leq m} Y_{ij}}\\
            &\leq \frac{9w_u^2}{m^2} \cdot \sum_{(u,v)\in E} (z_v^*)^2 +  \frac{9w_u}{m} \cdot \sum_{(u,v)\in E} (z_v^*)^2 w_{uv}.
        \end{align*}
    
    By Chebyshev's inequality  , we have
    \begin{align*}
		&\Prob{\left|\tilde{S}_u-S_u\right| \geq \varepsilon S_u}
		 \leq \frac{\Var{\tilde{S}_u}}{\varepsilon^2 S_u^2} \\
		&\leq \frac{9(\sum_{(u,v)\in E} w_{uv})^2 \cdot (\sum_{(u,v)\in E} (z_v^*)^2)}{m^2\varepsilon^2 (\sum_{(u,v)\in E} z_v^* w_{uv})^2} \\
		&\quad +\frac{9(\sum_{(u,v)\in E} w_{uv}) (\sum_{(u,v)\in E} (z_v^*)^2 w_{uv})}{m\varepsilon^2 (\sum_{(u,v)\in E} z_v^* w_{uv})^2}\\
		&\leq \frac{9(\sum_{(u,v)\in E} w_{uv})^2 \cdot d_u}{m^2\varepsilon^2 (\sum_{(u,v)\in E} c w_{uv})^2} \\ 
		&\quad +\frac{9(\sum_{(u,v)\in E} w_{uv}) (\sum_{(u,v)\in E} z_v^* w_{uv})}{m\varepsilon^2 (\sum_{(u,v)\in E} z_v^* w_{uv})^2}\\
		&\leq \frac{9(\sum_{(u,v)\in E} w_{uv})^2 \cdot d_u}{m^2\varepsilon^2 c^2(\sum_{(u,v)\in E} w_{uv})^2} + 
		\frac{9\sum_{(u,v)\in E} w_{uv}}{m\varepsilon^2 c \sum_{(u,v)\in E}  w_{uv}}\\
		&\leq \frac{9d_u}{m^2\varepsilon^2 c^2} + \frac{9}{m\varepsilon^2 c}\\
		&\leq 0.1,
    \end{align*}
    where we set $m=O(d_u^{1/2}\varepsilon^{-1})$ and used that $(z_v^*)^2 \leq z_v^*$ and $\sum_{(u,v)\in E } (z_v^*)^2 \leq d_u$.
    Applying the median trick, we obtain that with probability $1-\delta$ we
	return an estimate with multiplicative error $1\pm\varepsilon$.

	We conclude that the total time we need to compute $S_u$ is at most
	$O(\min\{d_u,d_u^{1/2}\varepsilon^{-1}\lg\delta^{-1}\})$.

\subsection{Proof of Corollary~\ref{cor:estimate-s-new}}
    Recall that $s = (I+L)z^*$. Hence, we have that
	$s_u = (1+w_u)z_u^* - \sum_{(u,v)\in E} w_{uv} z_v^* = (1+w_u)z_u^* - S_u$.
	We can compute the first term of this sum in time $O(1)$ using our query
	access to $z_u^*$ and $w_u$. Furthermore, using to
	Proposition~\ref{prop:estimate-S}, with probability at least $1-\delta$ we
	can estimate the second term in
	time~$O(\min\{d_u,d_u^{1/2}\varepsilon^{-1}\lg\delta^{-1}\})$ with
	multiplicative error $(1\pm\varepsilon)$, i.e.,
	$\left|\tilde{S}_u-S_u\right| \leq \varepsilon S_u$. 
    We set $\tilde{s}_u=(1+w_u)z_u^* - \tilde{S}_u$.
    
    If $S_u \leq 1$, then it follows immediately from Proposition~\ref{prop:estimate-S} that $\left|\tilde{s}_u-s_u\right|=\left|\tilde{S}_u-S_u\right| \leq \varepsilon S_u \leq \varepsilon$.

    If $S_u \leq \frac{(1+w_u)z_u^*}{2}$, then according to Proposition~\ref{prop:estimate-S}, we have
    \begin{align*}
        \tilde{s}_u-s_u &= \tilde{S}_u-S_u \\
        &\leq (1+\varepsilon)S_u - S_u\\
        &= \varepsilon S_u \\
        &\leq \varepsilon \cdot \frac{(1+w_u)z_u^*}{2} \\
        &\leq \varepsilon ((1+w_u)z_u^*+S_u) \\
        &= \varepsilon s_u,
    \end{align*}
    and 
    \begin{align*}
        \tilde{s}_u-s_u &= \tilde{S}_u-S_u \\
        &\geq (1-\varepsilon)S_u - S_u\\
        &= -\varepsilon S_u \\
        &\geq -\varepsilon \cdot \frac{(1+w_u)z_u^*}{2} \\
        &\geq -\varepsilon ((1+w_u)z_u^*+S_u) \\
        &= -\varepsilon s_u.
    \end{align*}
    Therefore, $\left|\tilde{s}_u-s_u\right| \leq \varepsilon s_u$.

\subsection{Estimating the Measures Given Oracle Access to Expressed Opinions~$z_u^*$}
\label{app:measures-oracle-expressed}
In this section, we prove the results of the last column of
Table~\ref{tab:experiments}.

\begin{lemma}
\label{lem:estimation-oracle-expressed-opinions}
	Let $\varepsilon,\delta\in(0,1)$. Then with probability at least
	$1-\delta$:
	\begin{itemize}
		\item We can return an estimate of $\avgExpressed$ with additive error
			$\pm\varepsilon$ in time $O(\varepsilon^{-2}\lg\delta^{-1})$.
		\item We can return an estimate of $\SumOpinions, \Controversy$ and $\Polarization$ with additive error
			$\pm\varepsilon n$ in time $O(\varepsilon^{-2}\lg\delta^{-1})$.
	\end{itemize}
\end{lemma}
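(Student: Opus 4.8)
The plan is to reduce each of the four estimation tasks to a single application of Lemma~\ref{lem:sum}, exploiting the fact that the oracle returns $z_u^*$ in $O(1)$ time, so that every quantity we need to sum is a bounded function of values we can query directly. This is the key simplification over the proof of Lemma~\ref{lem:estimation-oracle-innate-opinions}: there, the summands $z_u^*$ themselves had to be approximated via Proposition~\ref{pro:solver}, forcing a two-level error analysis and a union bound; here the summands are available (essentially) exactly, so one call to the sum-estimation primitive suffices and there is no error compounding.

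First, for $\avgExpressed$ and $\SumOpinions$ I would use that $\SumOpinions = \sum_{u\in V} z_u^*$ with each $z_u^* \in [0,1]$, so applying Lemma~\ref{lem:sum} with $[a,b]=[0,1]$ to the vector $z^*$ (using the oracle to evaluate sampled coordinates) returns, in time $O(\varepsilon^{-2}\lg\delta^{-1})$, an estimate $\tilde{\SumOpinions}$ with $\abs{\SumOpinions - \tilde{\SumOpinions}} \le \varepsilon n$ with probability $1-\delta$; dividing by $n$ gives an estimate of $\avgExpressed = \SumOpinions/n$ with additive error $\pm\varepsilon$. For $\Controversy = \sum_{u\in V} (z_u^*)^2$, the summands $(z_u^*)^2$ also lie in $[0,1]$ and are computed in $O(1)$ time from one oracle query, so the same application of Lemma~\ref{lem:sum} to the vector with coordinates $(z_u^*)^2$ yields an estimate with additive error $\pm\varepsilon n$ in the same time bound.

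For the polarization I would use the standard identity $\Polarization = \sum_{u\in V}(z_u^* - \avgExpressed)^2 = \frac{1}{2n}\sum_{u,v\in V}(z_u^*-z_v^*)^2$ (already invoked in the proof of Lemma~\ref{lem:estimation-oracle-innate-opinions}). Define the length-$n^2$ vector $x$ with coordinates $x_{u,v} = (z_u^*-z_v^*)^2 \in [0,1]$; a uniformly random index of $x$ is a uniformly random pair $(u,v)$, and $x_{u,v}$ is computed in $O(1)$ time from two oracle queries, so Lemma~\ref{lem:sum} applies verbatim over $[n^2]$. Running it with error parameter $\varepsilon$ returns, in time $O(\varepsilon^{-2}\lg\delta^{-1})$, an estimate $\tilde{\Sigma}$ of $\sum_{u,v} x_{u,v}$ with $\abs{\tilde{\Sigma} - \sum_{u,v}x_{u,v}} \le \varepsilon n^2$; scaling by $\frac{1}{2n}$ gives an estimate of $\Polarization$ with additive error $\pm\varepsilon n/2 \le \pm\varepsilon n$.

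I do not expect a genuine obstacle here; the difficulty is essentially bookkeeping. The points needing care are: confirming that the coordinates of the reformulated $n^2$-dimensional vector $x$ are queryable in $O(1)$ time so that the running time of Lemma~\ref{lem:sum} transfers unchanged, and tracking the constant factors introduced by the rescalings ($1/n$ for $\avgExpressed$, $1/(2n)$ for $\Polarization$, and the $\abs{\tilde z_u^* + z_u^*}\cdot\abs{\tilde z_u^* - z_u^*}$-type slack if one instead threads through an approximate oracle) so that the stated $\pm\varepsilon$ and $\pm\varepsilon n$ guarantees come out exactly — all absorbed by adjusting the error parameter passed to Lemma~\ref{lem:sum} by a constant factor, which does not affect the asymptotic running time.
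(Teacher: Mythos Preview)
Your proposal is correct and essentially identical to the paper's own proof: the paper also handles $\avgExpressed$, $\SumOpinions$, and $\Controversy$ by a direct invocation of Lemma~\ref{lem:sum}, and treats $\Polarization$ via the same identity $\Polarization = \frac{1}{2n}\sum_{u,v}(z_u^*-z_v^*)^2$ applied to the length-$n^2$ vector. Your framing of why this is simpler than Lemma~\ref{lem:estimation-oracle-innate-opinions} (no two-level error analysis) is accurate.
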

\begin{proof}
	\emph{Estimating $\avgExpressed$, $\SumOpinions$ and $\Controversy$:} These
	three claims follow directly from Lemma~\ref{lem:sum}.

	\emph{Estimating the polarization~$\Polarization$:} Recall that
	$\Polarization = \sum_{u\in V} (z_u^* - \avgExpressed)^2$, where
	$\avgExpressed = \frac{1}{n} \sum_{u\in V} z_u^*$. Now we use the well-known
	equality that $\sum_i \sum_{j>i} (a_i - a_j)^2 = n \sum_{i} (a_i - c)^2$,
	where $c=\frac{1}{n} \sum_{i} a_i$. This gives us that
	\begin{equation*}
		\Polarization
		= \sum_{u\in V} (z_u^* - \avgExpressed)^2 
		= \frac{1}{2n} \sum_{u,v\in V} (z_u^* - z_v^*)^2.
	\end{equation*}
	Hence, we can apply Lemma~\ref{lem:sum} with a vector $x$ of length $n^2$
	which has entries $x_{u,v} = (z_u^* - z_v^*)^2 \in [0,1]$ for $u,v\in V$.
	Thus the lemma gives us an estimate~$\tilde{\Sigma}$ of
	$\Sigma=\sum_{u,v \in V} (z_u^* - z_v^*)^2$ with additive error
	$\abs{\tilde{\Sigma} - \Sigma} \leq \varepsilon n^2$.
	As $\Polarization = \frac{1}{2n} \Sigma$ we get an error for the polarization
	of $\varepsilon n$. 
\end{proof}

\begin{lemma}
\label{lem:improved-time}
   Let $\varepsilon,\delta\in(0,1)$. Then with probability at least
	$1-\delta$, we can return an estimate of $\norm{s}_2^2, \Internal, \Disagreement$ and $\DisCon$ with additive error
			$\pm\varepsilon n$ in time $O(\varepsilon^{-2} \bar{d} \lg^2 \delta^{-1})$.
\end{lemma}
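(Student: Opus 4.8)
The plan is to reduce each of $\norm{s}_2^2$, $\Internal$, $\DisCon$ to estimating a sum $\sum_{u\in V} f_u$ of terms in $[0,1]$ via the sampling estimator of Lemma~\ref{lem:sum}, feeding it with innate opinions obtained from the opinion sampler of Lemma~\ref{lem:opinion-sampler}, and then to handle $\Disagreement$ via the identity $\Disagreement = \DisCon - \Controversy$. The relevant identities are $\norm{s}_2^2 = \sum_{u\in V} s_u^2$, $\Internal = \sum_{u\in V}(s_u-z_u^*)^2$, and $\DisCon = s^\intercal z^* = \sum_{u\in V} s_u z_u^*$; in every case the summand lies in $[0,1]$ since $s_u,z_u^*\in[0,1]$. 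Because the $z_u^*$-oracle is exact, the only thing that must actually be estimated is $s_u$, at an amortized cost of $O(\bar{d})$ per sampled vertex.

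For each of these three measures I would set $\varepsilon_1 = \varepsilon/6$, $\delta_1 = \delta/2$, $\varepsilon_2 = \varepsilon/2$, $\delta_2 = \delta/2$ and $C = \varepsilon_2^{-2}\lg\delta_2^{-1} = O(\varepsilon^{-2}\lg\delta^{-1})$. Invoking Lemma~\ref{lem:opinion-sampler} with $(\varepsilon_1,\delta_1,C)$ yields, in time $O(C\bar{d}\lg\delta_1^{-1}) = O(\varepsilon^{-2}\bar{d}\lg^2\delta^{-1})$, a uniform (multi-)sample $S$ of $C$ vertices together with estimates $\tilde s_u$ that simultaneously satisfy $\abs{\tilde s_u - s_u}\le\varepsilon_1$ for all $u\in S$ with probability at least $1-\delta_1$. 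I would then return $\frac{n}{C}\sum_{u\in S}\tilde s_u^2$, resp.\ $\frac{n}{C}\sum_{u\in S}(\tilde s_u - z_u^*)^2$, resp.\ $\frac{n}{C}\sum_{u\in S}\tilde s_u z_u^*$, and bound the error by the familiar two-source argument (identical in spirit to the $\Controversy$ and $\Internal$ cases of Lemma~\ref{lem:estimation-oracle-innate-opinions}): conditioned on all $\tilde s_u$ being $\varepsilon_1$-accurate, each perturbed summand differs from the exact one by at most $3\varepsilon_1 = \varepsilon/2$ (using $\abs{a^2-b^2} = \abs{a+b}\abs{a-b}$, with $\abs{a+b}\le 3$ because the values sit in $[0,1]$ up to the $\le\varepsilon_1$ slack; for $\DisCon$ the bound is even $\varepsilon_1$ since the extra factor is $z_u^*\le 1$), so the estimator changes by at most $\frac{n}{C}\cdot C\cdot\frac{\varepsilon}{2} = \frac{\varepsilon n}{2}$; meanwhile Lemma~\ref{lem:sum} guarantees that the estimator built from the true $s_u$ is within $\varepsilon_2 n = \frac{\varepsilon n}{2}$ of the true measure with probability at least $1-\delta_2$. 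A triangle inequality and a union bound over the two failure events give additive error $\le\varepsilon n$ with probability at least $1-\delta$.

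For $\Disagreement$ I would exploit $\Disagreement = \DisCon - \Controversy$ (immediate from the definition $\DisCon = \Disagreement + \Controversy$); a direct attack on $\sum_{(u,v)\in E} w_{uv}(z_u^*-z_v^*)^2$ is not available here since we cannot sample edges uniformly at random. I would estimate $\DisCon$ with additive error $\pm\varepsilon n/2$ by the recipe above (with halved error and confidence parameters, still costing $O(\varepsilon^{-2}\bar{d}\lg^2\delta^{-1})$) and estimate $\Controversy = \sum_{u\in V}(z_u^*)^2$ with additive error $\pm\varepsilon n/2$ using Lemma~\ref{lem:estimation-oracle-expressed-opinions}, which needs only the $z_u^*$-oracle and runs in time $O(\varepsilon^{-2}\lg\delta^{-1})$. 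Their difference is within $\pm\varepsilon n$ of $\Disagreement$ by the triangle inequality, and running each subroutine at confidence $1-\delta/2$ gives overall success probability $1-\delta$. The dominant running-time term in all four cases is the $O(\varepsilon^{-2}\bar{d}\lg^2\delta^{-1})$ cost of the opinion sampler, as claimed.

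I expect the only real obstacle to be the (routine) bookkeeping of error propagation: one must cleanly separate the estimation error from replacing $s_u$ by $\tilde s_u$ --- where the per-vertex union bound is already absorbed inside Lemma~\ref{lem:opinion-sampler}, so $\delta_1$ is just a constant fraction of $\delta$ rather than of $\delta/C$ --- from the sampling error of Lemma~\ref{lem:sum}, and then pick the split ($\varepsilon_1 = \varepsilon/6$ against $\varepsilon_2 = \varepsilon/2$) so that the two contributions add up to exactly $\varepsilon n$. The one genuine thing to verify is that each summand, in particular $(s_u-z_u^*)^2$ and $s_u z_u^*$, lies in an interval of length at most $1$, so that Lemma~\ref{lem:sum} applies with the stated $C = O(\varepsilon^{-2}\lg\delta^{-1})$.
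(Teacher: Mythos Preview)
Your proposal is correct and follows essentially the same approach as the paper: the same parameter split $\varepsilon_1=\varepsilon/6$, $\varepsilon_2=\varepsilon/2$, the same invocation of the opinion sampler (Lemma~\ref{lem:opinion-sampler}) followed by the sum estimator (Lemma~\ref{lem:sum}), and the same two-source triangle-inequality error analysis. The only difference is in how you obtain $\Disagreement$: you use $\Disagreement=\DisCon-\Controversy$ directly from the definition of $\DisCon$, whereas the paper invokes the conservation law $\Internal+2\Disagreement+\Controversy=\norm{s}_2^2$ (Lemma~\ref{lem:conservation-law}) and combines estimates of $\norm{s}_2^2$, $\Internal$, and $\Controversy$; your route is slightly more economical (two subroutines instead of three) but both give the same running time and error.
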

\begin{proof}
  In the following, we use the estimation of $\norm{s}_2^2$ as an example to
  illustrate. The estimation of $\Internal$ and $\DisCon$ works
  similarly. The disagreement $\Disagreement$ can be estimated using
  Lemma~\ref{lem:conservation-law} and our results for $\norm{s}_2^2$,
  $\Internal$ and $\Controversy$.
   
   Recall that
	$\norm{s}_2^2=\sum_{u\in V} s_u^2$. 
       We set 
        $\varepsilon_1 = \frac{\varepsilon}{6}$, $\delta_1 = \frac{\delta}{2}$, 
	$\varepsilon_2 = \frac{\varepsilon}{2}$, $\delta_2 = \frac{\delta}{2}$ and $C = \varepsilon_2^{-2} \log \delta_2^{-1} = O(\varepsilon^{-2} \log \delta^{-1})$.
	According to Lemma~\ref{lem:opinion-sampler}, in time $O(C \bar{d} \lg \delta^{-1})$, we can sample a (multi-)set of vertices~$S=\{i_1, i_2, \dots, i_C\}$ uniformly at random from~$V$ and obtain
	estimated innate opinions $\tilde{s}_u$ for all $u\in S$ such that with
	probability $1-\delta_1$ it holds that 
	$\abs{s_u - \tilde{s}_u} \leq \varepsilon_1$ for all $u\in S$. We return $\frac{n}{C} \sum_{u\in S} \tilde{s}_{u}^2$.

The running time is 
$O(C \bar{d} \lg \delta^{-1}) = O(\varepsilon^{-2} \bar{d} \lg^2 \delta^{-1})$. Now we analyze the error guarantee.
According to Lemma~\ref{lem:opinion-sampler}, for all $u\in S$, with probability
at least $1-\delta_1=1-\frac{\delta}{2}$, we have $\abs{\tilde{s}_u-s_u} \leq
\varepsilon_1 = \frac{\varepsilon}{6}$. Therefore, $\abs{\tilde{s}_u^2-s_u^2}
\leq \abs{\tilde{s}_u+s_u} \cdot \abs{\tilde{s}_u-s_u} \leq 3\varepsilon_1 =
\frac{\varepsilon}{2}$, since $\tilde{s}_u+s_u \leq (1 + \varepsilon_1) + 1 \leq
3$ and $\abs{\tilde{s}_u-s_u}\leq \varepsilon_1$ by our assumption above.
	Then we have $\abs{\sum_{u\in S} \tilde{s}_u^2 - \sum_{u\in S} s_u^2} \leq \abs{S}\cdot \frac{\varepsilon}{2} = \frac{\varepsilon C}{2}$.
        According to Lemma~\ref{lem:sum}, with probability at least $1-\delta_2 = 1-\frac{\delta}{2}$, we have $\abs{\frac{n}{C}\sum_{u\in S} s_u^2 - \norm{s}_2^2}\leq \varepsilon_2 n = \frac{\varepsilon n}{2}$. By union bound, with probability at least $1-\frac{\delta}{2}-\frac{\delta}{2}=1-\delta$, we have 
        \begin{align*}
            \abs{\frac{n}{C}\sum_{u \in S} \tilde{s}_u^2 - \norm{s}_2^2}
            &\leq 
            \abs{\frac{n}{C}\sum_{u \in S} \tilde{s}_u^2 - \frac{n}{C}\sum_{u \in S} s_u^2} + \abs{\frac{n}{C}\sum_{u \in S} s_u^2 - \norm{s}_2^2}\\
            &\leq \frac{n}{C}\cdot \frac{\varepsilon C}{2} + \frac{\varepsilon n}{2}\\
            &= \varepsilon n.
			\qedhere
        \end{align*}
\end{proof}

\section{Additional Experiments}
\label{app:exp}
In this section, we present additional experimental results and elaborate on the
technical details of our experiment setup.

\subsection{Datasets and Parameter Choices}
First, we note that we obtained the bounds on $\kappa(\tilde{S})$ in
Table~\ref{tab:experiments}, where $\tilde{S} = (I+D)^{-1/2} (I+L)
(I+D)^{-1/2}$, using power iteration with $100$~iterations.

Next, we provide details on the opinion distributions that we used. We already
mentioned the uniform distribution in the main text. Additionally, we sampled
opinions from an exponential distribution and then rescaled the values we obtain
so that all opinions are in the interval~$[0,1]$; this was done exactly as
in~\cite{xu2021fast}. Since for the first two distributions, the opinions do not
depend on the graph structure, we also compute an approximation~$v$ of the
second eigenvector of~$L$ using power iteration with $100$~iterations. Given
this approximation, we rescale all entries in~$v$ such that they are in the
interval~$[0,1]$ by setting $v_i=\frac{v_i - \min(v)}{\max(v)-\min(v)}$.
Intuitively, this vector takes into account the community structure of the graph
and thus we obtain opinions that depend on the graph structure. Interestingly,
we find that on the datasets we consider, this distribution is relatively close
to the uniform distribution.

Next, let us briefly justify our parameter choices in
Tables~\ref{tab:experiments-oracle-innate-uniform}
and~\ref{tab:experiments-oracle-expressed-uniform}.
We picked \numprint{4000}~random walks since Figures~\ref{fig:estimating-opinions-oracle-innate}(\subref{fig:pokec-walks})
and~\ref{fig:estimating-opinions-oracle-innate}(\subref{fig:livejournal-walks}) suggest that by going from \numprint{4000} to
\numprint{6000} walks, the error and standard deviations do not reduce
significantly anymore, whereas going from \numprint{2000} to
\numprint{4000}~random walks reduces the standard deviation of the error
significantly. Additionally, the number of steps was determined by observing
that (see Figures~\ref{fig:estimating-opinions-oracle-innate-time}(\subref{fig:pokec-steps-time})
and~\ref{fig:estimating-opinions-oracle-innate-time}(\subref{fig:livejournal-steps-time})) using \numprint{600} steps costs barely
more running time than using \numprint{400} steps, but does provide slightly less
error (see Figures~\ref{fig:estimating-opinions-oracle-innate}(\subref{fig:pokec-steps})
and~\ref{fig:estimating-opinions-oracle-innate}(\subref{fig:livejournal-steps})). We considered \numprint{10000} nodes to obtain running
times similar to the baselines. We note that if we use less nodes (e.g.,
		\numprint{1000} nodes) then we still obtain similar average errors when computing
polarization, controversy, etc. (the error increases by approx. 1\%, but we
		obtain higher standard deviations).

\subsection{Further Analysis Given Oracle Access to $s_u$}
Next, we provide further experimental results when we have query access to
the innate opinions $s_u$.

\textbf{Running time analysis.}
In Figure~\ref{fig:estimating-opinions-oracle-innate-time}, we present the
running time of Algorithm~\ref{alg:random-walks} on the \Pokec and \LiveJournal
datasets with uniformly distributed innate opinions.
We observe that the algorithm's running time scales linearly in the number of
random walks, as well as in the number of sampled vertices for which the
opinions shall be estimated. We also observe that after setting the number of
random walk steps to 400, the running time stops increasing even for larger
numbers of random walk steps.  This behavior is explained by the timeout of the
random walks, which at vertex~$v$ terminate with probability
$1/2 - w_v / (2(1+w_v))$ (see Section~\ref{sec:proof:pro:solver} for details).
In other words, the probability that the random walks perform more than
400~steps without terminating is very small and therefore the running time stops
increasing.

We note that here we only report running time results for uniformly distributed
innate opinions. We do this for conciseness, since the results using the other
two opinion distributions are almost identical.

\begin{figure*}
	\centering
	\begin{subfigure}{0.24\textwidth}
		\includegraphics[width=\textwidth]{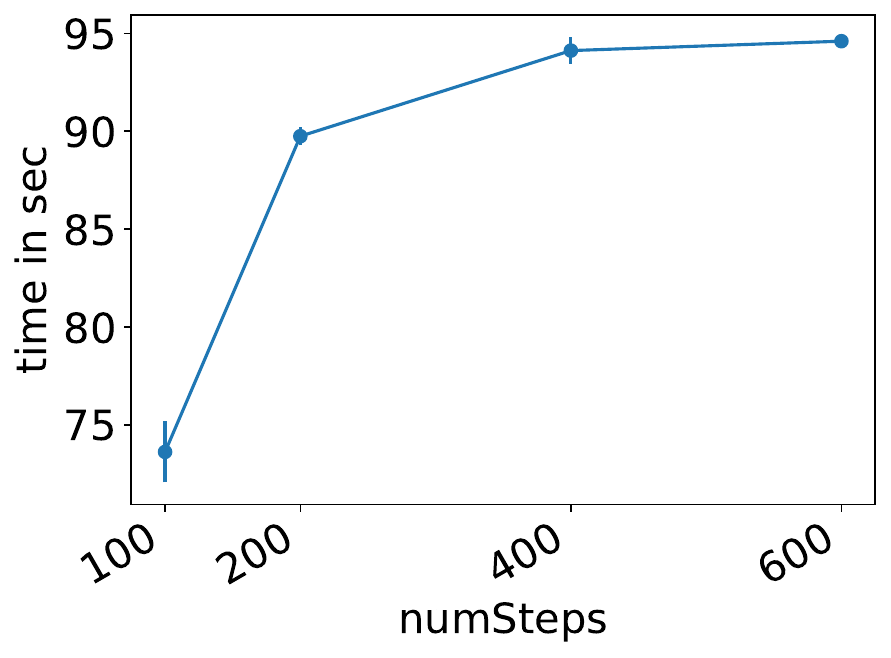}
		\caption{\Pokec, vary \#steps}
		\label{fig:pokec-steps-time}
	\end{subfigure}
	\hfill
	\begin{subfigure}{0.24\textwidth}
		\includegraphics[width=\textwidth]{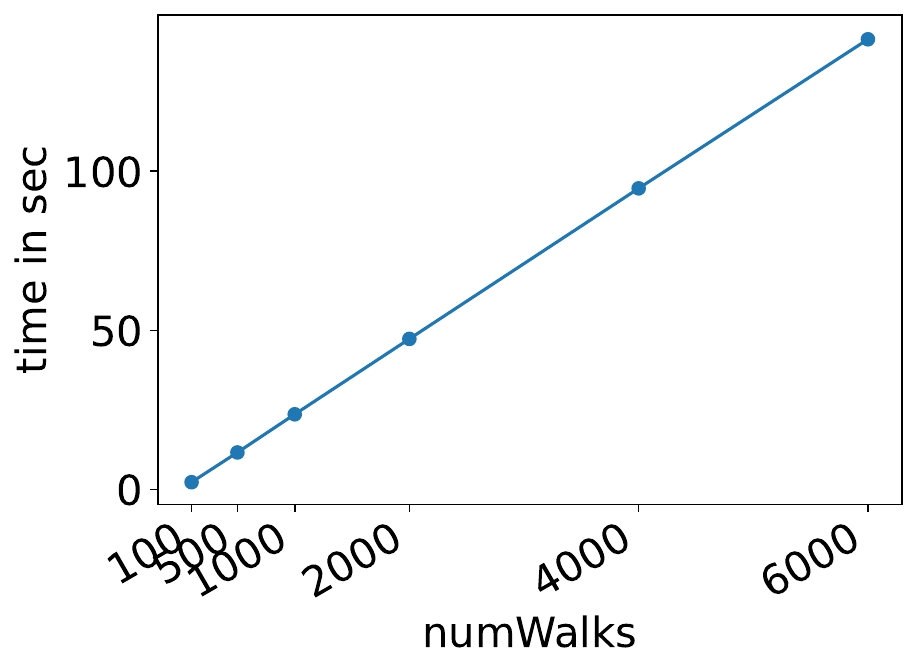}
		\caption{\Pokec, vary \#walks.}
		\label{fig:pokec-walks-time}
	\end{subfigure}
	\hfill
	\begin{subfigure}{0.24\textwidth}
		\includegraphics[width=\textwidth]{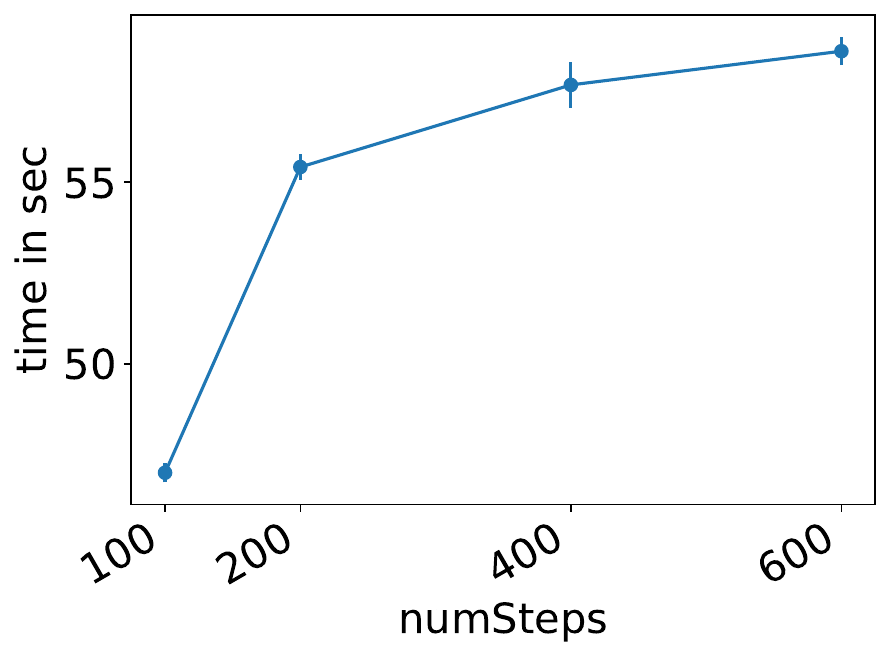}
		\caption{\LiveJournal, vary \#steps}
		\label{fig:livejournal-steps-time}
	\end{subfigure}
	\hfill
	\begin{subfigure}{0.24\textwidth}
		\includegraphics[width=\textwidth]{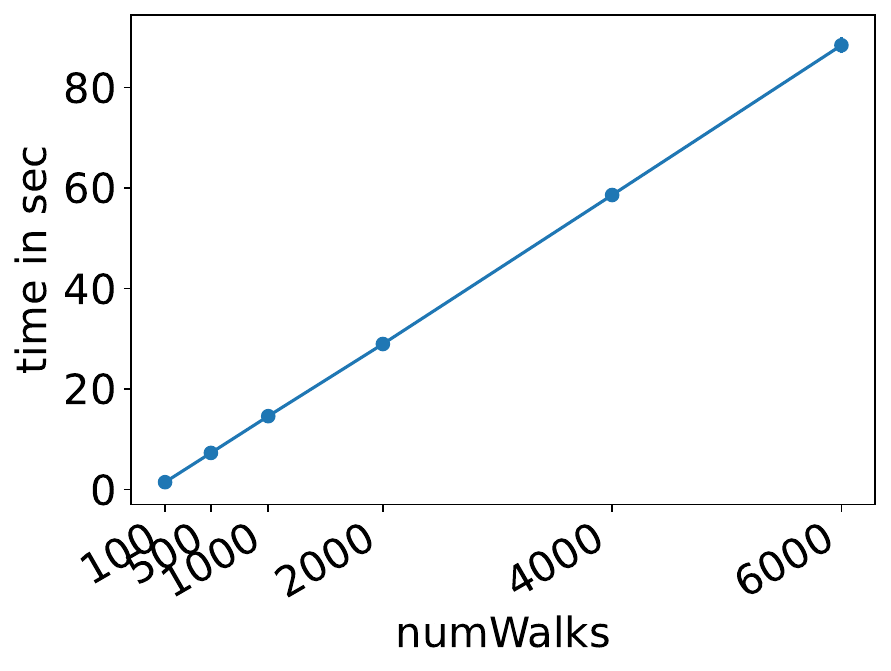}
		\caption{\LiveJournal, vary \#walks}
		\label{fig:livejournal-numsamples-time}
	\end{subfigure}
	\caption{Running time of Algorithm~\ref{alg:random-walks} for
	   	estimating expressed opinions~$z_u^*$ using an oracle for innate
		opinions~$s_u$.
		When not mentioned otherwise, we sampled \numprint{10000}~vertices and for each of
		them we performed \numprint{4000}~random walks with \numprint{600}~steps.
		We report means and standard deviations across 10~experiments.
		Innate opinions were generated using the uniform distribution.
	}
	\label{fig:estimating-opinions-oracle-innate-time}
\end{figure*}

\textbf{Additional error analysis.}
Next, we present additional error analysis with different opinion distributions
on the datasets that we consider.

First, we consider estimating the expressed opinions~$z_u^*$ using
Algorithm~\ref{alg:random-walks} and an oracle for innate opinions~$s_u$.  We
present the results using innate opinions generated from the second eigenvector
of the Laplacian in Table~\ref{tab:experiments-oracle-innate-eigenvalue} and the
results using a rescaled exponential distribution in
Table~\ref{tab:experiments-oracle-innate-exponential}.  We observe that for
approximating the measures, our results are highly similar to what we present
in the main text for uniformly distributed innate opinions. That is, for all
measures except disagreement we can compute estimates with relative error at
most 6\%. Interestingly, we observe that for exponentially distributed innate
opinions the average absolute error for estimating $z_u^*$ is only $\pm0.003$
(rather than $\pm0.01$ for the other two distributions), but this does not lead
to significantly lower relative error when approximating the measures.

\begin{table*}[t!]
\caption{Errors for different datasets given an oracle for innate opinions; we
	report means and standard deviations (in parentheses) across 10~experiments. We ran
	Algorithm~\ref{alg:random-walks} with \numprint{600}~steps and
	\numprint{4000}~random walks; we estimated the opinions of
	\numprint{10000}~random vertices.  Innate opinions were generated using the
	the second eigenvector of the Laplacian.}
\label{tab:experiments-oracle-innate-eigenvalue}
\centering
\begin{adjustbox}{max width=\textwidth}
\begin{tabular}{c cccccccc}
\toprule
\textbf{Dataset} & \textbf{Absolute Error} & \multicolumn{7}{c}{\textbf{Relative Error in \%}} \\
\cmidrule(lr){3-9} 
& $z_u^*$ & $\SumOpinions$ & $\Polarization$ & $\Disagreement$ &
$\Internal$ & $\Controversy$ & $\DisCon$ & $\norm{s}_2^2$ \\
\midrule
\primitiveinput{tables/measures_Eigenvalue_givenS_error_numStepsSamples600_numWalksRepetitions4000_numSampledVertices10000.tex}
\bottomrule
\end{tabular}
\end{adjustbox}
\end{table*}

\begin{table*}[t!]
\caption{Errors for different datasets given an oracle for innate opinions; we
	report means and standard deviations (in parentheses) across 10~experiments. We ran
	Algorithm~\ref{alg:random-walks} with \numprint{600}~steps and
	\numprint{4000}~random walks; we estimated the opinions of
	\numprint{10000}~random vertices.  Innate opinions were generated using the
	exponential distribution.}
\label{tab:experiments-oracle-innate-exponential}
\centering
\begin{adjustbox}{max width=\textwidth}
\begin{tabular}{c cccccccc}
\toprule
\textbf{Dataset} & \textbf{Absolute Error} & \multicolumn{7}{c}{\textbf{Relative Error in \%}} \\
\cmidrule(lr){3-9} 
& $z_u^*$ & $\SumOpinions$ & $\Polarization$ & $\Disagreement$ &
$\Internal$ & $\Controversy$ & $\DisCon$ & $\norm{s}_2^2$ \\
\midrule
\primitiveinput{tables/measures_Exponential_givenS_error_numStepsSamples600_numWalksRepetitions4000_numSampledVertices10000.tex}
\bottomrule
\end{tabular}
\end{adjustbox}
\end{table*}

\textbf{Impact of vertex degrees.}
Next, our goal is to understand how the degree of a vertex~$u$ impacts the
performace of Algorithm~\ref{alg:random-walks} when estimating $z_u^*$. Our
experiment setup to obtain this understanding is as follows.
For each dataset, we sort the nodes by their degrees. Then we create
20~buckets, where the first bucket contains the 5\% of nodes of lowest degree,
the second bucket contains the nodes with the 5\% to 10\% lowest degrees, etc.
For each bucket, we randomly pick 500~nodes and estimate their opinions.
We report the error and running time for each bucket (averaged over 10
runs and we will also report standard deviations); we normalize the errors and
standard deviations for each bucket, by dividing by the average error and the
standard deviation across the experiments for \emph{all} buckets. This will
allow us to understand whether on some buckets we obtain larger/smaller errors
than on others.

We report our experiments on the datasets with uniform opinions.  In our
experiments, we used Algorithm~\ref{alg:random-walks} with
\numprint{4000}~random walks with \numprint{600}~steps.

Regarding the errors, we present the results on the four largest datasets in
Figure~\ref{fig:error-buckets} (for the smaller datasets, the results were
analogous). Our results show that the normalized errors and standard deviations
are both $1$ across all datasets and buckets with only minor fluctuations. In
other words, the errors and standard deviations are the same on all buckets and
there is essentially no difference.
We note that this is consistent with the results of
Table~\ref{tab:experiments-oracle-innate-uniform}, where the errors
and standard deviations are highly similar across all datasets; our new results
show that we can say the same about the buckets for each dataset. The only
slight outlier is LiveJournal, where (only) the final bucket has an average
error that is approximately 40\% higher than the other buckets and it has much
higher standard deviation.
Overall, we conclude that the error is almost not impacted by the degree of the
vertices.

\begin{figure*}
	\centering
	\begin{subfigure}{0.24\textwidth}
		\includegraphics[width=\textwidth]{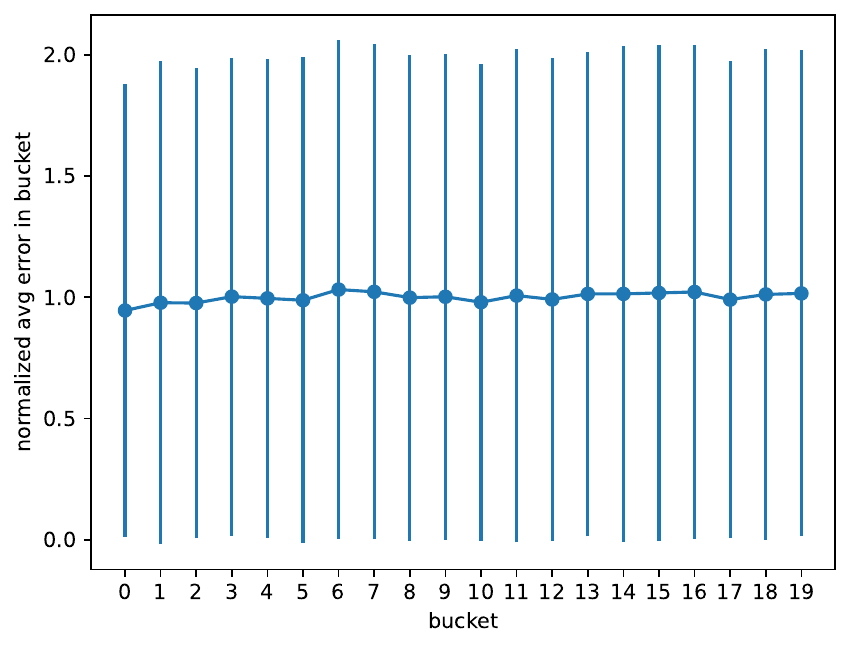}
		\caption{\Pokec}
		\label{fig:pokec-error-buckets}
	\end{subfigure}
	\hfill
	\begin{subfigure}{0.24\textwidth}
		\includegraphics[width=\textwidth]{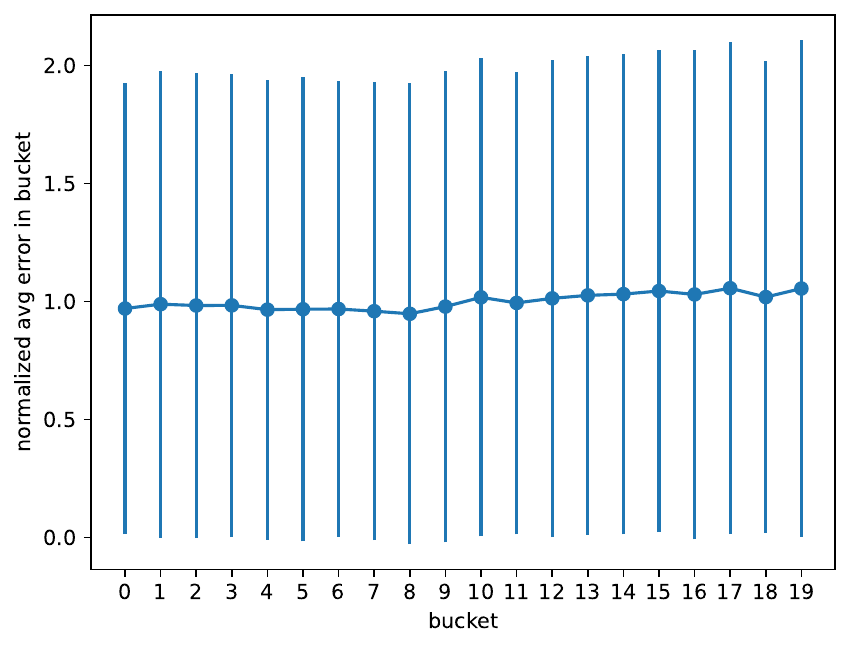}
		\caption{\Flickr}
		\label{fig:flickr-error-buckets}
	\end{subfigure}
	\hfill
	\begin{subfigure}{0.24\textwidth}
		\includegraphics[width=\textwidth]{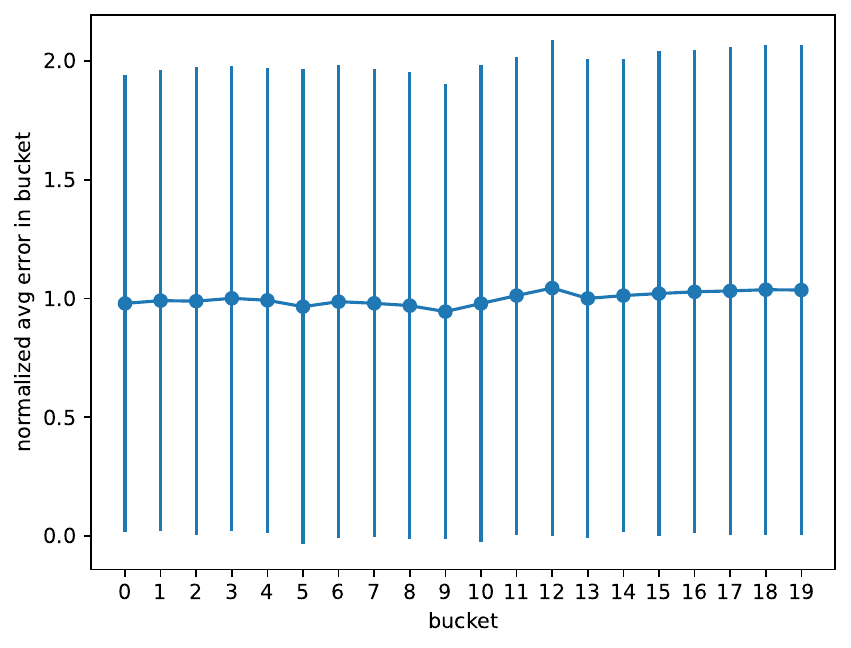}
		\caption{\YouTube}
		\label{fig:youtube-error-buckets}
	\end{subfigure}
	\hfill
	\begin{subfigure}{0.24\textwidth}
		\includegraphics[width=\textwidth]{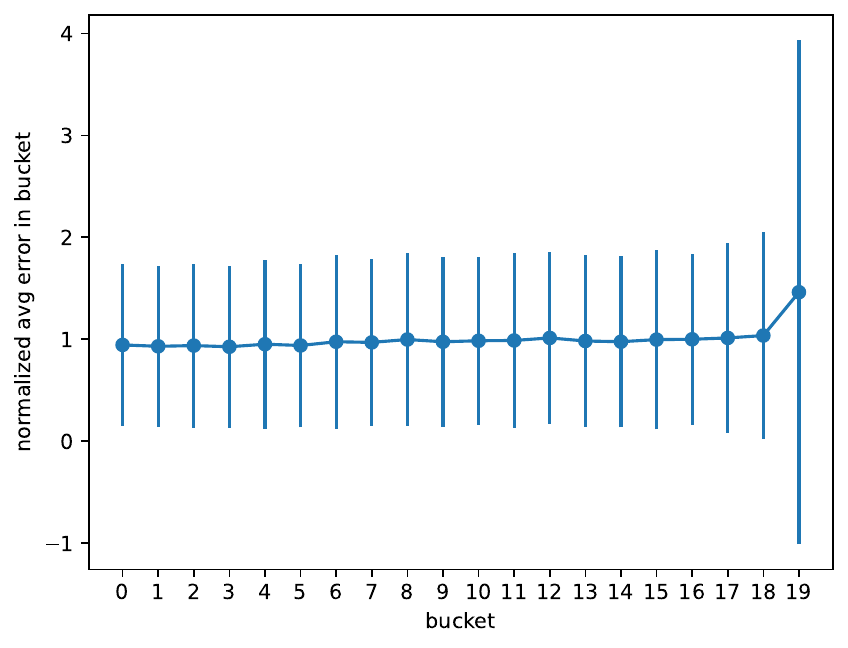}
		\caption{\LiveJournal}
		\label{fig:livejournal-error-buckets}
	\end{subfigure}
	\caption{Normalized error of Algorithm~\ref{alg:random-walks} for
	   	estimating expressed opinions~$z_u^*$ using an oracle for innate
		opinions~$s_u$.
		We sorted the vertices by degrees (from low to high) and partitioned
		them into 20~equally sized buckets.
		We used Algorithm~\ref{alg:random-walks} with \numprint{4000}~random
		walks with \numprint{600}~steps.
		We report means and standard deviations across 10~experiments.
		Innate opinions were generated using the uniform distribution.
	}
	\label{fig:error-buckets}
\end{figure*}

Regarding the running time, we present the results on our four largest datasets
in Figure~\ref{fig:time-buckets}. We find that our query procedure takes more
time for high-degree nodes. While this might sound surprising at first, it is
actually not: we note that our random walks have a timeout, i.e., at each step
they terminate with probability $1/(2(1+w_v))$, where $w_v$ is the degree of the
current node. Hence, when starting in a high-degree node, $w_v$ is large and we
have a much smaller probability of stopping ``early''. This explains our
phenomenon.  Indeed, we can see that for denser datasets (higher average degree,
see Tables~\ref{tab:experiments} and~\ref{tab:running-times}), the query
procedure gets slower. When comparing the query times across the different
buckets, we find that typically they differ by a factor of at most 2 to 3 across
the different datasets. It is not clear to us why the running times of some of
the medium-range buckets are sometimes the fastest.

\begin{figure*}
	\centering
	\begin{subfigure}{0.24\textwidth}
		\includegraphics[width=\textwidth]{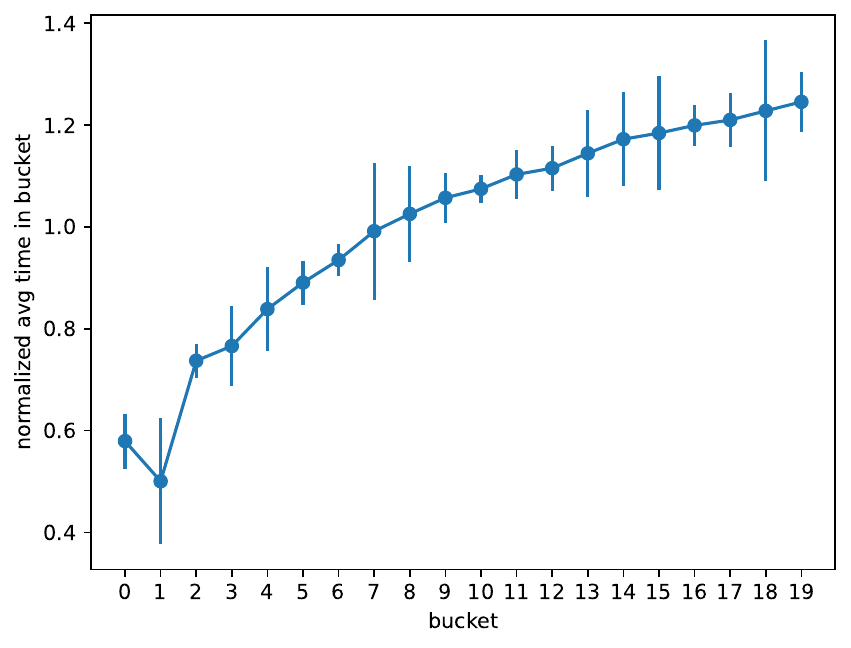}
		\caption{\Pokec}
		\label{fig:pokec-time-buckets}
	\end{subfigure}
	\hfill
	\begin{subfigure}{0.24\textwidth}
		\includegraphics[width=\textwidth]{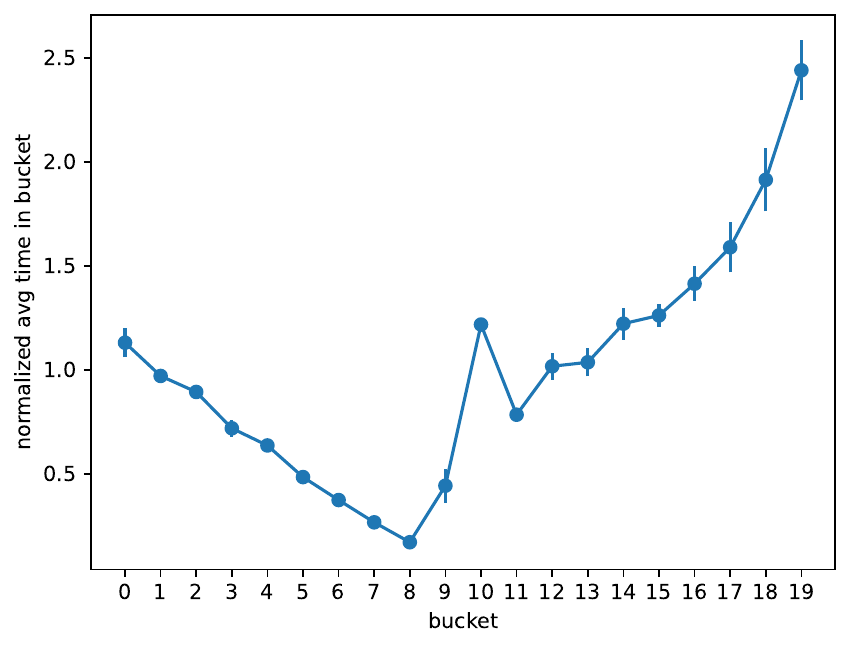}
		\caption{\Flickr}
		\label{fig:flickr-time-buckets}
	\end{subfigure}
	\hfill
	\begin{subfigure}{0.24\textwidth}
		\includegraphics[width=\textwidth]{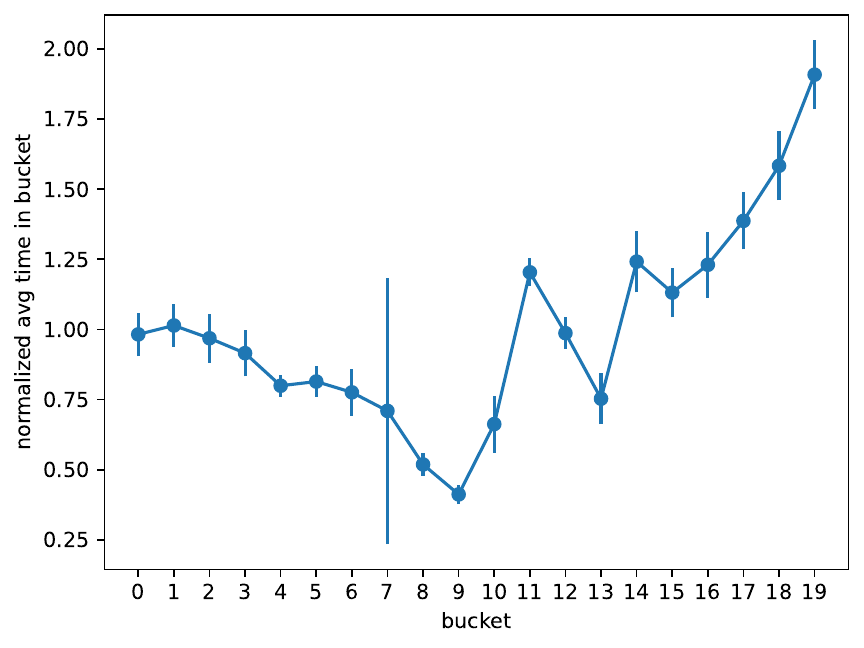}
		\caption{\YouTube}
		\label{fig:youtube-time-buckets}
	\end{subfigure}
	\hfill
	\begin{subfigure}{0.24\textwidth}
		\includegraphics[width=\textwidth]{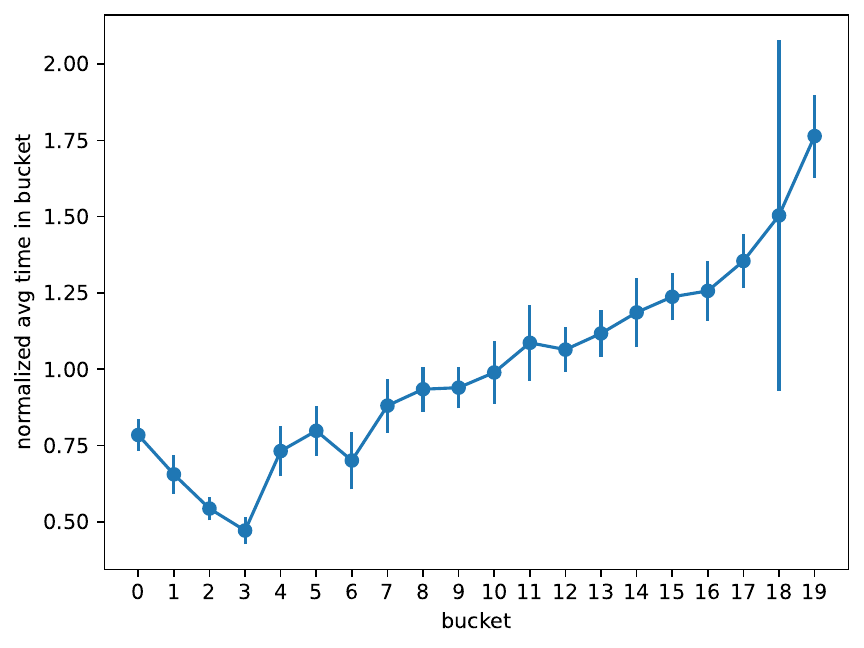}
		\caption{\LiveJournal}
		\label{fig:livejournal-time-buckets}
	\end{subfigure}
	\caption{Normalized running time of Algorithm~\ref{alg:random-walks} for
	   	estimating expressed opinions~$z_u^*$ using an oracle for innate
		opinions~$s_u$.
		We sorted the vertices by degrees (from low to high) and partitioned
		them into 20~equally sized buckets.
		We used Algorithm~\ref{alg:random-walks} with \numprint{4000}~random
		walks with \numprint{600}~steps.
		We report means and standard deviations across 10~experiments.
		Innate opinions were generated using the uniform distribution.
	}
	\label{fig:time-buckets}
\end{figure*}

\subsection{Further Analysis Given Oracle Access to $z_u^*$}
Next, we consider estimating the innate opinions~$s_u$ using the sampling
scheme from Lemma~\ref{lem:estimate-s} and an oracle for expressed
opinions~$z_u^*$.  Again, we present additional error analysis with different
opinion distributions on the datasets that we consider.
We present the results using innate opinions generated from
the second eigenvector of the Laplacian in
Table~\ref{tab:experiments-oracle-expressed-eigenvalue} and the results using a
rescaled exponential distribution in
Table~\ref{tab:experiments-oracle-expressed-exponential}.  We again observe that
overall the results are similar to what we reported in the main text for
uniformly distributed innate opinions. The main difference is that for
exponentially distributed innate opinions, the relative error for internal
conflict~$\Internal$ and, to a lesser extent, for polarization~$\Polarization$
is higher. We explain this by the fact that for a highly skewed distribution
like the exponential distribution, a small number of vertices make up for a
large fraction of the measures' values. Therefore, sampling-based schemes like
ours perform worse and require estimating more vertex opinions (compared innate
opinion distributed based on less skewed distributions).

\begin{table*}[t!]
\caption{Errors for different datasets given an oracle for expressed opinions;
	we report means and standard deviations (in parentheses) across 10~experiments. We ran our
	algorithm with threshold~400 and 5~repetitions; we estimated the opinions of
	\numprint{10000}~random vertices.  Innate opinions were generated using the
	the second eigenvector of the Laplacian.}
\label{tab:experiments-oracle-expressed-eigenvalue}
\centering
\begin{adjustbox}{max width=\textwidth}
\begin{tabular}{c cccccccc}
\toprule
\textbf{Dataset} & \textbf{Absolute Error} & \multicolumn{7}{c}{\textbf{Relative Error in \%}} \\
\cmidrule(lr){3-9} 
& $s_u$ & $\SumOpinions$ & $\Polarization$ & $\Disagreement$ &
$\Internal$ & $\Controversy$ & $\DisCon$ & $\norm{s}_2^2$ \\
\midrule
\primitiveinput{tables/measures_Eigenvalue_givenZ_error_numStepsSamples400_numWalksRepetitions5_numSampledVertices10000.tex}
\bottomrule
\end{tabular}
\end{adjustbox}
\end{table*}

\begin{table*}[t!]
\caption{Errors for different datasets given an oracle for expressed opinions;
	we report means and standard deviations (in parentheses) across 10~experiments. We ran our
	algorithm with threshold~400 and 5~repetitions; we estimated the opinions of
	\numprint{10000}~random vertices.  Innate opinions were generated using the
	exponential distribution.}
\label{tab:experiments-oracle-expressed-exponential}
\centering
\begin{adjustbox}{max width=\textwidth}
\begin{tabular}{c cccccccc}
\toprule
\textbf{Dataset} & \textbf{Absolute Error} & \multicolumn{7}{c}{\textbf{Relative Error in \%}} \\ 
\cmidrule(lr){3-9} 
& $s_u$ & $\SumOpinions$ & $\Polarization$ & $\Disagreement$ &
$\Internal$ & $\Controversy$ & $\DisCon$ & $\norm{s}_2^2$ \\
\midrule
\primitiveinput{tables/measures_Exponential_givenZ_error_numStepsSamples400_numWalksRepetitions5_numSampledVertices10000.tex}
\bottomrule
\end{tabular}
\end{adjustbox}
\end{table*}

\subsection{Analysis of the PageRank-Style Update Rule}

We also evaluate the error of the PageRank-style update
rule~\cite{friedkin2014two,proskurnikov2016pagerank} from
Proposition~\ref{prop:page-rank-equation}. To this end, we implement an
algorithm which initializes a vector in which all entries are set to the average
of the node opinions~$\frac{1}{n} \sum_u s_u$ and then we apply the update rule
from the proposition for 50~iterations.
We compare the error of this iterative algorithm against the solution computed
by the algorithm of Xu et al.~\cite{xu2021fast}. We report the $\ell_2$-norm of
the error, as well as the differences of the iterates
$\norm{z^{(t)} - z^{(t-1)}}_2$; we note that in the figure, the error is not
normalized by the number of vertices~$n$.

We report our findings in Figure~\ref{fig:error-pr}, showing that the error
decays exponentially.

\begin{figure*}
	\centering
	\begin{subfigure}{0.24\textwidth}
		\includegraphics[width=\textwidth]{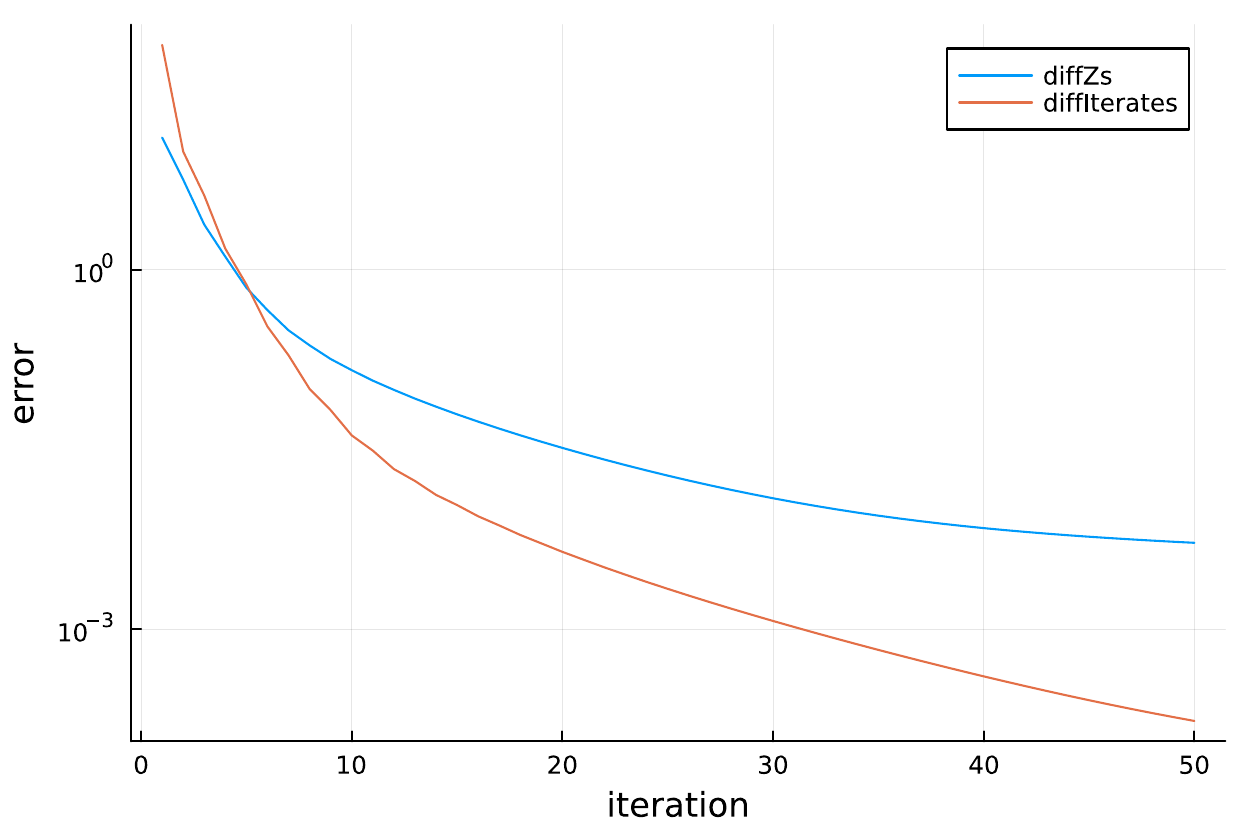}
		\caption{\Pokec}
		\label{fig:pokec-error-pr}
	\end{subfigure}
	\hfill
	\begin{subfigure}{0.24\textwidth}
		\includegraphics[width=\textwidth]{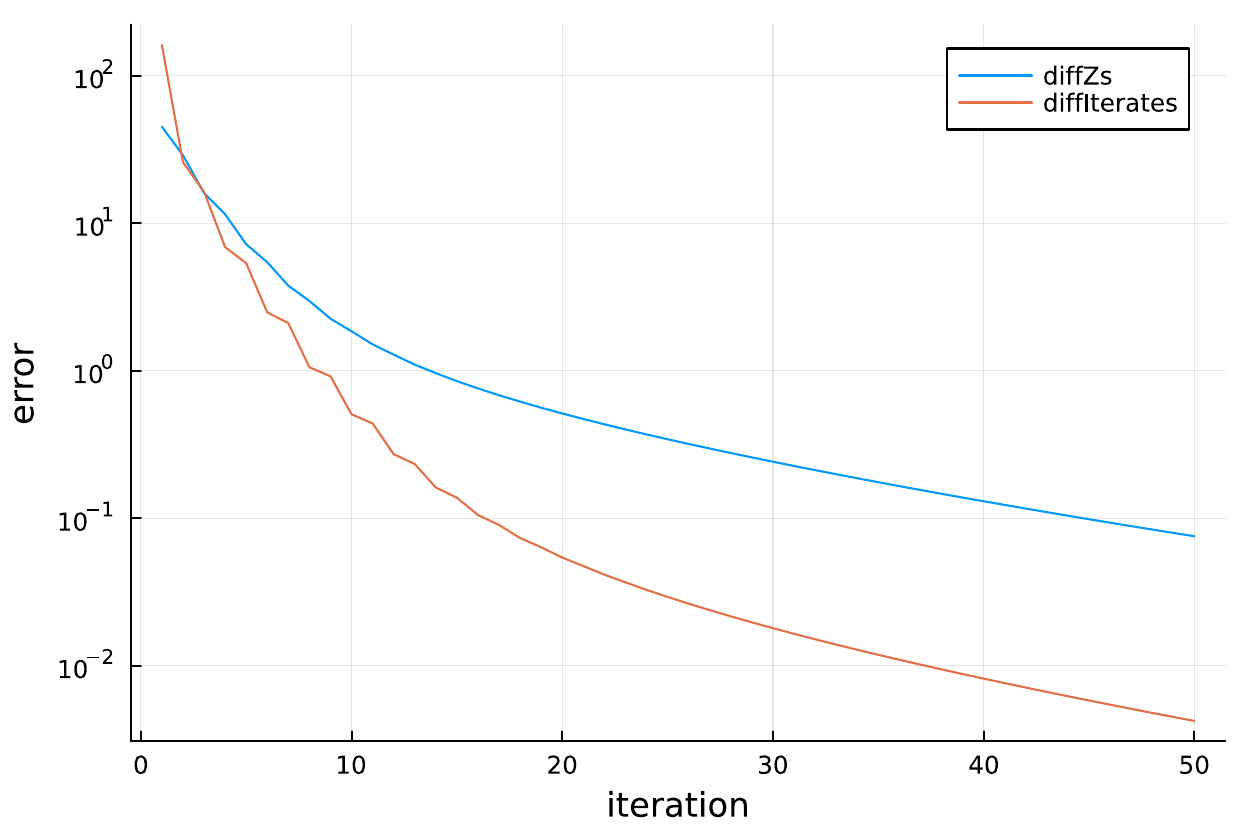}
		\caption{\Flickr}
		\label{fig:flickr-error-pr}
	\end{subfigure}
	\hfill
	\begin{subfigure}{0.24\textwidth}
		\includegraphics[width=\textwidth]{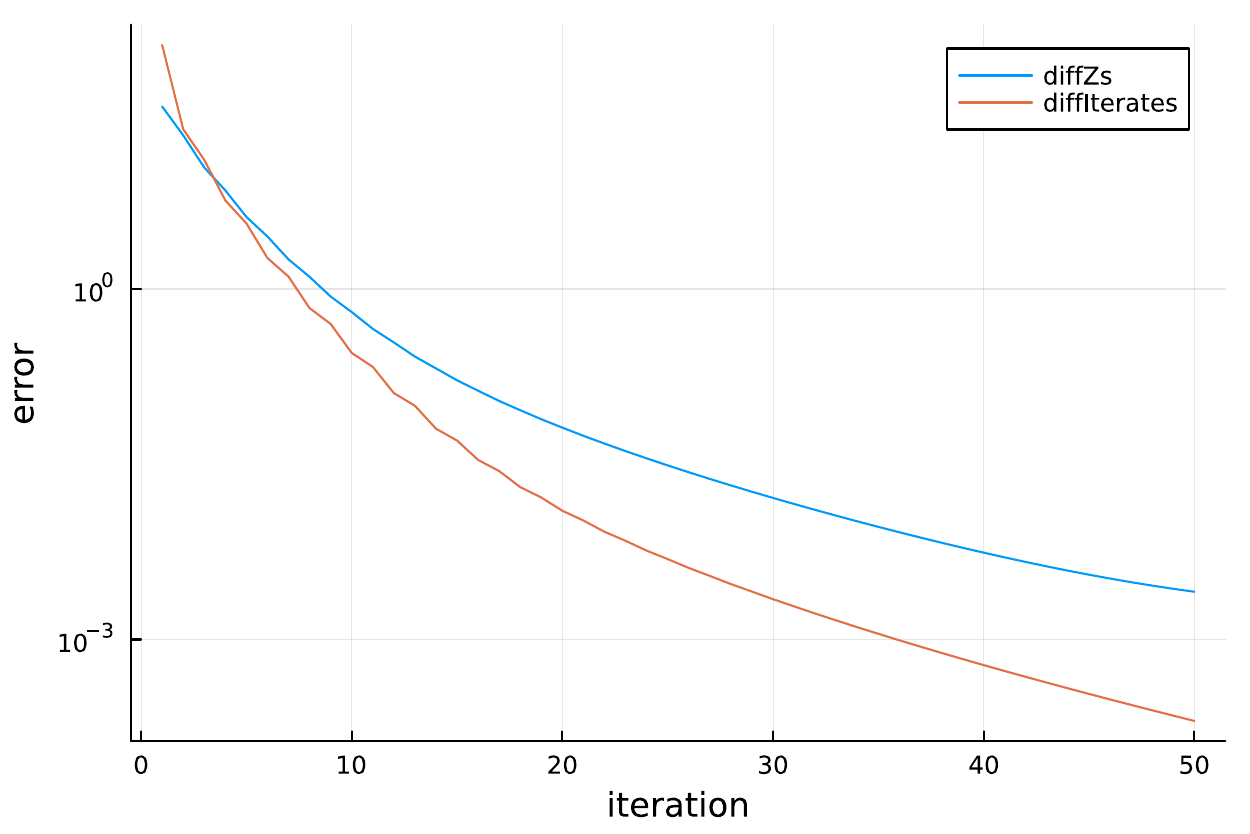}
		\caption{\YouTube}
		\label{fig:youtube-error-pr}
	\end{subfigure}
	\hfill
	\begin{subfigure}{0.24\textwidth}
		\includegraphics[width=\textwidth]{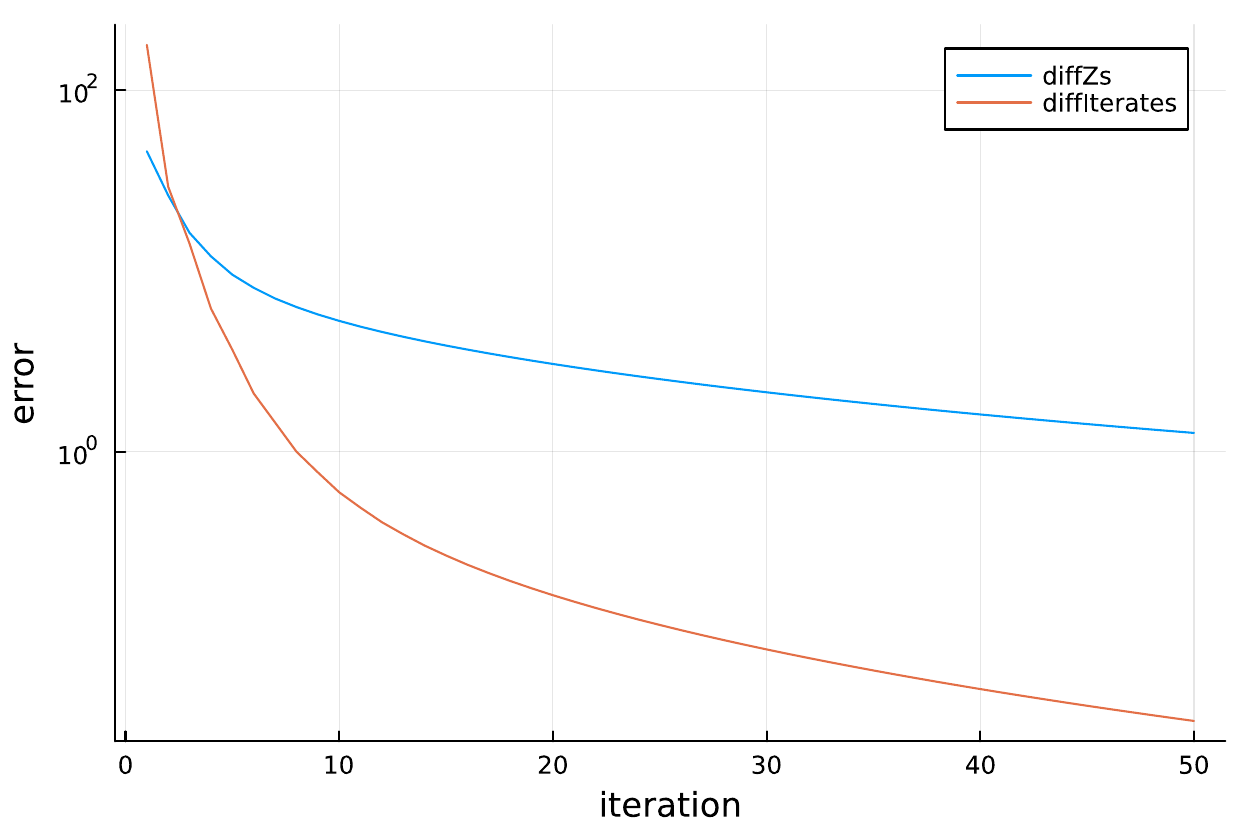}
		\caption{\LiveJournal}
		\label{fig:livejournal-error-pr}
	\end{subfigure}
	\caption{Error of applying the PageRank-style update rule from
		Proposition~\ref{prop:page-rank-equation} for multiple iterations.
		We compare the error of this iterative algorithm against the solution computed
		by the algorithm of Xu et al.~\cite{xu2021fast}. We report the $\ell_2$-norm of
		the error, as well as the differences of the iterates $\norm{z^{(t)} - z^{(t-1)}}_2$.
		Innate opinions were generated using the uniform distribution.
	}
	\label{fig:error-pr}
\end{figure*}

\subsection{Estimating the Disagreement}
Previously, we have seen that estimating the disagreement is more difficult than
estimating other measures (see Tables~\ref{tab:experiments-oracle-innate-uniform}
and~\ref{tab:experiments-oracle-expressed-uniform}). Hence, here we consider a
setting in which we can sample edges uniformly at random from unweighted graphs.
In this setting, we can again use Lemma~\ref{lem:sum} to obtain that using
$O(\varepsilon^{-2} \log \delta^{-1})$~samples, we can obtain an approximation
with error $\pm \varepsilon m$. We report our results for this algorithm on
datasets with uniform opinions. As before, we report relative errors; errors and
standard deviations are for 10~independent runs of the algorithm.

First, we present our results when we have query access to the entries $z_u^*$,
i.e., for each edge $(u,v)$ that we sample we can compute $(z_u^* - z_v^*)^2$
exactly in time $O(1)$. We present our results in
Figure~\ref{fig:error-disagreement-z}, showing that the error is much smaller
than what we reported in Table~\ref{tab:experiments-oracle-expressed-uniform},
but it does require at least $10^4$ edges to obtain good results with relative
error less than 5\%.

\begin{figure*}
	\centering
	\begin{subfigure}{0.24\textwidth}
		\includegraphics[width=\textwidth]{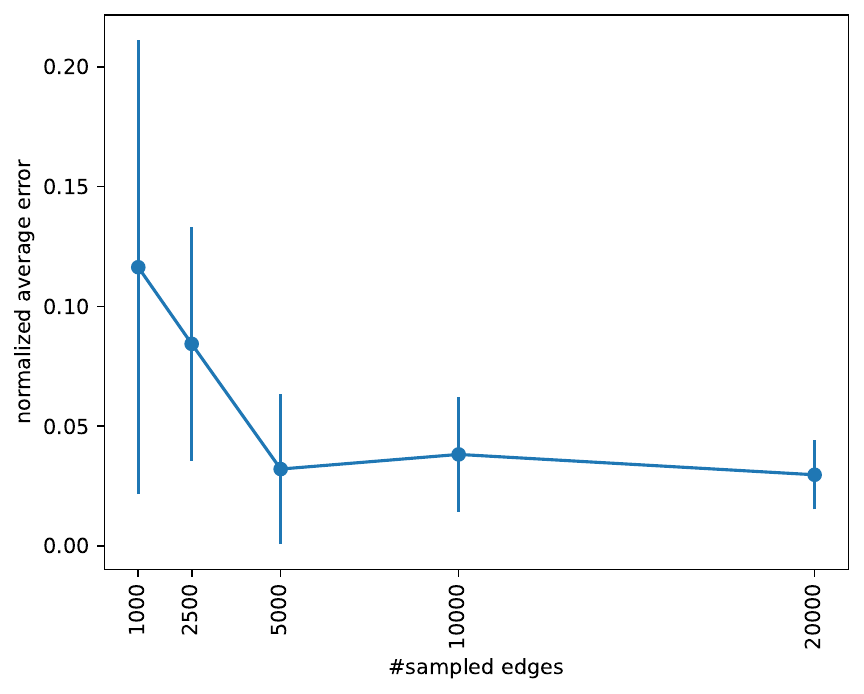}
		\caption{\Pokec}
		\label{fig:pokec-error-disagreement-z}
	\end{subfigure}
	\hfill
	\begin{subfigure}{0.24\textwidth}
		\includegraphics[width=\textwidth]{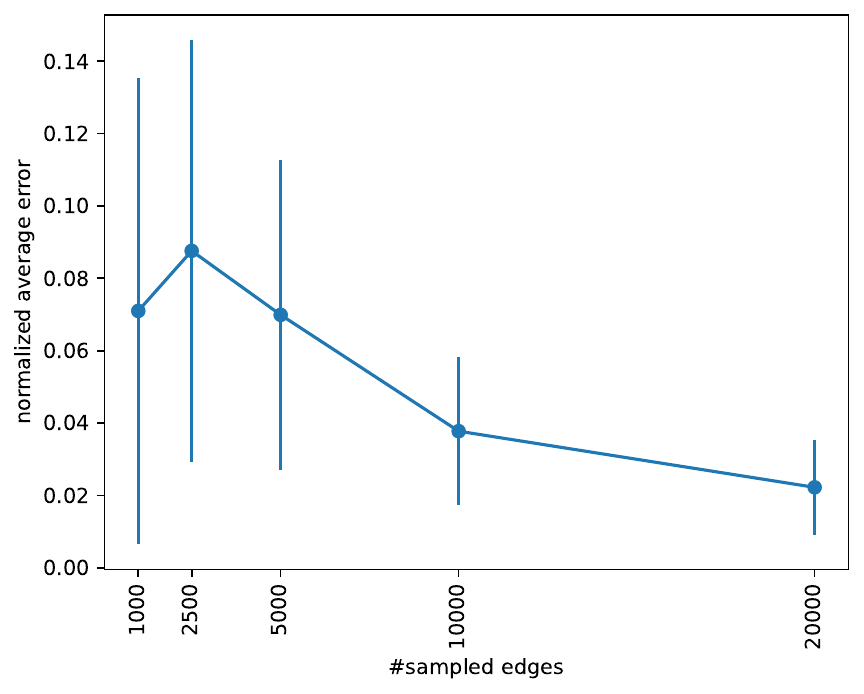}
		\caption{\Flickr}
		\label{fig:flickr-error-disagreement-z}
	\end{subfigure}
	\hfill
	\begin{subfigure}{0.24\textwidth}
		\includegraphics[width=\textwidth]{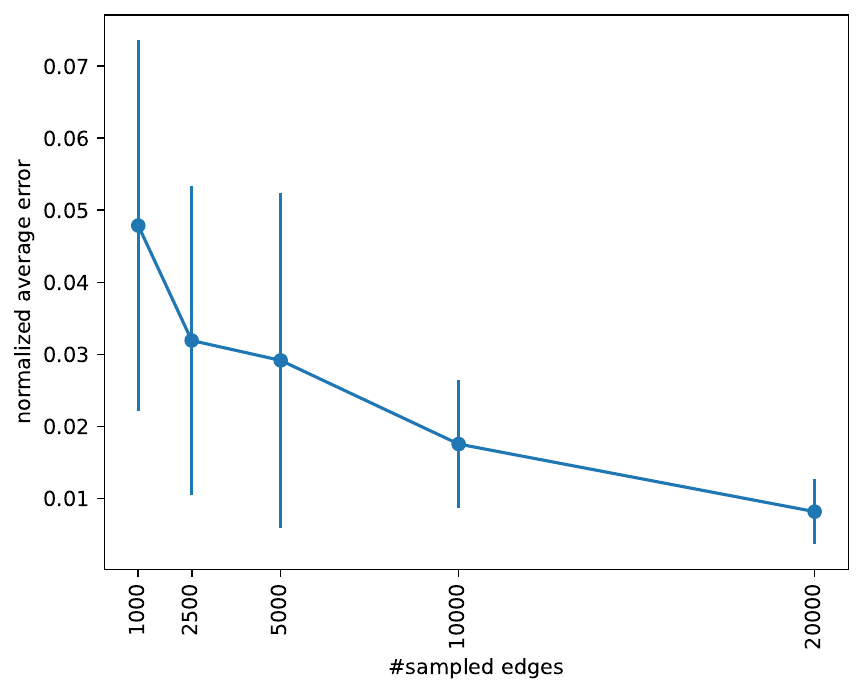}
		\caption{\YouTube}
		\label{fig:youtube-error-disagreement-z}
	\end{subfigure}
	\hfill
	\begin{subfigure}{0.24\textwidth}
		\includegraphics[width=\textwidth]{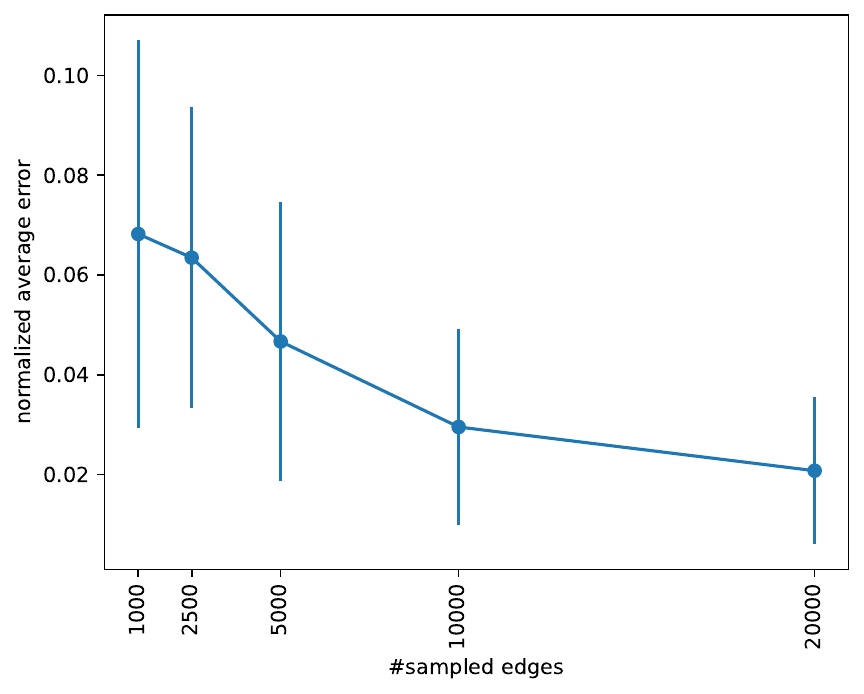}
		\caption{\LiveJournal}
		\label{fig:livejournal-error-disagreement-z}
	\end{subfigure}
	\caption{Error of the estimator from Lemma~\ref{lem:sum} for computing the
		disagreement using random sampling of edges, having query access to
		$z_u^*$. We report means and standard deviations across 10~experiments.
		Innate opinions were generated using the uniform distribution.
	}
	\label{fig:error-disagreement-z}
\end{figure*}

First, we present our results when we have query access to the entries $s$
and when we have to estimate the $z_u^*$ using Algorithm~\ref{alg:random-walks}.
As before, we use Algorithm~\ref{alg:random-walks} with \numprint{600}~steps and
\numprint{4000}~random walks.  We present our results in
Figure~\ref{fig:error-disagreement-s}, showing that the errors are generally
smaller than what we reported in
Table~\ref{tab:experiments-oracle-innate-uniform}, but
that the errors are still relatively high.

\begin{figure*}
	\centering
	\begin{subfigure}{0.24\textwidth}
		\includegraphics[width=\textwidth]{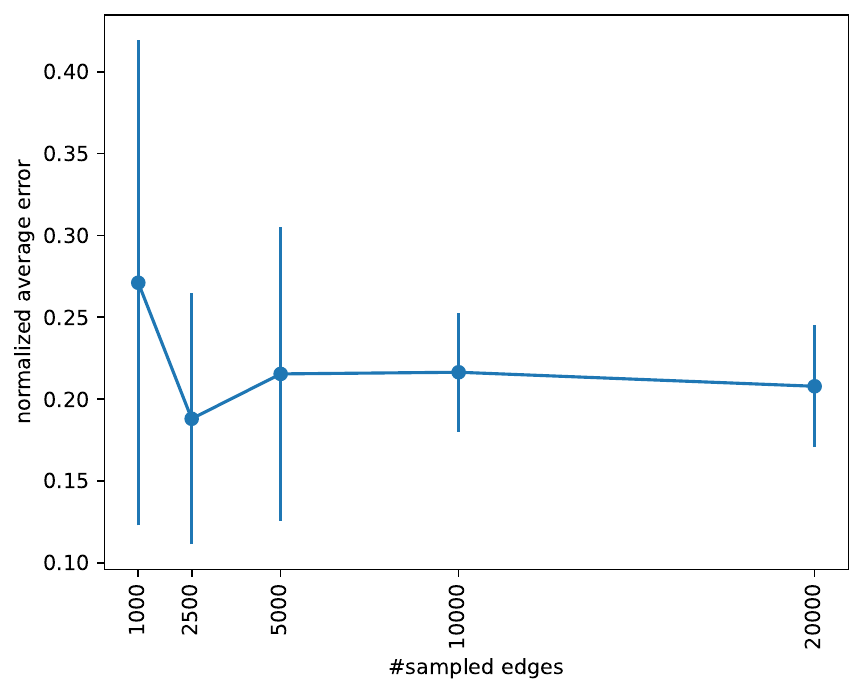}
		\caption{\Pokec}
		\label{fig:pokec-error-disagreement-s}
	\end{subfigure}
	\hfill
	\begin{subfigure}{0.24\textwidth}
		\includegraphics[width=\textwidth]{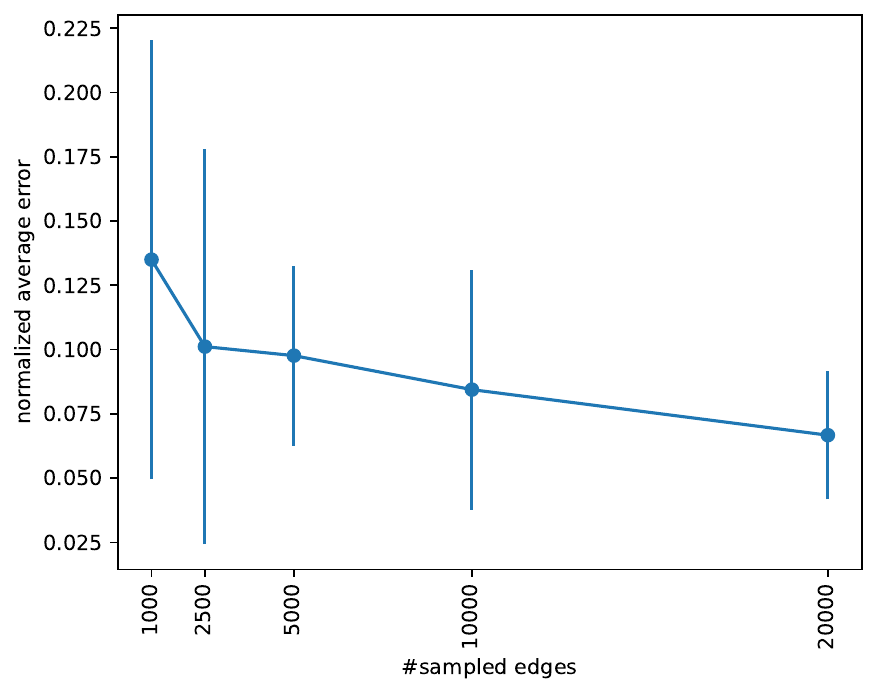}
		\caption{\Flickr}
		\label{fig:flickr-error-disagreement-s}
	\end{subfigure}
	\hfill
	\begin{subfigure}{0.24\textwidth}
		\includegraphics[width=\textwidth]{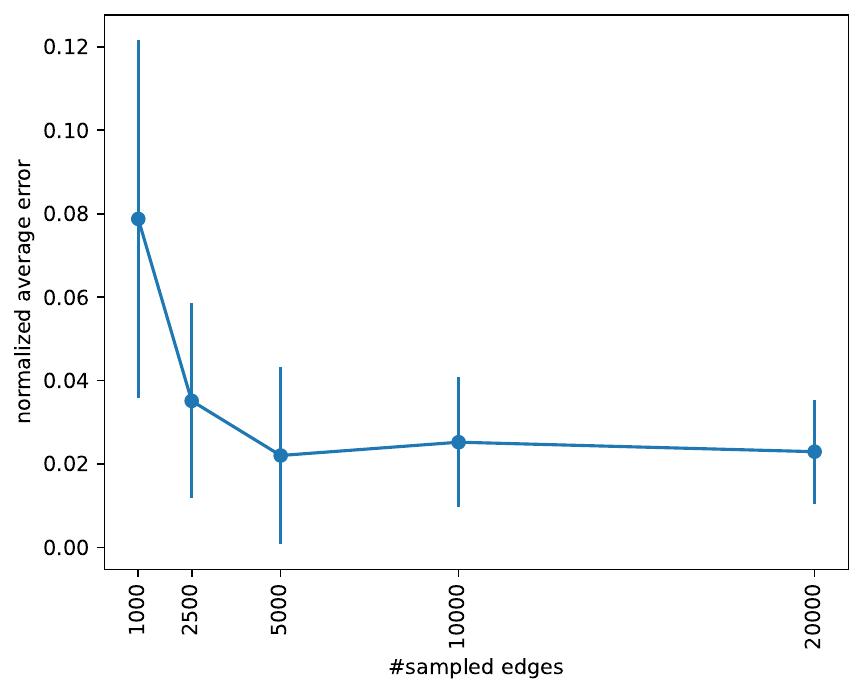}
		\caption{\YouTube}
		\label{fig:youtube-error-disagreement-s}
	\end{subfigure}
	\hfill
	\begin{subfigure}{0.24\textwidth}
		\includegraphics[width=\textwidth]{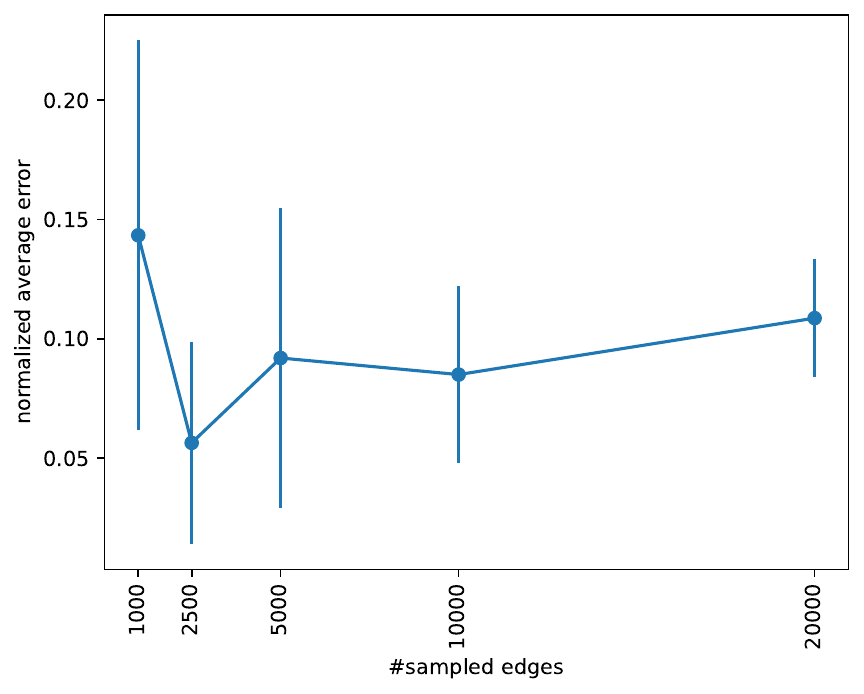}
		\caption{\LiveJournal}
		\label{fig:livejournal-error-disagreement-s}
	\end{subfigure}
	\caption{Error of the estimator from Lemma~\ref{lem:sum} for computing the
		disagreement using random sampling of edges, computing $z_u^*$ using
		Algorithm~\ref{alg:random-walks} with \numprint{600}~steps and
		\numprint{4000}~random walks. We report means and standard deviations
		across 10~experiments.
		Innate opinions were generated using the uniform distribution.
	}
	\label{fig:error-disagreement-s}
\end{figure*}

\end{document}